\def\fontsettingup{2} 
\newcommand{\jalaj}[1]{}
\newcommand{\liuexp}[1]{}
\newtheorem{theorem}{Theorem}
\newtheorem*{claim*}{Claim}
\newtheorem{lemma}[theorem]{Lemma}
\theoremstyle{definition}
\newtheorem{definition}[theorem]{Definition}
\newtheorem*{remark*}{Remark}
  \def\*#1{\mathbf{#1}} 
  \def\+#1{\mathcal{#1}} 
  \def\-#1{\mathrm{#1}} 
  \def\^#1{\mathbb{#1}} 
  \def\!#1{\mathfrak{#1}} 
  \def\*#1{\boldsymbol{#1}} 
  \def\+#1{\mathcal{#1}} 
  \def\-#1{\mathrm{#1}} 
  \def\^#1{\mathbb{#1}} 
  \def\!#1{\mathfrak{#1}} 
\def\oE{\mathbb{E}}
\newcommand{\E}[2][]{ \ifthenelse{\isempty{#1}}
  {\oE\left[#2\right]}
  {\oE_{#1}\left[#2\right]} }
\DeclareMathOperator*{\oVar}{\mathbf{Var}}
\newcommand{\Var}[2][]{ \ifthenelse{\isempty{#1}}
  {\oVar\left[#2\right]}
  {\oVar_{#1}\left[#2\right]} }
\def\oEnt{\mathbf{Ent}}
\newcommand{\Ent}[2][]{ \ifthenelse{\isempty{#1}}
  {\oEnt\left[#2\right]}
  {\oEnt_{#1}\left[#2\right]} }
\renewcommand{\epsilon}{\varepsilon}
\newcommand{\todo}[1]{}
\newcommand{\ju}[1]{}
\newcommand{\george}[1]{}
\newcommand{\zongrui}[1]{}
\newcommand{\sparen}[1]{\left[ #1 \right]}
\let\epsilon=\varepsilon
\newcommand{\citet}{\cite}
\newcommand{\citep}{\cite}
\renewcommand{\email}[2][]{%
  \ifx\emails\@empty\relax\else{\g@addto@macro\emails{,\space}}\fi%
  \@ifnotempty{#1}{\g@addto@macro\emails{\textrm{(#1)}\space}}%
  \g@addto@macro\emails{#2}%
}
\title{Optimality of Matrix Mechanism on $\ell_p^p$-metric}
\author{Jingcheng Liu}
\address{State Key Laboratory for Novel Software Technology and New Cornerstone Science Laboratory, Nanjing University}
\email{liu@nju.edu.cn}
\author{Jalaj Upadhyay}
\address{Rutgers University}
\email{jalaj.upadhyay@rutgers.edu}
\author{Zongrui Zou}
\address{State Key Laboratory for Novel Software Technology and New Cornerstone Science Laboratory, Nanjing University}
\email{zou.zongrui@smail.nju.edu.cn}
\date{}
\begin{document}
\pagenumbering{arabic}

\begin{abstract}
  In this paper, we introduce the $\ell_p^p$-error metric (for $p \geq 2$) when answering linear queries under the constraint of differential privacy. We characterize such an error under $(\epsilon,\delta)$-differential privacy. Before this paper, tight characterization in the hardness of privately answering linear queries was known under $\ell_2^2$-error metric (Edmonds et al., STOC 2020) and $\ell_p^2$-error metric for unbiased mechanisms (Nikolov and Tang, ITCS 2024). As a direct consequence of our results, we give tight bounds on answering prefix sum and parity queries under differential privacy for all constant $p$ in terms of the $\ell_p^p$ error, generalizing the bounds in Henzinger et al. (SODA 2023) for $p=2$. 
\end{abstract}

\maketitle

\section{Introduction}
Analysis or learning with sensitive datasets under privacy has garnered increasing attention in recent years. In this paper, we study the most fundamental question of answering linear queries on confidential dataset $x\in \mathbb{R}^n$ while preserving \textit{differential privacy} (DP)~\citep{dwork2006calibrating}. Informally speaking, differential privacy captures the property of a randomized algorithm that its output distribution is relatively stable when executed on two {\em neighboring datasets}, i.e., datasets that can be formed by changing one data point. More formally, 
\begin{definition}\label{def.dp}
  Let $\mathcal{M}:\mathcal X \rightarrow \mathcal{R}$ be a randomized algorithm, where $\mathcal{R}$ is the output domain. For fixed $\epsilon >0$ and $\delta\in [0,1)$, we say that $\mathcal{M}$ preserves $(\epsilon,\delta)$-differential privacy if, for any measurable set $S\subseteq \mathcal{R}$ and any pair of neighboring datasets $x,y\in \mathcal X$, 
  $\mathbf{Pr}[\mathcal{M}(x)\in S] \leq \mathbf{Pr}[\mathcal{M}(y)\in S]\cdot e^{\epsilon} + \delta.$
  If $\delta = 0$, we say $\mathcal{A}$ preserves pure differential privacy (denoted by $\epsilon$-DP).
\end{definition}

Many fundamental analyses can be cast as a set of linear queries~\citep{dwork2014algorithmic, vadhan2017complexity}: given an input $x \in \mathbb R^n$, a set of $m$ linear queries can be represented as the rows of a matrix $A \in \mathbb R^{m \times n}$. The answer to the set of queries is simply the matrix-vector product $Ax$. Here, $x,x'\in \mathbb{R}^n$ are neighboring if $\|x-x'\|_1\leq 1$. When these queries are answered using a privacy-preserving algorithm, $\mathcal M$, the performance of the algorithm is usually measured in terms of its {\em absolute error} or {\em mean squared error} (\cref{eq:error_metric}). 

In this paper, we initiate the study of $\ell_p^p$-error metric that seamlessly interpolate\footnote{Interpolation plays a key role in functional analysis~\citep{riesz1927maxima, thorin1948convesity},  and it is one of the primary reasons for the initial support for Riesz's\citet{riesz1910untersuchungen,riesz1913systemes} study of $\ell_p$-norm despite Minkowski's skepticism~\citep{minkowski1913}.} between $p=2$ (squared error) to $p=\infty$ (absolute error): 
\begin{align}
\mathsf{err}_{\ell_p^p} (\mathcal M,A) := \max_{x\in \mathbb{R}^n} \left(\mathbb{E}\left[\|\mathcal M(x) - Ax\|_p^p\right] \right)^{1/p}. \label{eq:universal_metric}    
\end{align}

The error metric defined above has a natural and intuitive interpretation for data analysis. To elaborate on this, consider the most natural mechanism that adds i.i.d.\ noise to each answer of a set of linear queries, and let $v_i$ be the error in answering the $i$-th query. 
Then our error metric captures the $p$-th moment of the error, which is a random vector $v \in \mathbb R^m$. By considering all $p$, one can identify the exact nature of the probability distribution of the error.

One popular mechanism for privately answering linear queries under different error metrics is the \textit{matrix mechanism}~\cite{li2015matrix}, also known as the {\em factorization mechanism}. In the matrix mechanism, given a set of $m$ linear queries represented by a \emph{workload matrix} $A \in \mathbb R^{m \times n}$, we compute a factorization $LR=A$ (where $L\in \mathbb{R}^{m\times k}$, $R\in \mathbb{R}^{k\times n}$) and output $L(Rx+z)$ for any input $x \in \mathbb R^n$ with an appropriately scaled Gaussian random vector $z\in \mathbb{R}^{k}$. This mechanism is both \textit{unbiased} (i.e., $\mathbb{E}[z] = \mathbf{0}$) and \textit{oblivious}, i.e., the distribution of $z$ is stochastically independent of $x$. In this paper, we show that the optimal matrix mechanism is also optimal among all differentially private mechanisms with respect to the $\ell_p^p$ metric, up to logarithmic factors:

\begin{theorem}[Informal statement of \Cref{thm:lower_bound_DP_intro} and \Cref{thm:upper_bound}]\label{thm:main_intro}
Fix $A \in \mathbb R^{m \times n}$ be a matrix representing $m$ linear queries, and let $\mathcal{M}:\mathbb{R}^n\rightarrow \mathbb{R}^m$ be any $(\epsilon,\delta)$-DP algorithm. Then, there exists a factorization of $A = LR$ such that $\mathcal{M}_{\mathsf{matrix}}(x) = L(Rx + z)$ with $z\sim \mathcal{N}(\mathbf{0},\|R\|_{1\to 2}^2 \mathbb I_k)$ preserves $(\epsilon,\delta)$-DP and that $\mathsf{err}_{\ell_p^p}(\mathcal M_{\mathsf{matrix}}, A) \lesssim \mathsf{err}_{\ell_p^p}(\mathcal M, A)\cdot \mathsf{polylog}(1/\delta,m)$. Here, $\mathbb I_k\in \mathbb R^{k\times k}$ is the identity matrix.
\end{theorem}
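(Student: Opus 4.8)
The plan is to prove the two halves of \Cref{thm:main_intro} separately: a lower bound saying every $(\epsilon,\delta)$-DP mechanism incurs $\ell_p^p$-error at least some quantity $\mathrm{OPT}(A)$ (a suitable convex-geometric quantity associated to $A$, presumably a factorization norm), and an upper bound constructing a matrix mechanism whose error is $\mathrm{OPT}(A)\cdot\mathsf{polylog}$. I would structure the argument around a single "hard instance" parameter. For the lower bound, the natural route is a packing/fingerprinting argument: choose a distribution over inputs $x$ supported on a well-chosen finite set (or a Gaussian-type prior) calibrated so that any $(\epsilon,\delta)$-DP mechanism cannot reconstruct $Ax$ too accurately on average; this is exactly the discrete-vs-continuous trade-off that the Edmonds--Kasiviswanathan--Nikolov--Ullman machinery handles for $\ell_2^2$. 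The twist here is that we measure error in $\ell_p^p$ rather than $\ell_2^2$, so I would replace the $\ell_2$ trace/determinant quantities in that argument by the corresponding $\ell_p$ analogues — i.e., bound $\mathbb{E}\|\mathcal M(x)-Ax\|_p^p$ coordinatewise and then aggregate, using Hölder to pass between the $m$ coordinates and the single global constraint that DP imposes. Concretely I expect a bound of the form $\mathsf{err}_{\ell_p^p}(\mathcal M,A)^p \gtrsim \sup_{\text{suitable } B} (\text{something like } \|B\|_{?}^p / \text{vol terms})$, which one then recognizes as (a power of) a factorization-type norm of $A$.

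For the upper bound I would take the factorization $A=LR$ that nearly optimizes the relevant norm appearing in the lower bound, and analyze $\mathcal M_{\mathsf{matrix}}(x)=L(Rx+z)$ with $z\sim\mathcal N(0,\|R\|_{1\to2}^2\mathbb I_k)$. Privacy is immediate: the sensitivity of $Rx$ under $\|\cdot\|_1$-neighboring inputs is $\|R\|_{1\to2}$, so this is the standard Gaussian mechanism and $(\epsilon,\delta)$-DP holds with the usual $\sqrt{\log(1/\delta)}/\epsilon$ scaling absorbed into the noise magnitude. For utility, the error vector is $Lz$, a centered Gaussian in $\mathbb{R}^m$; I would compute $\mathbb{E}\|Lz\|_p^p=\sum_{i=1}^m \mathbb{E}|(Lz)_i|^p = c_p \sum_i (\|R\|_{1\to2}\,\|L_i\|_2)^p$ where $L_i$ is the $i$-th row of $L$ and $c_p$ is the $p$-th absolute moment of a standard Gaussian (so $c_p^{1/p}=\Theta(\sqrt p)$). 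Thus $\mathsf{err}_{\ell_p^p}(\mathcal M_{\mathsf{matrix}},A)= \Theta(\sqrt p)\cdot \|R\|_{1\to2}\cdot\big(\sum_i \|L_i\|_2^p\big)^{1/p}$, and minimizing this product over all factorizations $A=LR$ is precisely the "$\ell_p^p$ factorization norm" that the lower bound should produce. Matching the two then reduces to checking that the quantity optimized by the lower-bound packing argument and the quantity $\min_{LR=A}\|R\|_{1\to2}\big(\sum_i\|L_i\|_2^p\big)^{1/p}$ agree up to $\mathsf{polylog}(1/\delta,m)$ — this equivalence is where the $\mathsf{polylog}$ loss is incurred, via a discretization/rounding step in the lower bound and the $\sqrt{\log(1/\delta)}$ Gaussian-mechanism factor in the upper bound.

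I expect the main obstacle to be the lower bound, specifically identifying and lower-bounding the right convex-geometric functional of $A$ when $p>2$. For $p=2$ the relevant object is the trace of $A^\top A$ times a second-moment term and the argument is essentially linear-algebraic; for general $p$ one loses the clean Hilbert-space structure, so I anticipate needing a more careful volumetric argument — e.g., working with the ellipsoid (or zonotope) $\{Rx : \|x\|_1\le 1\}$ and relating its "$\ell_p$-size" to the determinant/volume quantities that control DP reconstruction error, possibly via a convex-duality step that dualizes $\sum_i\|L_i\|_2^p$ into an expression amenable to the packing bound. A secondary technical point is ensuring the Hölder/aggregation step across the $m$ output coordinates is tight in both directions — i.e., that the single global DP constraint genuinely costs a factor polynomial in $\log m$ rather than something larger — which I would handle by a union bound over a net of "directions" in the output space together with the composition/advanced-composition behavior of the packing instance. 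Once the norm is pinned down, matching the upper bound is a routine Gaussian-moment computation as sketched above.
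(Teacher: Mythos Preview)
Your upper bound is correct and matches the paper exactly: privacy via the $\ell_2$ sensitivity of $Rx$, the coordinatewise Gaussian-moment computation yielding $\|R\|_{1\to 2}\cdot\big(\sum_i\|L_i\|_2^p\big)^{1/p}=\|R\|_{1\to 2}\cdot\sqrt{\mathsf{tr}_{p/2}(LL^\top)}$ up to a $\sqrt{\min\{p,\log m\}}$ moment factor, and optimizing over factorizations to land on $\gamma_{(p)}(A)$.

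The lower bound is where your proposal has a genuine gap. The paper does \emph{not} use a packing or fingerprinting argument, and the Edmonds et al.\ machinery you invoke is not of that type either. Instead the argument has two stages. First, an arbitrary $(\epsilon,\delta)$-DP mechanism is reduced to an \emph{oblivious additive-noise} mechanism $\mathcal M(x)=Ax+z$ with $z$ independent of $x$, via the black-box reduction of Bhaskara et al.\ (\Cref{thm:from_oblivious_to_dependent}); this costs only constants in the privacy parameters and preserves worst-case $\ell_p^p$ error. Second, for such oblivious mechanisms one lower-bounds the variance of every one-dimensional marginal $\theta^\top\mathcal M(x)$: approximate DP forces the output distributions on neighbors to be close in $\chi^2$-divergence (after restricting to the good set $S_{P,Q,2\epsilon}$), and the Hammersley--Chapman--Robbins inequality then gives $\mathsf{Var}[\theta^\top\mathcal M(x)]\gtrsim (w_{AB_1^n}(\theta)/\epsilon)^2$ for every direction $\theta$. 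A geometric lemma (\Cref{lem:var_cover}, from Nikolov--Tang) converts this uniform directional bound into the covering $AB_1^n\subseteq_{\leftrightarrow} C\epsilon\sqrt{\Sigma_{\mathcal M}(x)}\,B_2^m$, and together with the elementary inequality $\sqrt{\mathsf{tr}_{p/2}(\Sigma_{\mathcal M}(x))}\le(\mathbb E\|z\|_p^p)^{1/p}$ (\Cref{lem:reduce_to_trace_norm}) and $\Lambda_p(A)\ge\gamma_{(p)}(A)$ this yields the bound. A separate case analysis handles the possibility that $z$ has large bias, which your sketch does not address at all.

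The structural point you are missing is that the reduction to oblivious noise lets one work directly with the covariance matrix of $z$, and the aggregation over output coordinates happens geometrically via the covering lemma rather than via a union bound over a net of directions. Consequently there is no discretization or volumetric step, and no $\log m$ loss on the lower-bound side; all the polylog factors in \Cref{thm:main_intro} come from the upper bound (the $\sqrt{\log(1/\delta)}$ in the Gaussian mechanism and the $\sqrt{\min\{p,\log m\}}$ Gaussian moment constant). Your proposed route---a hard prior over inputs, coordinatewise bounds, and H\"older to aggregate---would have to rediscover $\gamma_{(p)}(A)$ as the answer, and it is not clear how a packing argument would produce that quantity for an arbitrary $A$.
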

To prove this, we characterize the $\ell_p^p$-error for answering linear queries under $(\epsilon,\delta)$-differential privacy generalizing \citet{edmonds2020power}, and also obtain a characterization of $\mathsf{err}_{\ell_p^p}(\mathcal{M}_{\mathsf{matrix}}, A) $ that is tight up to log factors, for every query matrix $A$ and $p\ge 2$. For the convenience of use, we start by stating a weaker form of our lower bound. We will see that this is an immediate corollary of our main theorem.
\begin{theorem}
\label{thm:main}
    Let $A \in \mathbb R^{m \times n}$ be a matrix representing $m$ linear queries. Then for any $(\epsilon,\delta)$-DP algorithm $\mathcal M$,  $\mathsf{err}_{\ell_p^p}(\mathcal M, A) = \Omega_{\epsilon,\delta}(m^{1/p-1/2}\|A\|_1/\sqrt{n})$. Here, $\|A\|_1$ is the Schatten-$1$ norm of $A$ and $\Omega_{\epsilon,\delta}(\cdot)$ hides the dependency on the privacy parameters.
\end{theorem}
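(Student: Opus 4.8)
The plan is to obtain \Cref{thm:main} as a corollary of our general $\ell_p^p$ lower bound (\Cref{thm:lower_bound_DP_intro}) in two elementary steps: a norm comparison that reduces the statement to the case $p=2$, and a matrix-norm inequality that produces the $1/\sqrt n$ factor and pins down the Schatten-$1$ norm.

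\textbf{Step 1 (reduction to $p=2$).} For every $v\in\mathbb R^m$ and every $p\ge 2$, Hölder's inequality applied to $\sum_i v_i^2\cdot 1$ with conjugate exponents $p/2$ and $p/(p-2)$ gives $\|v\|_2\le m^{1/2-1/p}\|v\|_p$, i.e.\ $\|v\|_p^p\ge m^{\,1-p/2}\|v\|_2^p$. Applying this with $v=\mathcal M(x)-Ax$, taking expectations, and using Jensen's inequality for the convex map $t\mapsto t^{p/2}$ (valid as $p\ge 2$) yields
\[
\mathbb E\big[\|\mathcal M(x)-Ax\|_p^p\big]\ \ge\ m^{\,1-p/2}\big(\mathbb E\big[\|\mathcal M(x)-Ax\|_2^2\big]\big)^{p/2}.
\]
Taking $p$-th roots and then the maximum over $x$ gives $\mathsf{err}_{\ell_p^p}(\mathcal M,A)\ \ge\ m^{\,1/p-1/2}\,\mathsf{err}_{\ell_2^2}(\mathcal M,A)$, where $\mathsf{err}_{\ell_2^2}(\mathcal M,A):=\max_x\big(\mathbb E\|\mathcal M(x)-Ax\|_2^2\big)^{1/2}$ is the root-mean-squared ($\ell_2$) error. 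Hence it suffices to prove the case $p=2$, namely $\mathsf{err}_{\ell_2^2}(\mathcal M,A)=\Omega_{\epsilon,\delta}(\|A\|_1/\sqrt n)$.

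\textbf{Step 2 (the case $p=2$ and the Schatten-$1$ norm).} Specializing \Cref{thm:lower_bound_DP_intro} to $p=2$ (this recovers the characterization of \citet{edmonds2020power}) shows that any $(\epsilon,\delta)$-DP mechanism has $\ell_2$-error $\Omega_{\epsilon,\delta}$ times the optimal matrix-mechanism cost $\gamma_F(A):=\min_{LR=A}\|L\|_F\,\|R\|_{1\to 2}$. It then remains to verify the purely linear-algebraic bound $\gamma_F(A)\ge\|A\|_1/\sqrt n$. For any factorization $A=LR$ with $R\in\mathbb R^{k\times n}$, the variational description of the Schatten-$1$ norm together with $\|L^{\top}Y\|_F\le\|L\|_F$ for all $Y$ with $\|Y\|_{\mathrm{op}}\le 1$ gives
\[
\|A\|_1=\sup_{\|Y\|_{\mathrm{op}}\le 1}\langle LR,\,Y\rangle=\sup_{\|Y\|_{\mathrm{op}}\le 1}\langle R,\,L^{\top}Y\rangle\ \le\ \|R\|_F\,\|L\|_F ,
\]
and since each of the $n$ columns of $R$ has $\ell_2$-norm at most $\|R\|_{1\to 2}$ we have $\|R\|_F\le\sqrt n\,\|R\|_{1\to 2}$. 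Combining, $\|A\|_1\le\sqrt n\,\|L\|_F\,\|R\|_{1\to 2}$; minimizing over factorizations gives $\gamma_F(A)\ge\|A\|_1/\sqrt n$, completing the argument. (One can also argue directly from the $\ell_p^p$ characterization without passing through $p=2$: the matrix mechanism of \Cref{thm:main_intro} has $\ell_p^p$-error at least $\|R\|_{1\to 2}\big(\sum_i\|L_i\|_2^p\big)^{1/p}\ge m^{\,1/p-1/2}\|L\|_F\|R\|_{1\to 2}\ge m^{\,1/p-1/2}\|A\|_1/\sqrt n$, using $\mathbb E|\mathcal N(0,1)|^p\ge 1$, the power-mean inequality applied to the vector of row norms $\|L_i\|_2$, and again $\gamma_F(A)\ge\|A\|_1/\sqrt n$.)

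\textbf{Main obstacle.} Step 1 and the linear algebra of Step 2 are routine; all the difficulty sits in the $p=2$ lower bound itself — that \emph{every} $(\epsilon,\delta)$-DP mechanism, not merely an oblivious or unbiased one, must pay $\Omega_{\epsilon,\delta}(\gamma_F(A))$ — which is exactly where the differential-privacy machinery is needed and which \Cref{thm:lower_bound_DP_intro} supplies (and generalizes to all $p\ge 2$). A naive single-scale packing of the image $A B_1^n$ through the SVD only yields $\Omega_{\epsilon,\delta}\big(\tfrac1{\sqrt n}\max_k\sqrt k\,\sigma_k(A)\big)$, which can be far below $\|A\|_1/\sqrt n$ when the singular values of $A$ are skewed; recovering the full $\|A\|_1/\sqrt n$ requires the stronger volume/SDP-type lower bound underlying \Cref{thm:lower_bound_DP_intro}.
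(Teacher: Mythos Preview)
Your proof is correct. Both your argument and the paper's pass through the same two ingredients: the general lower bound of \Cref{thm:lower_bound_DP_intro}, and the Schatten-$1$ inequality $\gamma_{(2)}(A)=\gamma_F(A)\ge\|A\|_1/\sqrt n$. The difference is where the factor $m^{1/p-1/2}$ is extracted. The paper keeps the argument at the level of the factorization norm: it invokes the dual characterization $\gamma_{(p)}(A)=\max\{\gamma_{(2)}(DA):D\succeq 0 \text{ diagonal},\ \mathsf{tr}_q(D^2)=1\}$ (with $q=\tfrac{p}{p-2}$), plugs in $D=m^{1/p-1/2}I$, and concludes $\gamma_{(p)}(A)\ge m^{1/p-1/2}\gamma_{(2)}(A)$; \Cref{thm:lower_bound_DP_intro} is then applied at the given $p$. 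You instead extract the factor at the level of the error metric via the pointwise inequality $\|v\|_p\ge m^{1/p-1/2}\|v\|_2$ together with Jensen, and only need \Cref{thm:lower_bound_DP_intro} at $p=2$ (i.e., the Edmonds et al.\ bound). Your route is slightly more elementary in that it avoids the dual description of $\gamma_{(p)}$; the paper's route is cleaner in that it records the stronger structural fact $\gamma_{(p)}(A)\ge m^{1/p-1/2}\gamma_{(2)}(A)$, which is of independent use. Your self-contained verification of $\gamma_F(A)\ge\|A\|_1/\sqrt n$ via trace duality is equivalent to what the paper cites as Lemma~\ref{lem:schatten_norm}.
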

    
To demonstrate the power of the above results, we obtain {tight} bounds for privately answering {\em prefix sum} and {\em parity queries}.  These are two important classes of queries: for example, prefix sum is used as a subroutine in private learning~\citep{kairouz2021practical} and parity queries are often used for hardness results~\citep{kasiviswanathan2011can}.

\begin{enumerate}
    \item (Prefix sum) In this problem, the data curator outputs $\sum_{i\leq t}x_i$ of a vector $x = (x_1, x_2, \cdots, x_n)$ in a differentially private manner for all $t\leq n$. This is equivalent to asking linear queries with $A_{\mathsf{prefix}}\in \{0,1\}^{n\times n}$ where $A_{\mathsf{prefix}}$ is a lower-triangular matrix with non-zero entry equal to one.  In Theorem \ref{thm:continual_counting}, we show that, for all constant $p$, the $\ell_p^p$-error of prefix sum under $(\epsilon,\delta)$-differential privacy is $\Theta_{\epsilon,\delta}(n^{1/p} \log(n))$ and can be achieved by the same mechanism for all $p = O(1)$; 
    for $p=\omega(1)$, the gap between the upper and lower bound is of factor $\sqrt{\log(n)}$. This generalizes the result of \cite{dwork2010differential} and \citet{henzinger2023almost}. 
    
    \item (Parity Queries).
    Let $\mathcal Q^P_{d,w} = \{ q_P(x) = \prod_{i \in P} x_i : P \subset [d], |P| = w \}$ be the class of  parity queries over the input domain $\{-1,1\}^d$. In Theorem \ref{thm:parity}, we show that for any $(\epsilon,\delta)$-differentially private mechanism $\mathcal M$ that takes as input $d$ and $w$,  $$\mathsf{err}_{\ell_p^p}\left(\mathcal M, \mathcal Q^P_{d,w}\right) =\Omega_{\epsilon,\delta}\left( m^{1/2+1/p} \right)$$ for $m=\binom{d}{w}$. {Since $O_{\epsilon,\delta}\left( m^{1/2+1/p} \min\{ p,\log (m) \} \right)$ is the $\ell_p^p$ error of the trivial Gaussian mechanism, this is optimal whenever $\min\{ p,\log (m) \}=O(1)$.}
\end{enumerate}

Our study is motivated and inspired by recent elegant work by Nikolov and Tang \citet{nikolov2023general}, who proved the {\em instance optimality} of the {matrix mechanism} instantiated using {\em correlated Gaussian noise} for {\em unbiased mean estimation}. They considered the following error metric\footnote{The authors in \citet{nikolov2023general} confirmed to us that they did not consider the metric considered in this paper.}:
\begin{align}
\mathsf{err}_{NT}(\mathcal M,A) := \max_{x\in \mathbb{R}^n} \left(\mathbb{E}\left[\|\mathcal M(x) - Ax\|_p^2\right] \right)^{1/2} .
\label{eq:nikolov_tang}    
\end{align}

 Nikolov and Tang~\citet{nikolov2023general} showed that matrix mechanism is instance-optimal for any {\em unbiased} mechanism under the metric defined in \cref{eq:nikolov_tang}\footnote{We note that a Gaussian distribution is entirely characterized by its first two moments and, at a high level, \cref{eq:nikolov_tang} captures the variance of the $\ell_p$ norm of the zero mean vector representing the additive error.}. Our work instead focuses on obliviousness of the matrix mechanism and a more natural $\ell_p^p$ metric, i.e., it differs both in the error metric and the results:

\begin{enumerate}
    \item To understand the difference between these two error metrics, consider the error vector $v \in \mathbb R^m$. The metric used in Nikolov and Tang \citet{nikolov2023general} amounts to estimating $\mathbb E[(|v_1|^p + \cdots +|v_m|^p)^{2/p}]$ instead of a more natural $\mathbb E[v_1^p + \cdots +v_m^p]$ in \cref{eq:universal_metric}. In other words, it does not explain the behavior of the error even in the case of the naive additive noise mechanisms. This is one of the primary reasons we believe our error metric is more natural.
    \item They focused on instance optimality of {\em unbiased mean estimation}. While this is a well-studied problem, it does not cover the question of the $\ell_p$-optimality of general linear queries under the error metric defined by \cref{eq:universal_metric} for general mechanisms. We answer this question broadly and prove equivalent results  for a more natural error metric. 
\end{enumerate}
 
From a pure analysis perspective (and as is often the case in mathematics) as well, one prefers a metric respecting the symmetry as shown in our choice of metric, the $\ell_p$-norm, and $F_p$ moments studied in the streaming literature. While both of the error metrics (\cref{eq:universal_metric} and \cref{eq:nikolov_tang}) converge to the same metric as $p\to \infty$ and when $p\to 2$, the mathematical object the sequence captures as a function of $p$ is vastly different. That is, our results complement that of Nikolov and Tang  \citet{nikolov2023general}.  

\subsection{Our Contributions}
Our main result is a lower bound on general $(\epsilon,\delta)$-differentially private mechanisms for answering linear queries in high privacy regimes in terms of certain  {\em factorization norms} \citet{nikolov2023general} defined below\footnote{Let $\|B\|_{p\rightarrow q} = \min_{\|x\|_p = 1} \|Bx\|_q$. Then two commonly studied factorization norms in privacy and functional analysis denoted by $\gamma_2(A)$ and $\gamma_F(A)$  are defined as
$\gamma_2(A) = \min_{LR = A}\{\|L\|_{2\rightarrow \infty} \|R\|_{1\rightarrow 2}\} $ and $ \gamma_{F}(A) = \min_{LR = A} \{\|L\|_F \|R\|_{1\rightarrow 2}\}.$
Both these norms are special cases of $\gamma_{(p)}(\cdot)$ because when $p=2$, $\mathsf{tr}_{p/2}(LL^\top) = \| L\|_F^2$ and when $p \to \infty$, then  $\mathsf{tr}_{\infty}(LL^\top) = \max_{i \in [d]} (LL^\top)_{ii}=\| L \|_{2 \to \infty}^2$.} :
$$\gamma_{(p)}(A) : = \min_{LR = A}\left\{\sqrt{\mathsf{tr}_{p/2}(LL^\top)} \|R\|_{1\rightarrow 2}\right\}, \quad \text{where} \quad \mathsf{tr}_p(U) := \begin{cases}
    \left(\sum_{i=1}^d U_{ii}^p\right)^{1/p} & p < \infty \\
    \max_{i\in [d]} |U_{ii}| & p = \infty
\end{cases}$$
is the {\em $p$-trace}. Equipped with this definition, we state our lower bound:

\begin{theorem}
  [Lower bound for $(\epsilon,\delta)$-DP]
  \label{thm:lower_bound_DP_intro}
  Fix any $n,m\in \mathbb{N}$, $\epsilon\in (0,\frac{1}{2})$, $0\leq \delta\leq 1$ and $p\in[2,\infty)$. For any query matrix $A\in \mathbb{R}^{m\times n}$, if a mechanism $\mathcal M:\mathbb{R}^n\rightarrow \mathbb{R}^m$ preserves $(\epsilon,\delta)$-differential privacy, then there exists a universal constant $C'$,
  $$\mathsf{err}_{\ell_p^p}(\mathcal M,A) \geq \frac{(1-\tilde{\delta})\gamma_{(p)}(A)}{C'\epsilon}, \quad \text{where} \quad \tilde{\delta} = {\frac{2e^{2\epsilon}(e^{1/2}-1)}{e^{\epsilon}-1}\delta}.$$
\end{theorem}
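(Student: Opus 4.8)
The plan is to extend the $\ell_2^2$ factorization lower bound of Edmonds, Nikolov and Ullman to all $p\ge 2$, by combining a convex-duality description of $\gamma_{(p)}$ with a fingerprinting / ``Assouad-with-privacy'' argument. First I would turn $\gamma_{(p)}(A)$ into a convex program over positive semidefinite matrices. Normalizing $\|R\|_{1\to 2}=1$ (so every column of $R$ has Euclidean norm at most $1$) and writing $X=LL^\top\in\mathbb R^{m\times m}$, a factorization $A=LR$ with this normalization exists if and only if $\left(\begin{smallmatrix}X & A_j\\ A_j^\top & 1\end{smallmatrix}\right)\succeq 0$ for every column $A_j$ of $A$, whence
\[
\gamma_{(p)}(A)^2\;=\;\min\left\{\,\mathsf{tr}_{p/2}(X)\ :\ X\succeq 0,\ \left(\begin{smallmatrix}X & A_j\\ A_j^\top & 1\end{smallmatrix}\right)\succeq 0\ \text{ for all }j\,\right\}.
\]
Since $\mathsf{tr}_{p/2}$ sees only the diagonal of $X$ and $p/2\ge 1$, its polar is a nonnegative reweighting of the $m$ rows lying in the unit ball of $\|\cdot\|_{p/(p-2)}$ ($p=2$ recovers the trace / $\ell_\infty$ pair of Edmonds et al., $p=\infty$ the $\ell_\infty$ / $\ell_1$ pair underlying $\gamma_2$). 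Dualizing the program then produces a certificate --- such a weight vector $w$ together with perturbation directions $v^{(1)},v^{(2)},\dots\in\mathbb R^n$ whose $\ell_1$-sizes are suitably controlled --- and this certificate is exactly what prescribes the hard input distribution and the relative importance of the $m$ output coordinates.

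Next I would form the hard instances $x_\sigma=\tfrac{c}{\epsilon}\sum_l\sigma_l v^{(l)}$ for a uniformly random sign vector $\sigma$ and a small constant $c$, so that $x_\sigma$ and the flipped input $x_{\sigma^{\oplus l}}$ differ by $\ell_1$-distance $\asymp \|v^{(l)}\|_1/\epsilon$, and run the argument one coordinate of $\sigma$ at a time. Group privacy (a $k$-step composition of $(\epsilon,\delta)$-DP is $(k\epsilon,\tfrac{e^{k\epsilon}-1}{e^\epsilon-1}\delta)$-DP), applied with $k\asymp\|v^{(l)}\|_1/\epsilon$, shows that $\mathcal M$ on $x_\sigma$ versus $x_{\sigma^{\oplus l}}$ is indistinguishable with first parameter $O(\|v^{(l)}\|_1)$ and second parameter $O\big(\tfrac{\|v^{(l)}\|_1}{e^\epsilon-1}\delta\big)$; this is where both the $1/\epsilon$ in the main term and the $(e^\epsilon-1)^{-1}$ in $\tilde\delta$ appear, and the fact that the $\ell_1$-sizes of the directions sum to $O(1)$ is exactly what keeps the accumulated $\delta$-loss at $O(\tilde\delta)$ rather than growing with the number of directions. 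The fingerprinting estimate has two halves: (i) \emph{accuracy forces correlation} --- if $\mathsf{err}_{\ell_p^p}(\mathcal M,A)$ were small then the $w$-reweighted output would have to track the $w$-reweighted truth, which forces the aggregate correlation between $\mathcal M(x_\sigma)$ and the pattern $\sigma$ to be $\gtrsim\gamma_{(p)}(A)^2$, with Hölder's inequality for the conjugate pair $(p/2,\,p/(p-2))$ converting the $\ell_p^p$ error into the $\mathsf{tr}_{p/2}$-type quantity; (ii) \emph{privacy forbids correlation} --- the per-coordinate indistinguishability bounds direction $l$'s contribution to the correlation by $O(\|v^{(l)}\|_1)$ times its contribution to the error, plus $O\big(\tfrac{\|v^{(l)}\|_1}{e^\epsilon-1}\delta\big)$ times a range term (finite after post-processing the output into a bounded box). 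Summing (ii) over $l$ and comparing with (i), then bookkeeping the constants $e^{\pm\epsilon}$ and $e^{1/2}$, yields $\mathsf{err}_{\ell_p^p}(\mathcal M,A)\ge(1-\tilde\delta)\gamma_{(p)}(A)/(C'\epsilon)$.

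The \textbf{main obstacle} is the interface between these two halves: making the duality of $\mathsf{tr}_{p/2}$ --- the $\ell_{p/(p-2)}$ reweighting of rows and the PSD structure selecting the directions $v^{(l)}$ --- mesh with the $\ell_p^p$-moments of the mechanism's error, so that the aggregation in (i) recovers $\gamma_{(p)}(A)$ and not merely a per-column quantity such as $\|A\|_{1\to p}$ (which is all the two-point bound gives, and which can be smaller than $\gamma_{(p)}(A)$ by a $\mathrm{poly}(m)$ factor --- e.g.\ $\|I\|_{1\to p}=1$ versus $\gamma_{(p)}(I)=m^{1/p}$). For $p=2$ this is the Edmonds et al. computation (plain trace and Cauchy--Schwarz); for general $p$ one must additionally prove an anti-concentration statement --- that the residual of the reweighted answer along each direction $v^{(l)}$ has $p$-th moment at least the $(p/2)$-th power of the variance privacy forces on it --- which requires the hard distribution to be genuinely spread out (Gaussian-like) along those directions. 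Since the domain is all of $\mathbb R^n$ with only $\|x-x'\|_1\le 1$ as the neighboring relation, that distribution must be a truncation of such a Gaussian, and controlling the truncation error together with the clipping that keeps the range term in (ii) finite is what ultimately restricts the bound to the regime $\tilde\delta<1$.
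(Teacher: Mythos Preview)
Your proposal takes a genuinely different route from the paper. You aim for a direct fingerprinting/Assouad argument against an arbitrary mechanism, using a dual certificate for $\gamma_{(p)}$ to build the hard distribution. The paper instead proceeds in three short moves: \textbf{(1)} reduce to \emph{oblivious additive-noise} mechanisms $\mathcal M'(x)=Ax+z$ via the black-box reduction of Bhaskar et al.\ (Theorem~\ref{thm:from_oblivious_to_dependent}), which costs only $(\epsilon,\delta)\mapsto(2\epsilon,e^\epsilon\delta)$ and never increases $\mathsf{err}_{\ell_p^p}$; \textbf{(2)} for any such $\mathcal M'$, a one-line Jensen/power-mean inequality (Lemma~\ref{lem:reduce_to_trace_norm}) gives $(\mathbb E\|z\|_p^p)^{1/p}\ge \sqrt{\mathsf{tr}_{p/2}(\Sigma_{\mathcal M'})}$; \textbf{(3)} a KRSU-style two-point bound on each one-dimensional marginal $\theta^\top\mathcal M'(x)$ (Lemma~\ref{lem:reprove_krsu}, which is where the factors $e^{1/2}$ and $(e^\epsilon-1)^{-1}$ in $\tilde\delta$ originate) yields $\mathsf{Var}[\theta^\top\mathcal M'(x)]\gtrsim((1-\delta')/\epsilon)^2\, w_{AB_1^n}(\theta)^2$ for \emph{every} direction $\theta$; by the Nikolov--Tang covering lemma (Lemma~\ref{lem:var_cover}) this gives $AB_1^n\subseteq_{\leftrightarrow} C\epsilon\sqrt{\Sigma_{\mathcal M'}}\,B_2^m$, whence $\mathsf{tr}_{p/2}(\Sigma_{\mathcal M'})\ge(\gamma_{(p)}(A)/(C\epsilon))^2$ via $\Lambda_p\ge\gamma_{(p)}$.

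The crucial difference is that what you call the ``main obstacle'' --- making the $\ell_{p/(p-2)}$-dual of $\mathsf{tr}_{p/2}$ mesh with the $p$-th moments of the error, and proving an anti-concentration statement to pass from the variance privacy forces to genuine $p$-th moments --- is \emph{entirely bypassed} by the paper's route. Step~(2) is the whole bridge from $\ell_p^p$ error to $\mathsf{tr}_{p/2}$, and it holds for any random vector $z$ with no distributional structure; step~(3) is then purely a second-moment/geometric statement, with $p$ re-entering only through the definitions of $\Lambda_p$ and $\gamma_{(p)}$. No hard distribution is constructed, no dual certificate is extracted, no clipping or truncation is needed, and there is no aggregation over coordinates of a hidden sign vector --- the reduction to oblivious noise absorbs all data-dependence, after which everything is deterministic geometry of the single covariance matrix $\Sigma_{\mathcal M'}$. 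Your fingerprinting plan is not obviously wrong, but the anti-concentration step you flag is a real gap (not bookkeeping): fingerprinting naturally controls a bilinear correlation, i.e.\ a second-moment quantity, and lifting that to control of $\mathbb E\|{\cdot}\|_p^p$ for $p>2$ without further assumptions on the mechanism is precisely what the paper avoids by passing through additive noise first.
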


\Cref{thm:lower_bound_DP_intro} generalizes the result of Edmonds et al.~\cite{edmonds2020power} for $p=2$ to all $p\ge 2$. Our result can also be contrasted with the lower bound which uses discrepancy methods. It is known that the $\ell_\infty$-error of an $(\epsilon,\delta)$-differentially private algorithm for linear queries is lower bounded by the {\em hereditary discrepancy} of the corresponding matrix $A \in \mathbb R^{m \times n}$ \citep{muthukrishnan2012optimal}, which in turn is lower bounded by $\gamma_{(\infty)}/\sqrt{\log(m)}$ using its characterization in terms of a semidefinite program~\citep{matouvsek2020factorization}. Our result shows that we can get a $\sqrt{\log(m)}$ better lower bound. We complement this lower-bound with a tight upper bound in \Cref{sec:upper_approxDP} (see \Cref{thm:upper_bound}) matching it up to an $O(\log (1/\delta)\cdot\min\{p,\log(2m)\})$ factor, combining this and \Cref{thm:lower_bound_DP_intro} gives \Cref{thm:main_intro}.

The meaning of $\gamma_{(p)}(A)$ in \Cref{thm:lower_bound_DP_intro} is not immediately apparent. Thus, as one of its applications, we study explicit lower bound (with respect to $n$ instead of $\gamma_{(p)}(A)$) for some special types of queries that are widely used in the community of privacy. We first characterize the accuracy of prefix sum, i.e., when the query matrix $A_{\mathsf{prefix}}$ is a lower-triangular all-one matrix. The upper bound in \Cref{thm:continual_counting} follows from the {\em binary tree mechanism}~\citep{chan2011private, dwork2010differential} while the lower bound uses \Cref{thm:lower_bound_DP_intro}. Notably,  we can extend the lower bound for prefix sum queries to all $\epsilon>0$, rather than limiting it to a high privacy regime of $\epsilon<1/2$ as in \Cref{thm:lower_bound_DP_intro}. 
\begin{theorem}
\label{thm:continual_counting}
    For any $n \in \mathbb N$ and any $p\in [2,\infty)$, the matrix mechanism, $\mathcal M_{\mathsf{fact}}$, achieves the following error guarantee while preserving $(\epsilon,\delta)$-differential privacy:
    \[
    \mathsf{err}_{\ell_p^p}(\mathcal M_{\mathsf{fact}},A_{\mathsf{prefix}},n) = O\left( \frac{n^{1/p}\log(n) \sqrt{\log(1/\delta)\cdot \min\{ p,\log(n)\}}}{ \epsilon} \right)
    \]
    Further, there is no $(\epsilon,\delta)$-differentially private mechanism $\mathcal M$ that achieves 
    \[
    \mathsf{err}_{\ell_p^p}(\mathcal M,A_{\mathsf{prefix}},n) = o
    \left( {(1-\delta)n^{1/p}\log(n) \over e^{3\epsilon}-1} \right) \quad\text{for}\quad \delta \leq \min\left\{ {1\over 16},\epsilon^2,\Theta\left({\epsilon n^{\frac{2-p}{2p}}\over \ln(n)}\right)\right\}.\]
\end{theorem}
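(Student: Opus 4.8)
The plan is to reduce both directions to an explicit dyadic (binary‑tree) factorization of $A_{\mathsf{prefix}}$, combined with \Cref{thm:upper_bound} for the upper bound and \Cref{thm:main} (equivalently \Cref{thm:lower_bound_DP_intro}) for the lower bound when $\epsilon<1/2$, plus a separate group‑privacy argument to push the lower bound to all $\epsilon>0$.

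\textbf{Upper bound.} Assume $n$ is a power of two (pad otherwise, at the cost of constants) and let $\mathcal I$ be the $2n-1$ dyadic intervals of the complete binary tree on $[n]$. Factor $A_{\mathsf{prefix}}=LR$ with $R\in\{0,1\}^{(2n-1)\times n}$, $R_{I,j}=1$ iff $j\in I$, and $L\in\{0,1\}^{n\times(2n-1)}$, $L_{t,I}=1$ iff $I$ lies in the canonical dyadic decomposition of $[1,t]$; a direct check gives $LR=A_{\mathsf{prefix}}$. Since each leaf lies in exactly one interval per level, $\|R\|_{1\to 2}=\sqrt{1+\log_2 n}$; since $[1,t]$ uses at most one interval per level, $(LL^\top)_{tt}=\|L_t\|_2^2\le 1+\log_2 n$ for every $t$, so $\mathsf{tr}_{p/2}(LL^\top)=\big(\sum_{t=1}^n (LL^\top)_{tt}^{p/2}\big)^{2/p}\le n^{2/p}(1+\log_2 n)$ and hence $\gamma_{(p)}(A_{\mathsf{prefix}})\le \sqrt{\mathsf{tr}_{p/2}(LL^\top)}\,\|R\|_{1\to 2}=O(n^{1/p}\log n)$. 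Now run the matrix mechanism on this factorization. For $p=O(\log n)$ use Gaussian noise $z\sim\mathcal N(\mathbf 0,\sigma^2\mathbb{I})$ with $\sigma=\Theta(\|R\|_{1\to 2}\sqrt{\log(1/\delta)}/\epsilon)$, so that $\mathcal M_{\mathsf{fact}}(x)=L(Rx+z)$ is $(\epsilon,\delta)$‑DP, and compute $\big(\mathbb{E}[\|Lz\|_p^p]\big)^{1/p}=\sigma\,(\mathbb{E}[|g|^p])^{1/p}\sqrt{\mathsf{tr}_{p/2}(LL^\top)}=O(\sqrt p)\cdot O(n^{1/p}\log n)\cdot \sqrt{\log(1/\delta)}/\epsilon$ using $(\mathbb{E}[|g|^p])^{1/p}=O(\sqrt p)$ for $g\sim\mathcal N(0,1)$. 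For $p=\omega(\log n)$ instead add a bounded‑noise variant, e.g.\ the same Gaussian conditioned on the symmetric convex event $\{\|Lz\|_\infty\le B\}$ for a suitable $B=O(\mathrm{poly}(\log n)\sqrt{\log(1/\delta)}/\epsilon)$, which still preserves $(\epsilon,O(\delta))$‑DP (conditioning a Gaussian on a symmetric convex body only helps, by Anderson's inequality), so $\|Lz\|_\infty\le B$ deterministically and $\|Lz\|_p\le n^{1/p}B$. Since $n^{1/p}=\Theta(1)$ once $p\ge\log n$, taking the better of the two regimes yields $O\big(n^{1/p}\log n\,\sqrt{\log(1/\delta)\min\{p,\log n\}}/\epsilon\big)$; for $p=O(1)$ the single Gaussian mechanism above achieves this for all such $p$ at once.

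\textbf{Lower bound for $\epsilon<1/2$.} By \Cref{thm:main} it suffices to evaluate the Schatten‑$1$ norm $\|A_{\mathsf{prefix}}\|_1$. The singular values of the lower‑triangular all‑ones matrix are $\sigma_j=\big(2\sin\tfrac{(2j-1)\pi}{2(2n+1)}\big)^{-1}=\Theta(n/j)$ (see, e.g., \cite{henzinger2023almost}), so $\|A_{\mathsf{prefix}}\|_1=\sum_{j=1}^n\sigma_j=\Theta(n\log n)$, and \Cref{thm:main} with $m=n$ gives $\mathsf{err}_{\ell_p^p}(\mathcal M,A_{\mathsf{prefix}})=\Omega_{\epsilon,\delta}\big(n^{1/p-1/2}\cdot n\log n/\sqrt n\big)=\Omega_{\epsilon,\delta}(n^{1/p}\log n)$; since the hidden constant here is of the form $(1-\tilde\delta)/(C'\epsilon)$ from \Cref{thm:lower_bound_DP_intro} and $e^{3\epsilon}-1=\Theta(\epsilon)$ for $\epsilon<1/2$, this already matches the stated bound in that range. (In particular $\gamma_{(p)}(A_{\mathsf{prefix}})=\Theta(n^{1/p}\log n)$, so the dyadic factorization above is optimal and \Cref{thm:lower_bound_DP_intro} is tight for this workload.)

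\textbf{Extending the lower bound to all $\epsilon$, and the main obstacle.} The restriction $\epsilon<1/2$ in \Cref{thm:lower_bound_DP_intro} enters only through the privacy‑loss bookkeeping in its proof. For $A_{\mathsf{prefix}}$ the hard instances can be taken supported on inputs differing in $O(1)$ coordinates, so the $e^{\epsilon}$ factors appear a bounded number of times and one may substitute the group‑privacy estimate at distance $3$ — if $\mathcal M$ is $(\epsilon,\delta)$‑DP then $\Pr[\mathcal M(x)\in S]\le e^{3\epsilon}\Pr[\mathcal M(x')\in S]+(e^{2\epsilon}+e^{\epsilon}+1)\delta$ whenever $x,x'$ differ in at most $3$ coordinates — in place of the generic inequality, yielding the stated $\Omega\big((1-\delta)n^{1/p}\log n/(e^{3\epsilon}-1)\big)$ for every $\epsilon>0$; the hypothesis $\delta\le\min\{\tfrac1{16},\epsilon^2,\Theta(\epsilon\, n^{(2-p)/(2p)}/\ln n)\}$ is exactly what makes the additive‑$\delta$ slack (of order $\delta$ times the number of output coordinates over the per‑coordinate signal magnitude, i.e.\ $\sim \delta\ln n/(\epsilon\, n^{1/p-1/2})$) negligible against the main term. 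The main obstacle is precisely this step: one must re‑open the proof of \Cref{thm:lower_bound_DP_intro}, verify that specialized to $A_{\mathsf{prefix}}$ it only ever compares inputs at constant Hamming distance, and check that the telescoping over the $\log n$ dyadic scales responsible for the $\log n$ factor survives the coarser, $\epsilon$‑independent accounting; the remaining estimates are routine.
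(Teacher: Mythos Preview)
Your upper bound via the dyadic factorization is a fine alternative to the paper's Toeplitz factorization of Fichtenberger--Henzinger--Upadhyay; both yield $\gamma_{(p)}(A_{\mathsf{prefix}})=O(n^{1/p}\log n)$, after which one simply invokes \Cref{thm:upper_bound}, which already supplies the $\sqrt{\min\{p,\log n\}}$ factor. The detour through Anderson's inequality and conditioned Gaussians for large $p$ is unnecessary and, as written, not quite correct (conditioning on an event of probability $1-\delta$ adds $\delta$ to the privacy budget but forces $B$ to carry a $\sqrt{\log(n/\delta)}$ rather than $\sqrt{\log n}$). Your lower bound for $\epsilon<1/2$ via \Cref{thm:main} and the explicit singular values of $A_{\mathsf{prefix}}$ is also valid; the paper takes the essentially equivalent route of lower bounding $\gamma_{(p)}(A_{\mathsf{prefix}})$ through the dual characterization with $D=n^{1/p-1/2}I$.

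The genuine gap is the extension to all $\epsilon>0$. The factor $e^{3\epsilon}-1$ does \emph{not} come from group privacy at Hamming distance $3$: it arises inside the proof of \Cref{thm:lower_bound_add_intro} from the slack $\tilde\epsilon=2\epsilon-\log(1-\delta')\le 3\epsilon$ in the privacy-loss truncation of Lemma~\ref{lem:nikolov2023general_privacylossratio}. Your plan to ``re-open the proof of \Cref{thm:lower_bound_DP_intro} and verify that it only compares inputs at constant Hamming distance'' cannot succeed, because that proof passes through Lemma~\ref{lem:reprove_krsu}, whose rescaling $x\mapsto x/(2\epsilon)$ is precisely what forces $\epsilon<1/2$ and is not fixable by a distance-$3$ group-privacy bound. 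The paper instead bypasses \Cref{thm:lower_bound_DP_intro} entirely here: it applies \Cref{thm:lower_bound_add_intro}, the additive-noise lower bound valid for \emph{every} $\epsilon>0$, together with the reduction of \Cref{thm:from_oblivious_to_dependent}. The cost is that \Cref{thm:lower_bound_add_intro} requires $\delta\le\delta(A,\epsilon,n)$, which depends on the minimum width $\kappa(A)$ of the sensitivity polytope. The missing ingredient is Lemma~\ref{lem:width_of_count}, which computes $\kappa(A_{\mathsf{prefix}})=2$; plugging this and $\gamma_{(p)}(A_{\mathsf{prefix}})\le\gamma_F(A_{\mathsf{prefix}})=O(\sqrt n\log n)$ into the definition of $\delta(A,\epsilon,n)$ is exactly what produces the constraint $\delta\le\min\{1/16,\epsilon^2,\Theta(\epsilon\,n^{(2-p)/(2p)}/\ln n)\}$. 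Your reading of that constraint as ``additive-$\delta$ slack against the per-coordinate signal'' is not the actual mechanism.
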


We note that $e^{3\epsilon}-1 = O(\epsilon)$ in a high privacy regime where $\epsilon = O(1)$, so the lower and upper bound match in such a regime. \Cref{thm:continual_counting} recovers the result in Henzinger et al. \citet{henzinger2023almost} for $p=2$ as a special case. Moreover, it \textit{exactly} characterizes the error of prefix sum with respect to any $\ell_p^p$ metric for all constant $p$. One can obtain an $\Omega(\log(n))$ lower bound on $\ell_\infty$-error using a slight modification of the packing argument in Dwork et al. \citet{dwork2010differential} for $\delta =o(1/n)$. Our bound extends the packing-based lower bound to larger values of $\delta$ and all $p \in [2,\infty)$. 

As another application, in \Cref{thm:parity} (shown in \Cref{sec:proof_parity}), we characterize the lower bound on privately answering parity queries. 
The theorem recovers the lower bound in Section 8 of Henzinger et al. \citet{henzinger2023almost} for $p=2$ and Section 3.6 of Edmonds et al. \citet{edmonds2020power} when $p\to \infty$. 
\begin{theorem}
\label{thm:parity}
    Let $\mathcal Q^P_{d,w}$ be the collection of parity queries. For any $(\epsilon,\delta)$-differentially private mechanism $\mathcal M$ for answering queries in $\mathcal Q^P_{d,w}$, the worst case $\ell_{p}^p$ error $$\mathsf{err}_{\ell_p^p}\left(\mathcal M, \mathcal Q^P_{d,w}, \binom{d}{w} \right) = \Omega\left({(1-\delta) \over e^{3\epsilon}-1} {d \choose w}^{1/2+1/p} \right).$$
\end{theorem}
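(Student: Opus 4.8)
The plan is to combine \Cref{thm:lower_bound_DP_intro} with an exact computation of the factorization norm $\gamma_{(p)}$ of the parity workload matrix, and then to remove the restriction $\epsilon<\tfrac12$ by a packing argument. Write $A=A_{d,w}\in\{-1,1\}^{m\times 2^d}$ for the matrix whose entry indexed by $P$ (with $|P|=w$) and $z\in\{-1,1\}^d$ is $\prod_{i\in P}z_i$, where $m=\binom dw$. The only facts I need about $A$ are: its rows are orthogonal with squared norm $2^d$, i.e.\ $AA^\top=2^d\mathbb I_m$ and equivalently $A^\top A=2^d\Pi$ where $\Pi$ is the orthogonal projection onto the $m$-dimensional row space of $A$; and every column of $A$ has $\ell_2$-norm $\sqrt m$ since its entries are $\pm1$.

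I would first prove the elementary interpolation bound $\gamma_{(p)}(A)\ge m^{1/p-1/2}\,\gamma_{(2)}(A)=m^{1/p-1/2}\,\gamma_F(A)$, valid for any matrix with $m$ rows and any $p\ge2$: for a factorization $LR=A$, the power-mean inequality applied to the nonnegative numbers $\big((LL^\top)_{ii}\big)_{i\in[m]}$ gives $\mathsf{tr}_{p/2}(LL^\top)\ge m^{2/p-1}\|L\|_F^2$, and multiplying by $\|R\|_{1\to2}^2$ and minimizing over factorizations yields the claim. It then suffices to show $\gamma_F(A)=\Theta(m)$. The upper bound is the trivial factorization $L=\mathbb I_m$, $R=A$, with $\|L\|_F\|R\|_{1\to2}=\sqrt m\cdot\sqrt m=m$. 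For the lower bound, normalize so $\|R\|_{1\to2}\le1$ and take $L=AR^+$ (the minimum-Frobenius-norm solution of $LR=A$); then $\|L\|_F^2=\mathsf{tr}\big(A(R^\top R)^+A^\top\big)=\mathsf{tr}(M^+A^\top A)=2^d\,\mathsf{tr}(M^+\Pi)$, where $M=R^\top R\succeq0$ has $M_{jj}=\|R_{\cdot j}\|_2^2\le1$ (so $\mathsf{tr}(M)\le 2^d$) and row space containing that of $A$. Using the operator inequality $\Pi M^+\Pi\succeq(\Pi M\Pi)^+$ on the row space of $A$, Cauchy--Schwarz in the form $\mathsf{tr}(X^{-1})\ge(\dim)^2/\mathsf{tr}(X)$, and $\mathsf{tr}(\Pi M\Pi)\le\mathsf{tr}(M)\le2^d$, we get $\mathsf{tr}(M^+\Pi)\ge m^2/2^d$, hence $\|L\|_F^2\ge m^2$ and $\gamma_F(A)\ge m$. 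Consequently $\gamma_{(p)}(A)=\Theta(m^{1/2+1/p})$ (in fact $=m^{1/2+1/p}$ up to absolute constants), which specializes to the bound of Henzinger et al.\ at $p=2$ and of Edmonds et al.\ as $p\to\infty$.

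Plugging $\gamma_{(p)}(A)=\Omega(m^{1/2+1/p})$ into \Cref{thm:lower_bound_DP_intro} gives $\mathsf{err}_{\ell_p^p}(\mathcal M,A)=\Omega(m^{1/2+1/p}/\epsilon)$ for $\epsilon<\tfrac12$, which matches $\Omega\big((1-\delta)m^{1/2+1/p}/(e^{3\epsilon}-1)\big)$ in that regime since $\epsilon\le e^{3\epsilon}-1$. To reach all $\epsilon$ (and larger $\delta$) I would use a direct packing argument. Since $AA^\top=2^d\mathbb I_m$, for any $t\in\mathbb R^m$ the input $x_t=2^{-d}A^\top t$ satisfies $Ax_t=t$ and $\|x_t\|_1\le\|t\|_2$ (Cauchy--Schwarz over the $2^d$ coordinates, using $\sum_z(A^\top t)_z^2=2^d\|t\|_2^2$). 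Let $\mathcal C\subseteq\{\pm1\}^m$ be a binary code of size $2^{\Omega(m)}$ with minimum Hamming distance $\Omega(m)$, and set $\lambda=\Theta\big(\sqrt m/(e^{3\epsilon}-1)\big)$; then the inputs $\{x_{\lambda c}:c\in\mathcal C\}$ are pairwise $O(\lambda\sqrt m)$-neighbors in $\ell_1$ with pairwise $\ell_p$-distance $\Omega(\lambda m^{1/p})$ between their answer vectors. A standard group-privacy packing lower bound (following the argument used for \Cref{thm:continual_counting}, in the style of Hardt--Talwar and Dwork et al.) then forces $\mathsf{err}_{\ell_p^p}(\mathcal M,A)=\Omega\big((1-\delta)\lambda m^{1/p}\big)=\Omega\big((1-\delta)m^{1/2+1/p}/(e^{3\epsilon}-1)\big)$; here the point is that $D\epsilon=O(m)$ for every $\epsilon>0$ (since $\epsilon/(e^{3\epsilon}-1)$ is bounded, where $D\asymp\lambda\sqrt m$), so a single choice of scale works uniformly in $\epsilon$.

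The main obstacle I anticipate is the final bookkeeping: making the group-privacy packing bound yield the stated form with denominator $e^{3\epsilon}-1$ (rather than $\epsilon$) for all $\epsilon>0$ while losing only a factor $(1-\delta)$, which requires carefully tracking the additive $\tfrac{e^{D\epsilon}-1}{e^\epsilon-1}\delta$ term under $D$-step group privacy and tuning the code parameters and the scale $\lambda$ against it (rounding $\lambda\sqrt m$ up to an integer when it is small, which only strengthens the bound). The transitive symmetry of the parity instance under coordinatewise multiplication by $(\mathbb Z/2)^d$ --- which makes the worst-case error equal to an average over the packing --- should make this step clean. By contrast, the $\gamma_{(p)}$ computation is routine once the orthogonality $AA^\top=2^d\mathbb I_m$ is in hand, and this is the only place the structure of parity queries enters.
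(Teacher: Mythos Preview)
Your proposal is correct but departs from the paper's proof in two places, and in both the paper's route is shorter.

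\emph{Computing $\gamma_{(p)}$.} You and the paper both reduce to $\gamma_{(p)}(A)\ge m^{1/p-1/2}\gamma_F(A)$. For the lower bound on $\gamma_F(A)$ you argue via $\Pi M^{+}\Pi\succeq(\Pi M\Pi)^{+}$ and AM--HM on the eigenvalues; this is fine. The paper instead invokes $\gamma_F(A)\ge\|A\|_1/\sqrt{n}$ (\Cref{lem:schatten_norm}) and reads off $\|A\|_1=m\,2^{d/2}$ immediately from $AA^\top=2^d\mathbb I_m$ (all $m$ singular values equal $2^{d/2}$), giving $\gamma_F(A)\ge m$ in one line.

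\emph{Reaching all $\epsilon>0$.} Here is the substantive difference. You split into $\epsilon<\tfrac12$ (via \Cref{thm:lower_bound_DP_intro}) plus a separate packing argument, and you flag the bookkeeping of the group-privacy $\delta$ blow-up as the main obstacle. The paper avoids the packing argument entirely: it shows $\kappa(Q)\ge 2$ (for any unit $\theta$, $\|\theta^\top Q\|_2=2^{d/2}$ by row-orthogonality, hence $\|\theta^\top Q\|_\infty\ge 1$ and the width is $\ge 2$), and then applies \Cref{thm:lower_bound_add_intro}, which already holds for \emph{all} $\epsilon>0$ once $\kappa(A)>0$, followed by the reduction of \Cref{thm:from_oblivious_to_dependent}. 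This is exactly the same template as the proof of \Cref{thm:continual_counting} --- which, contrary to what you write, is \emph{not} proved by packing but by computing $\kappa(A_{\mathsf{prefix}})=2$ (\Cref{lem:width_of_count}). So the ``hard'' bookkeeping you anticipate is precisely what the paper's framework was built to sidestep: once $\kappa(Q)\ge 2$ is in hand, the $(1-\delta)/(e^{3\epsilon}-1)$ form falls out of \Cref{thm:lower_bound_add_intro} directly, with the width condition only constraining the admissible range of $\delta$.

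In short: your argument would go through, but you are re-deriving from scratch (via packing) a conclusion that the paper's own \Cref{thm:lower_bound_add_intro} already delivers uniformly in $\epsilon$, at the cost of a two-line computation of $\kappa(Q)$.
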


\paragraph{Organization of the proof.}
In Section \ref{sec:add_noise_lb}, we first develop the lower bound for additive noise mechanism on arbitrary matrix with linearly independent rows. For general matrix, in Section \ref{sec:remove_assumptiion}, we remove the linear independency assumption in the start of Section \ref{sec:lower_bound_proof}, and then with the help of the back-box reduction from additive noise mechanism to general mechanism, we give a $(\epsilon,\delta)$-DP lower bound with respect to \textit{general} matrix $A\in \mathbb{R}^{m\times n}$ in only high privacy regime, which establishes \Cref{thm:lower_bound_DP_intro}. Next, in Section \ref{sec:main_proof}, we prove our easy-to-use bound \Cref{thm:main} based on \Cref{thm:lower_bound_DP_intro}. We derive \Cref{thm:continual_counting} as a corollary of previous sections. The proof of \Cref{thm:parity} follows a similar reasoning and we defer it in Section \ref{sec:proof_parity}. 

\section{Lower Bound for $(\epsilon,\delta)$-DP and its Application}
\label{sec:lower_approx_DP} \label{sec:lower_bound_proof}
In this section, we prove our lower bound and its applications in proving lower bounds of prefix sum and parity queries in $\ell_p^p$ metric. 
Throughout this paper, we write $a\gtrsim b$ if there exists some universal constant $c$ such that $a \geq \frac{1}{c}b$. 
For proving the lower bound in terms of $(\epsilon,\delta)$-differential privacy, we first consider a special class of mechanisms that adds noise sampled from an appropriate distribution to the real answer of the queries (a high-level idea of our proof is presented in \Cref{sec:high_level}). We call such a class of mechanisms  
the {\em additive noise mechanisms}. Unlike Nikolov and Tang~\citet{nikolov2023general}, we do not assume that the mechanism is unbiased which makes our analysis more subtle. 

Before stating the result, we fix some notations. Let $B_p^n:=\{x\in \mathbb{R}^d : \|x\|_p\leq 1\}$ denote the $n$-dimensional $\ell_p$-ball and $AB_1^n := \{Ax : x \in B_1^n\}$ denote the {\em sensitivity polytope}. To describe the lower bound, for any matrix $A\in\mathbb{R}^{m\times n}$, we define the map, $\kappa:\mathbb{R}^{m\times n} \to \mathbb R$, that computes the width of the {\em sensitivity polytope} with respect to the most ``narrow'' direction:
\begin{align}
    \label{eq:kappa_def}
\kappa(A): = \min_{\|\theta\|_2 = 1} w_{AB_1^n}(\theta) \quad \text{where}\quad w_{AB_1^n}(\theta): =   \max_{\|x\|_1\leq 1} \theta^\top A x - \min_{\|x\|_1\leq 1} \theta^\top A x.
\end{align} 

To prove \Cref{thm:lower_bound_DP_intro}, we first show a lower bound for additive noise mechanisms when $A$ has linearly independent rows; Then, we  remove this assumption in a high privacy regime ($\epsilon<1/2$) in \Cref{sec:remove_assumptiion}, and \Cref{thm:lower_bound_DP_intro} follows by combining with a general reduction of Bhaskar et al. \citet{bhaskara2012unconditional}.  
\subsection{Lower bound on additive noise mechanisms}\label{sec:add_noise_lb}

\begin{theorem}
  [Lower bound for additive noise mechanisms]
  \label{thm:lower_bound_add_intro}
  Fix any $\epsilon>0$, $p\in[2,\infty)$ and query matrix $A\in \mathbb{R}^{m\times n}$. There exists a $\delta(A,\epsilon,n):= \min\left\{\frac{1}{16},\epsilon^2, \frac{\epsilon\cdot \kappa(A)n^{1-2/p}}{12\gamma_{(p)}(A)}\right\}$ such that for any $\delta \leq \delta(A,\epsilon, n)$, if $\mathcal{M}$ is a $(\epsilon,\delta)$-differentially private additive noise mechanism, then for any $x \in \mathbb R^n$, 
  $$ \left(\mathbb{E}\left[\|\mathcal M(x) - Ax\|_p^p\right] \right)^{1/p} \geq \frac{(1-\delta')\gamma_{(p)}(A)}{8(e^{3\epsilon} - 1)}, \quad \text{where} \quad \delta' = {2\delta\over 1-e^{-\epsilon}}.$$
\end{theorem}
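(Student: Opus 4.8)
The plan is to prove a lower bound on the $\ell_p^p$-error of any $(\epsilon,\delta)$-DP additive-noise mechanism on a matrix $A$ with linearly independent rows, by combining a privacy-based "anti-concentration" argument with a duality/packing argument that extracts the factorization norm $\gamma_{(p)}(A)$. Write $\mathcal M(x) = Ax + Z$ where $Z$ is a random vector (not necessarily mean zero). Fix the worst-case coordinate direction from the definition of $\kappa$: there is a unit vector $\theta$ and neighboring-ish inputs realizing the minimal width. The key object is the covariance-like structure of $Z$: I would want to relate $\mathbb E\|Z\|_p^p = \sum_i \mathbb E|Z_i|^p$ to the geometry of the sensitivity polytope $AB_1^n$ via the factorization $A = LR$.

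First I would set up the privacy-to-variance reduction. Since $A$ has linearly independent rows, $A$ has a right inverse, so the mechanism effectively must answer $x$ itself (composed with $A$) accurately; DP then forces the noise $Z$ to be "spread out" in every direction of the column space of $A$. Concretely, for a unit vector $u$ in the row space, consider the one-dimensional projection $\langle u, Z\rangle$ and the pair of inputs $x=0$ and $x = t\cdot v$ where $Av = \|R\|_{1\to 2}^{-1}\cdot(\text{something})$ — chosen so that $\|v\|_1 \le 1$ and $Av$ has a large component along $u$. The $(\epsilon,\delta)$-DP group-privacy argument over $k \approx 1/\epsilon$ steps (this is where $e^{3\epsilon}$ and the $\delta' = 2\delta/(1-e^{-\epsilon})$ bookkeeping enters, exactly as in Edmonds et al.) shows that $\langle u, Z\rangle$ cannot be too concentrated: with constant probability it must have magnitude $\gtrsim \epsilon^{-1} \cdot (\text{width of } AB_1^n \text{ along } u)$, up to the $(1-\delta')$ factor. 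The threshold $\delta \le \delta(A,\epsilon,n)$ is precisely what guarantees this packing argument doesn't degenerate — the $\kappa(A) n^{1-2/p}/\gamma_{(p)}(A)$ term ensures the "narrow direction" is still wide enough that the $\delta$ slack is absorbed.

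Next I would convert the per-direction anti-concentration into the global $\ell_p^p$ bound. Take the optimal factorization $A = LR$ achieving $\gamma_{(p)}(A) = \sqrt{\mathsf{tr}_{p/2}(LL^\top)}\,\|R\|_{1\to 2}$. Normalize so $\|R\|_{1\to 2} = 1$; then the rows $\ell_i^\top$ of $L$ satisfy $\sum_i \|\ell_i\|_2^p$ — wait, more precisely $\mathsf{tr}_{p/2}(LL^\top) = \left(\sum_i (\ell_i^\top \ell_i)^{p/2}\right)^{2/p} = \left(\sum_i \|\ell_i\|_2^p\right)^{2/p}$. For each $i$, the $i$-th coordinate of the error is $Z_i = \langle \ell_i, RZ\rangle$ in the relevant sense — actually I should think of the mechanism's error on query row $i$ directly. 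The direction $\ell_i/\|\ell_i\|_2$ in the image plays the role of $u$ above, and since $\|R\|_{1\to 2}=1$ there is an input of $\ell_1$-norm $\le 1$ whose image under $A$ has a component of size $\approx \|\ell_i\|_2$ along $\ell_i$ (this is where the factorization "certifies" the width). So the anti-concentration gives $\mathbb E|Z_i|^p \gtrsim \left(\frac{\|\ell_i\|_2}{e^{3\epsilon}-1}\right)^p \cdot (1-\delta')^p$ up to constants; summing over $i$ and taking the $p$-th root yields $\left(\sum_i \|\ell_i\|_2^p\right)^{1/p}/(e^{3\epsilon}-1) \gtrsim \sqrt{\mathsf{tr}_{p/2}(LL^\top)}/(e^{3\epsilon}-1) = \gamma_{(p)}(A)/(e^{3\epsilon}-1)$, using $\left(\sum \|\ell_i\|_2^p\right)^{1/p} = \left(\mathsf{tr}_{p/2}(LL^\top)\right)^{1/2}$. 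The constant $8$ collects the packing-argument losses.

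**The main obstacle** I anticipate is handling the \emph{bias} of the mechanism — unlike Nikolov–Tang, we cannot assume $\mathbb E Z = 0$, so a direction in which $Z$ is tightly concentrated around a \emph{nonzero} mean is not immediately ruled out by a variance argument; one must instead run the DP packing argument along a geodesic of inputs long enough that no single mean value can be "close" to the mechanism's output on all of them simultaneously. This is standard (it's the reason the DP lower bound uses a group-privacy chain rather than a one-shot comparison), but making the $\ell_p^p$ version of it uniform in $p$ and tracking how the $\delta$ term propagates through the $p$-th moment (rather than the second moment) requires care — in particular one needs a lower bound on $\mathbb E|Z_i|^p$ of the form "with probability $\ge c$, $|Z_i| \ge c'\cdot(\text{width})$", which for $p \ge 2$ is a weaker conclusion than for $p=2$ per unit of anti-concentration, so the constants genuinely depend on getting the packing-resolution right. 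A secondary subtlety is verifying that choosing $\|v\|_1 \le 1$ with $Av$ aligned to $\ell_i$ is simultaneously possible for the optimal factorization — this is exactly the content of $\|R\|_{1\to 2}\le 1$ and is the place where the factorization norm, rather than some other matrix quantity, is forced to appear.
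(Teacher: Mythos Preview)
There is a genuine gap in your per-coordinate argument. You want, for the optimal factorization $A=LR$ with $\|R\|_{1\to 2}=1$ and rows $\ell_i^\top$ of $L$, an input $v$ with $\|v\|_1\le 1$ whose image $Av$ has $i$-th coordinate (or component ``along $\ell_i$'') of size $\approx\|\ell_i\|_2$. But $\ell_i\in\mathbb{R}^k$ is not a direction in the output space $\mathbb{R}^m$; the relevant width is along $e_i\in\mathbb{R}^m$, and that width is $2\|a_i\|_\infty$ where $a_i^\top=\ell_i^\top R$. The constraint $\|R\|_{1\to 2}=1$ is equivalent to $\|R^\top\|_{2\to\infty}=1$, which gives $\|a_i\|_\infty=\|R^\top\ell_i\|_\infty\le\|\ell_i\|_2$ --- the inequality goes the \emph{wrong way}. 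Summing the resulting per-coordinate bounds yields only $(\sum_i\|a_i\|_\infty^p)^{1/p}/\epsilon$, which can fall short of $\gamma_{(p)}(A)$ by a factor as large as $\sqrt{n}$: for the $n\times n$ Hadamard matrix every $\|a_i\|_\infty=1$, so the coordinate sum is $n^{1/p}$, whereas $\gamma_{(p)}(A)\gtrsim n^{1/p}\sqrt{n}$ by Lemma~\ref{thm:lowerbound}. The factorization does \emph{not} certify per-coordinate widths; it certifies something global.

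The paper's route avoids this by never fixing coordinates. For \emph{every} direction $\theta\in\mathbb{R}^m$ it proves a variance lower bound $\mathsf{Var}[\theta^\top\mathcal{M}(x)]\gtrsim (w_{AB_1^n}(\theta))^2/(e^{3\epsilon}-1)^2$; this uniform-in-$\theta$ bound is exactly the hypothesis of Lemma~\ref{lem:var_cover}, which converts it into the geometric containment $AB_1^n\subseteq_{\leftrightarrow} C\sqrt{\Sigma_{\mathcal M}(x)}\,B_2^m$. That containment, by the definition of $\Lambda_p$, forces $\mathsf{tr}_{p/2}(\Sigma_{\mathcal M}(x))\ge(\Lambda_p(A)/C)^2\ge(\gamma_{(p)}(A)/C)^2$, and finally Lemma~\ref{lem:reduce_to_trace_norm} gives $(\mathbb{E}\|Z\|_p^p)^{1/p}\ge\sqrt{\mathsf{tr}_{p/2}(\Sigma_{\mathcal M}(x))}$. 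The per-direction variance bound itself is not obtained by a packing/group-privacy chain but via the Hammersley--Chapman--Robbins inequality: one builds $\widehat{P}$ as in \eqref{eq:hat_p_intro} so that $\widehat{P},Q$ are $\tilde\epsilon$-indistinguishable with $\tilde\epsilon\le 3\epsilon$, whence $\chi^2(\widehat P,Q)\le e^{-\tilde\epsilon}(e^{\tilde\epsilon}-1)^2$ and HCR delivers the variance bound. The bias is handled by a dichotomy (large bias already gives large error via Lemma~\ref{lem:big_bias}; small bias makes the extra $\delta'\mathbb{E}[\theta^\top z]$ term in Lemma~\ref{lem:hatP} harmless, and this is precisely where the threshold on $\delta$ involving $\kappa(A)$ enters). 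Your intuition about the bias obstacle is right, but the mechanism for extracting $\gamma_{(p)}$ must go through the covariance ellipsoid and $\Lambda_p$, not through coordinates of a fixed factorization.
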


The above theorem implies an almost tight lower bound in a high privacy regime. For example, when $0\leq \epsilon\leq \frac{1}{3}$, since $3\epsilon \leq e^{3\epsilon}-1\leq 6\epsilon$, it directly implies that 
  $$ \left(\mathbb{E}\left[\|\mathcal M(x) - Ax\|_p^p\right] \right)^{1/p} \geq \frac{(1-\delta')\gamma_{(p)}(A)}{48\epsilon}.$$
  
This matches the upper bound given in \Cref{thm:upper_bound}. For additive noise mechanisms, \Cref{thm:lower_bound_add_intro} is naturally instance-optimal on any $x\in \mathbb{R}^n$. We note that the range of $\delta$ in Theorem \ref{thm:lower_bound_add_intro} depends on $\kappa(A)$, and it is easy to verify that $\kappa(A) >0$ if and only if $A$ has linearly independent rows (see Lemma \ref{lem:kappa} for details). While special linear queries such as prefix sum and parity queries inherently possess linearly independent rows, there are many interesting matrices without linearly independent rows. In the high privacy regime, which was the setting considered in Edmonds et al.~\cite{edmonds2020power}, we remove the full rank assumption (see \Cref{thm:lower_bound_without_dependency_intro} in Section \ref{sec:remove_assumptiion}). This underpins Theorem \ref{thm:lower_bound_DP_intro}.

The main technical obstacle of \Cref{thm:lower_bound_add_intro} lies in making explicit all the intricate dependencies on the width of the sensitivity polytope, and how to handle the bias in lower bound proofs. We note that Edmonds et al. \cite{edmonds2020power} only studies $\ell_2^2$ error. Therefore, without loss of generality, it can be assumed that the bias is $0$. Nikolov et al.~\citet{nikolov2023general} studies an unbiased setting and their approximate DP lower bound depends on the minimum width of the polytope ($w_0$ in \citet{nikolov2023general}). Our lower bound does not assume unbiasedness, and our lower bound in \Cref{thm:lower_bound_add_intro} does not depend on the minimum width in the bound itself. Instead, the minimum width is only required in \Cref{thm:lower_bound_add_intro} for the applicable range of $\delta$. This means that our bound remains non-trivial even if the minimum width is like $1/n$. In proving the new lower bound, we also adapt geometric characterizations in Nikolov et al.~\citet{nikolov2023general} to handle bias of an additive noise mechanism. 
We give a more detailed discussion in Appendix \ref{sec:compare_tech}.

To prove \Cref{thm:lower_bound_add_intro}, we consider mechanisms of the form $\mathcal M(x) = Ax + z$, where $z$ is stochastically independent of $x$. 
For any input $x\in\mathbb{R}^n$, we define the covariance matrix of $\mathcal M(x)$ to be
  $$\Sigma_\mathcal M(x) = \mathbb{E}[(\mathcal M(x) - \mathbb{E}[\mathcal M(x)])(\mathcal M(x) - \mathbb{E}[\mathcal M(x)])^\top].$$
Since an additive noise mechanism can be biased, $\mathbb{E}[\mathcal M(x)]$ is not necessarily $Ax$. We prove in \Cref{sec:proof_reduce_to_trace_norm} the following relationship between the $\ell_p^p$ error and the $p$-trace of the covariance matrix.
\begin{lemma}\label{lem:reduce_to_trace_norm}
    Fix any $p\in [2,\infty)$ and any additive noise mechanism $\mathcal M:\mathbb{R}^n\rightarrow \mathbb{R}^m$. It holds that 
    $$\forall \in \mathbb{R}^n, \quad \left(\mathbb{E}\left[\|\mathcal M(x) - Ax\|_p^p\right]\right)^{1/p} \geq \sqrt{\mathsf{tr}_{p/2}(\Sigma_{\mathcal M}(x))}.$$
\end{lemma}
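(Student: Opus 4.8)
The plan is to prove the pointwise inequality $\left(\mathbb{E}\left[\|\mathcal M(x) - Ax\|_p^p\right]\right)^{1/p} \geq \sqrt{\mathsf{tr}_{p/2}(\Sigma_{\mathcal M}(x))}$ by working coordinate-by-coordinate and then assembling the bound through the triangle inequality in $\ell_{p/2}$. Fix an input $x$ and write $v = \mathcal M(x) - Ax = z + (\mathbb{E}[\mathcal M(x)] - Ax)$, so that $v_i - \mathbb{E}[v_i]$ has variance $(\Sigma_{\mathcal M}(x))_{ii}$ for each coordinate $i\in[m]$. The first step is the elementary single-coordinate fact that for any real random variable $Y$, $\mathbb{E}[|Y|^p] \geq \mathbb{E}[|Y - \mathbb{E}[Y]|^p] \geq \left(\mathbb{E}[(Y-\mathbb{E}[Y])^2]\right)^{p/2} = \Var{Y}^{p/2}$, where the first inequality is because centering minimizes the $L^p$ norm (a consequence of the fact that $t\mapsto \mathbb{E}[|Y-t|^p]$ is minimized at $t = \mathbb{E}[Y]$ when $p\geq 1$ — actually for $p\geq 1$ one only needs convexity and the fact that translating by the mean cannot increase it; more carefully one uses that the median/mean comparison or simply Jensen after conditioning — I will just invoke that $\mathbb{E}|Y|^p\ge \mathbb{E}|Y-\mathbb{E}Y|^p$ for $p \ge 1$), and the second is Jensen applied to the convex map $t\mapsto t^{p/2}$ since $p/2\geq 1$. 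Applying this with $Y = v_i$ gives $\mathbb{E}[|v_i|^p] \geq (\Sigma_{\mathcal M}(x))_{ii}^{p/2}$ for every $i$.

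Next I would pass from the coordinatewise bounds to the full sum. We have
\begin{align}
\mathbb{E}\left[\|\mathcal M(x) - Ax\|_p^p\right] = \sum_{i=1}^m \mathbb{E}[|v_i|^p] \geq \sum_{i=1}^m (\Sigma_{\mathcal M}(x))_{ii}^{p/2} = \left(\mathsf{tr}_{p/2}(\Sigma_{\mathcal M}(x))\right)^{p/2},
\end{align}
using the definition of $\mathsf{tr}_{p/2}$ as $\left(\sum_i U_{ii}^{p/2}\right)^{2/p}$ (note $(\Sigma_{\mathcal M}(x))_{ii}\geq 0$ since it is a variance, so raising to $p/2$ is unproblematic). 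Raising both sides to the power $1/p$ yields exactly the claimed bound $\left(\mathbb{E}\left[\|\mathcal M(x) - Ax\|_p^p\right]\right)^{1/p} \geq \left(\mathsf{tr}_{p/2}(\Sigma_{\mathcal M}(x))\right)^{1/2}$.

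I do not expect any serious obstacle here; this is a soft inequality and the only point requiring mild care is the first step — justifying $\mathbb{E}|Y|^p \geq \mathbb{E}|Y - \mathbb{E}[Y]|^p$ for $p\geq 1$ (equivalently, that the mean is within a factor depending on nothing of the $L^p$-minimizer; in fact for $p \ge 1$ the function $\phi(t) = \mathbb{E}|Y-t|^p$ is convex, $\phi'(t) = -p\,\mathbb{E}[\mathrm{sgn}(Y-t)|Y-t|^{p-1}]$, and one checks the minimizer $t^\star$ and the mean lie on the same side appropriately — but cleanest is: $\mathbb{E}|Y-\mathbb{E}Y|^p \le \mathbb{E}\,\mathbb{E}_{Y'}|Y - Y'|^p$ by Jensen over an independent copy $Y'$, wait that gives the wrong direction). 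The genuinely clean route, which I would adopt, is to apply Jensen directly to the convex function $g(w) = \left(\sum_i |w_i|^p\right)^{1}$ is not needed; instead use that $w \mapsto \|w\|_p^p$ composed with... Actually the simplest correct argument: by Jensen's inequality applied to the convex function $u \mapsto \mathsf{tr}_{p/2}$-type expression is circular. I will therefore instead argue: $\mathbb{E}[|v_i|^p]^{2/p} \ge \mathbb{E}[|v_i|^2] = \mathbb{E}[v_i^2]$ by Jensen ($p/2 \ge 1$), then bound $\mathbb{E}[v_i^2] \ge \Var{v_i} = (\Sigma_{\mathcal M}(x))_{ii}$ since $\mathbb{E}[v_i^2] = \Var{v_i} + (\mathbb{E}[v_i])^2$. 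This avoids the centering step entirely: from $\mathbb{E}[|v_i|^p] \ge (\Sigma_{\mathcal M}(x))_{ii}^{p/2}$ we proceed exactly as above. This is the version I would write up.
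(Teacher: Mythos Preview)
Your final argument (the last paragraph) is correct and is essentially the paper's proof: use Jensen coordinatewise to get $\mathbb{E}[|v_i|^p]^{2/p}\ge \mathbb{E}[v_i^2]\ge \mathsf{Var}[v_i]=(\Sigma_{\mathcal M}(x))_{ii}$, then sum over $i$ and take the $1/p$-th power. The paper organizes the same ingredients slightly differently---it first applies Jensen to the scalar $\|v\|_p$ to pass from $(\mathbb{E}\|v\|_p^p)^{2/p}$ to $\mathbb{E}\|v\|_p^2$, then applies Jensen a second time via convexity of the $\ell_{p/2}$ norm to push the expectation inside, and only then drops the squared bias---so your route is actually a shade more direct. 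You were right to abandon the centering claim $\mathbb{E}|Y|^p\ge \mathbb{E}|Y-\mathbb{E}Y|^p$: it is false in general for $p\neq 2$, since the minimizer of $t\mapsto\mathbb{E}|Y-t|^p$ is not the mean.
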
  

Therefore to prove \Cref{thm:lower_bound_add_intro}, it suffices to prove a lower bound on ${\mathsf{tr}_{p/2}(\Sigma_{\mathcal M}(x))}$ for any additive noise private mechanism $\mathcal M(\cdot)$.  To start with, we give a statement bounding the bias of an additive noise mechanism. In particular,  using the Hölder's inequality, for $p\geq 2$, we have
  \begin{equation*}
    \begin{aligned}
      \|\mathbb{E}z\|_2^2 = \sum_{i\in [n]} (\mathbb E{z_i})^2 &\leq \left(\sum_{i\in n}(\mathbb E{z_i})^p\right)^{\frac{2}{p}}\cdot n^{(p-2)/{p}}\leq \left(\mathbb{E}[\|z\|_p^p] \right)^{2/p}\cdot n^{(p-2)/{p}}. 
    \end{aligned}
  \end{equation*}
    Taking the square root of both sides gives the following result.
\begin{lemma}\label{lem:big_bias}
  Fix $p\geq 2$. Let $\mathcal M(x) = Ax + z$ be an additive noise mechanism with $z\in \mathbb{R}^m$, then 
  $$\left(\mathbb{E}[\|\mathcal M(x) - Ax\|_p^p] \right)^{1/p} \geq \|\mathbb{E}z\|_2\cdot n^{(1/p-1/2)}.$$
\end{lemma}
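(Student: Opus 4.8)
The plan is to reduce the claim to a coordinatewise application of Jensen's inequality followed by the standard comparison between the $\ell_2$ and $\ell_p$ norms, exactly along the lines of the displayed computation preceding the statement. First, since $\mathcal M(x) = Ax + z$ with $Ax$ a deterministic vector, the error vector is precisely $z$, so $\mathbb{E}[\|\mathcal M(x) - Ax\|_p^p] = \mathbb{E}[\|z\|_p^p] = \sum_i \mathbb{E}[|z_i|^p]$. Next, for each coordinate $i$, convexity of $t \mapsto |t|^p$ (valid for $p \geq 1$) and Jensen's inequality give $|\mathbb{E}[z_i]|^p \leq \mathbb{E}[|z_i|^p]$, and summing over the coordinates yields $\sum_i |\mathbb{E}[z_i]|^p \leq \mathbb{E}[\|z\|_p^p]$.

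It then remains to pass from the $\ell_p$-norm of the bias vector $\mathbb{E}z$ to its $\ell_2$-norm. For $p \geq 2$, Hölder's inequality applied to the conjugate pair $(p/2, p/(p-2))$ (equivalently, the power-mean inequality) gives $\|\mathbb{E}z\|_2^2 = \sum_i (\mathbb{E}[z_i])^2 \leq n^{1-2/p} \big(\sum_i |\mathbb{E}[z_i]|^p\big)^{2/p}$. Chaining this with the previous bound gives $\|\mathbb{E}z\|_2^2 \leq n^{1-2/p} \big(\mathbb{E}[\|z\|_p^p]\big)^{2/p}$; taking square roots and rearranging produces the claimed inequality $\big(\mathbb{E}[\|\mathcal M(x) - Ax\|_p^p]\big)^{1/p} \geq n^{1/p - 1/2} \|\mathbb{E}z\|_2$.

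There is no substantive obstacle here, as this is a short estimate; the only points requiring care are to insert absolute values before invoking Jensen (so the argument is insensitive to the sign of the bias in each coordinate) and to apply the norm comparison in the correct direction, namely $\|v\|_2 \leq n^{1/2 - 1/p}\|v\|_p$ for $p \geq 2$. The lemma will be used later to rule out mechanisms whose bias is large, complementing the covariance-based lower bound obtained via \Cref{lem:reduce_to_trace_norm}.
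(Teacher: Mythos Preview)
Your argument is correct and follows essentially the same route as the paper: apply H\"older's inequality with exponents $(p/2,\,p/(p-2))$ to compare $\|\mathbb{E}z\|_2$ with $\|\mathbb{E}z\|_p$, then use coordinatewise Jensen to bound $\sum_i |\mathbb{E}z_i|^p$ by $\mathbb{E}[\|z\|_p^p]$. Your added remark about inserting absolute values before invoking Jensen is a welcome clarification that the paper's displayed computation glosses over.
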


Therefore, we can assume $\|\mathbb{E}[z]\|_2\leq {\gamma_{(p)}(A)n^{{(p-2)}/{2p}}\over \epsilon}$. Otherwise, due to Lemma \ref{lem:big_bias}, for all $x\in \mathbb{R}^n$, 
\[ 
\left(\mathbb{E}\left[\|\mathcal M(x) - Ax\|_p^p\right]\right)^{1/p}\geq {\|\mathbb{E}[z]\|_2 \over n^{{(p-2)}/{2p}}} > {\gamma_{(p)}(A)n^{{(p-2)}/{2p}}\over \epsilon n^{{(p-2)}/{2p}}} = \frac{\gamma_{(p)}(A)}{\epsilon} > \frac{(1-\delta')\gamma_{(p)}(A)}{{\epsilon}}.
\]

So, it suffices to prove a lower bound on ${\mathsf{tr}_{p/2}(\Sigma_{\mathcal M}(x))}$ for additive noise mechanisms with small bias.  
For this, we use a folklore trick~\citep {smith2016personal} that has been used frequently in the literature of differential privacy. It consists of the following steps: For distributions $\mathcal D$ and $\bar{\mathcal D}$ corresponding to the output distribution of a privacy-preserving mechanism on the neighboring dataset, we first define the support on which the privacy loss variable with respect to $\mathcal D$ and $\bar {\mathcal D}$ is bounded. Then we update the probability distribution $\mathcal D$ such that the privacy loss random variable with respect to the new distribution and $\bar{\mathcal D}$ is still bounded and the measure of the new distribution is close in some metric to $\mathcal D$. In more details, for any two distributions $P$ and $Q$ over $\Omega$ and $\epsilon>0$, let $S_{P,Q,\epsilon} := \left\{\omega\in \Omega: e^{-\epsilon} \leq \frac{P(\omega)}{Q(\omega)}\leq e^\epsilon\right\}$ be the subset of the ground set $\Omega$ in which $P$ and $Q$ are $\epsilon$-indistinguishable. Note that this is the same as $Bad_0$ in Kasiviswanathan and Smith~\cite{kasiviswanathan2014semantics}. As Nikolov and Tang \citet{nikolov2023general}, define 
\begin{equation}\label{eq:hat_p_intro}
 \quad \widehat{P} = \frac{Q(S_{P,Q,2\epsilon})}{P(S_{P,Q,2\epsilon})} P(T\cap S_{P,Q,2\epsilon}) + Q(T\backslash S_{P,Q,2\epsilon})
\end{equation}

where $T\subseteq \Omega$. This allows us to reduce differential privacy to {\em $\chi^2$-divergence} (\cref{eq:chisquare}) using Lemma 46 in \citet{nikolov2023general} (see Lemma \ref{lem:nikolov2023general_privacylossratio}). The following lemma (proven in \Cref{sec:prooflemhatP}) states that, for a small bias additive noise mechanism, if $\Omega\subseteq \mathbb{R}$ and $P$,$Q$ are distributions of some additive noise mechanism on neighboring datasets, then $|\mathbb{E}_{X\sim \widehat{P}}[X] - \mathbb{E}_{X\sim Q}[X]| $ cannot be small. 

\begin{lemma}\label{lem:hatP}
  Suppose the additive noise mechanism $\mathcal{M}(x) = Ax+z$ is $(\epsilon,\delta)$-differentially private. Fix any $\theta\in \mathbb{R}^m$. Let $M_{\theta}(x):\mathbb{R}^n \rightarrow \mathbb{R}$ such that $M_{\theta}(x) := \theta^\top A x + \theta^\top z$ where  $ \|\mathbb{E}[z]\|_2\leq {\gamma_{(p)}(A)\over \epsilon}n^{{(p-2)}/{2p}}$. Let  $\epsilon,\delta$ be such that $\delta' = \frac{2\delta}{1-e^{-\epsilon}} \leq \min\{\frac{1}{16}, \frac{\epsilon\cdot \kappa(A)\cdot n^{\frac{p-2}{2p}}}{12\gamma_{(p)}(A)},1-e^{-\epsilon}\}$. Then, for any $x\in \mathbb{R}^n$, there exists a neighboring dataset $x'$ such that if $P,Q$ are the distributions of $M_{\theta}(x)$ and $M_{\theta}(x')$ respectively, and let $\widehat{P}$ be the distribution defined in \cref{eq:hat_p_intro}, we have 
  \[
  |\mathbb{E}_{X\sim \widehat{P}}[X] - \mathbb{E}_{X\sim Q}[X]| \geq  \left(\frac{1}{2} - 2\delta'\right)\cdot \frac{w_{AB_1^n}(\theta)}{2} - \frac{17}{8}\sqrt{\delta' \mathsf{Var}[\theta^\top \mathcal M(x)]}.
  \]
\end{lemma}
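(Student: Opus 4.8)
The plan is to lower-bound $|\mathbb{E}_{\widehat P}[X]-\mathbb{E}_Q[X]|$ by first choosing the neighbor $x'$ so that the projected means $\theta^\top Ax$ and $\theta^\top Ax'$ differ by the full half-width of the sensitivity polytope in direction $\theta$, and then showing that passing from $P$ to $\widehat P$ moves the mean only by an amount controlled by the mass that $P$ and $Q$ place outside $S:=S_{P,Q,2\epsilon}$ and by $\mathsf{Var}[\theta^\top\mathcal M(x)]$.

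First I would fix $x'$. Since $B_1^n$, and hence $AB_1^n$, is origin-symmetric, $w_{AB_1^n}(\theta)=2\max_{\|v\|_1\le1}\theta^\top Av=2\|A^\top\theta\|_\infty$; if $A^\top\theta=0$ the claim is vacuous, so otherwise I pick a coordinate $i^\star$ attaining $\|A^\top\theta\|_\infty$ and set $x':=x-\operatorname{sign}\big((A^\top\theta)_{i^\star}\big)e_{i^\star}$, which is a neighbor of $x$ with $|\theta^\top A(x-x')|=w_{AB_1^n}(\theta)/2$. The key observation is that, since $z$ is independent of the input, the bias $\theta^\top\mathbb{E}[z]$ cancels in $\mathbb{E}_P[X]-\mathbb{E}_Q[X]=\theta^\top A(x-x')$, so — in contrast to the unbiased setting of \citet{nikolov2023general} — the bias plays no role here; for this reason I do not anticipate needing the small-bias hypothesis (nor the $\kappa(A)$-dependent part of the range of $\delta'$) for the displayed inequality itself, these being inherited from the reduction that precedes \Cref{thm:lower_bound_add_intro}.

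Next I would record what I need from $(\epsilon,\delta)$-DP. Applying the privacy inequality to $\{P/Q>e^{2\epsilon}\}$ and $\{P/Q<e^{-2\epsilon}\}$ gives $P(\Omega\setminus S),\,Q(\Omega\setminus S)\le\tfrac{\delta(e^\epsilon+e^{-\epsilon})}{e^\epsilon-1}\le\tfrac{2\delta}{1-e^{-\epsilon}}=\delta'$, so with $\alpha:=P(S)$, $\beta:=Q(S)\in[1-\delta',1]$ the measure $\widehat P$ in \cref{eq:hat_p_intro} is well defined, is a genuine probability measure (total mass $\tfrac\beta\alpha\alpha+(1-\beta)=1$ when $T=\Omega$), and $\tfrac\beta\alpha\le\tfrac1{1-\delta'}\le\tfrac{16}{15}$ by $\delta'\le\tfrac1{16}$. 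Because $\widehat P$ and $Q$ are both probability measures, $\mathbb{E}_{\widehat P}[X]-\mathbb{E}_Q[X]$ is invariant under shifting $X$ by a constant, so I may center at $\mathbb{E}_P[X]=0$; then $\mathbb{E}_P[X^2]=\mathsf{Var}[\theta^\top z]=\mathsf{Var}[\theta^\top\mathcal M(x)]=:V$, while $\mathbb{E}_Q[X]=-\theta^\top A(x-x')$ forces $\mathbb{E}_Q[X^2]=V+w_{AB_1^n}(\theta)^2/4$.

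The last step is the identity
\[
\mathbb{E}_{\widehat P}[X]-\mathbb{E}_Q[X]=\theta^\top A(x-x')-\frac{\beta}{\alpha}\int_{\Omega\setminus S}x\,\mathrm{d}P+\int_{\Omega\setminus S}x\,\mathrm{d}Q,
\]
obtained by writing $\int_S x\,\mathrm{d}P=\mathbb{E}_P[X]-\int_{\Omega\setminus S}x\,\mathrm{d}P$ and likewise for $Q$. Cauchy--Schwarz together with $P(\Omega\setminus S),Q(\Omega\setminus S)\le\delta'$ bounds the two tail integrals by $\sqrt{\delta'V}$ and $\sqrt{\delta'\big(V+w_{AB_1^n}(\theta)^2/4\big)}\le\sqrt{\delta'V}+\tfrac{w_{AB_1^n}(\theta)}{2}\sqrt{\delta'}$; combining with $|\theta^\top A(x-x')|=w_{AB_1^n}(\theta)/2$ and $\tfrac\beta\alpha\le\tfrac{16}{15}$ yields $|\mathbb{E}_{\widehat P}[X]-\mathbb{E}_Q[X]|\ge\tfrac{w_{AB_1^n}(\theta)}{2}(1-\sqrt{\delta'})-\tfrac{31}{15}\sqrt{\delta'V}$, which implies the stated bound since $1-\sqrt{\delta'}\ge\tfrac12-2\delta'$ on $[0,\tfrac1{16}]$ and $\tfrac{31}{15}\le\tfrac{17}{8}$. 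I expect this error accounting to be the only genuine difficulty: one must center at $\mathbb{E}_P[X]=0$ (rather than at $\mathbb{E}_Q[X]=0$ or the midpoint) so that the tail integral carrying the inflated second moment $V+w_{AB_1^n}(\theta)^2/4$ is the one \emph{not} multiplied by $\beta/\alpha$, confining the $w_{AB_1^n}(\theta)\sqrt{\delta'}$ loss to a single copy; everything else is routine, and the conceptual heart — that obliviousness makes the bias irrelevant to this mean gap — is precisely what lets us drop the unbiasedness assumption of \citet{nikolov2023general}.
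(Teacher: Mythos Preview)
Your proof is correct and takes a cleaner route than the paper. The paper works without centering and decomposes $|\mathbb{E}_{\widehat P}[X]-\mathbb{E}_Q[X]|\ge S_1-S_2$ with $S_1=\bigl|\tfrac{Q(S)}{P(S)}\mathbb{E}_P[X]-\mathbb{E}_Q[X]\bigr|$; expanding $\mathbb{E}_P[X]=\theta^\top Ax+\theta^\top\mathbb{E}[z]$ then forces the bias to surface multiplied by $\bigl|Q(S)/P(S)-1\bigr|\le 2\delta'$, and controlling that extra $2\delta'\,|\theta^\top\mathbb{E}[z]|$ term is precisely where the paper spends the small-bias hypothesis and the $\kappa(A)$-dependent constraint on $\delta'$ (it needs $\delta'\,|\mathbb{E}[\theta^\top z]|\lesssim w_{AB_1^n}(\theta)$ uniformly in $\theta$, whence the minimum width $\kappa(A)$). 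Your centering trick---exploiting that $\mathbb{E}_{\widehat P}[X]-\mathbb{E}_Q[X]$ is translation-invariant because $\widehat P$ is an honest probability measure---kills this term outright: after the shift, $\tfrac{\beta}{\alpha}\mathbb{E}_P[X]=0$, and only the clean gap $\theta^\top A(x-x')$ plus the two tail integrals remain. Consequently your intermediate bound $\tfrac{w_{AB_1^n}(\theta)}{2}(1-\sqrt{\delta'})-\tfrac{31}{15}\sqrt{\delta' V}$ is strictly stronger than the lemma's stated conclusion, and your observation that the small-bias and $\kappa(A)$ hypotheses are not actually needed for this lemma is correct; this in fact simplifies the downstream case analysis in the proof of \Cref{thm:lower_bound_add_intro} as well.
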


 We will use Lemma \ref{lem:hatP} to prove~\Cref{thm:lower_bound_add_intro}. To do so, we need to study the applicable range of $\delta'$ in Lemma \ref{lem:hatP}.  Fix any $\theta\in \mathbb{R}^m$. Given any $\epsilon \in (0, \frac{1}{2})$, let $\delta(A,\epsilon,n)$ be the maximum value of $\delta$ such that
\[\delta' = \frac{2\delta}{1-e^{-\epsilon}}  \leq \min\left\{\frac{1}{16},~1-e^{-\epsilon},~ \frac{\epsilon\cdot \kappa(A)\cdot  n^{\frac{2-p}{2p}}}{12\gamma_{(p)}(A)}\right\}.
\]

Note that $\delta'>0$ iff $\kappa(A)>0$ as other quantities are positive. We characterize when $\kappa(A)>0$ in \Cref{sec:prooflemkappa} that ensures $\delta'>0$ through the following lemma:
\begin{lemma}
\label{lem:kappa}
    $\kappa(A)>0$ if and only if $A$ has linearly independent rows.
\end{lemma}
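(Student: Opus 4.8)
The plan is to prove both directions of the equivalence using the definition of $\kappa(A)$ as the minimal width of the sensitivity polytope $AB_1^n$ over unit directions $\theta$, and the elementary fact that $w_{AB_1^n}(\theta) = 2\max_{\|x\|_1 \le 1} \theta^\top A x = 2\|A^\top \theta\|_\infty$ (since $B_1^n$ is symmetric and $\max_{\|x\|_1\le 1} \langle v, x\rangle = \|v\|_\infty$). Thus $\kappa(A) = 2\min_{\|\theta\|_2 = 1}\|A^\top\theta\|_\infty$, and the claim reduces to: $\min_{\|\theta\|_2=1}\|A^\top\theta\|_\infty > 0$ if and only if $A$ has linearly independent rows.

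\textbf{Direction 1 (dependent rows $\Rightarrow \kappa(A) = 0$).} Suppose the rows of $A$ are linearly dependent. Then there is a nonzero vector $\theta \in \mathbb{R}^m$ with $\theta^\top A = \mathbf 0$, i.e. $A^\top \theta = \mathbf 0$. Normalizing so that $\|\theta\|_2 = 1$, we get $\|A^\top\theta\|_\infty = 0$, hence $w_{AB_1^n}(\theta) = 0$, and since $\kappa(A)$ is a minimum of nonnegative quantities attained at this $\theta$, $\kappa(A) = 0$.

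\textbf{Direction 2 (independent rows $\Rightarrow \kappa(A) > 0$).} Suppose the rows of $A$ are linearly independent, so $A^\top$ has trivial kernel: $A^\top\theta = \mathbf 0$ forces $\theta = \mathbf 0$. The function $\theta \mapsto \|A^\top\theta\|_\infty$ is continuous and the unit sphere $\{\|\theta\|_2 = 1\}$ is compact, so the infimum $\min_{\|\theta\|_2=1}\|A^\top\theta\|_\infty$ is attained at some $\theta^\star$ with $\|\theta^\star\|_2 = 1$. If this minimum were $0$ we would have $A^\top\theta^\star = \mathbf 0$ with $\theta^\star \ne \mathbf 0$, contradicting linear independence. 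Hence the minimum is strictly positive, and $\kappa(A) = 2\min_{\|\theta\|_2=1}\|A^\top\theta\|_\infty > 0$. (Equivalently, one can note $\min_{\|\theta\|_2=1}\|A^\top\theta\|_\infty \ge \tfrac{1}{\sqrt m}\min_{\|\theta\|_2=1}\|A^\top\theta\|_2 = \tfrac{1}{\sqrt m}\sigma_{\min}(A^\top) > 0$ when $A$ has full row rank.)

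\textbf{Main obstacle.} There is essentially no deep obstacle here; the only thing requiring a moment's care is the identification $w_{AB_1^n}(\theta) = 2\|A^\top\theta\|_\infty$, which relies on the symmetry of $B_1^n$ so that $\min_{\|x\|_1\le 1}\theta^\top A x = -\max_{\|x\|_1\le 1}\theta^\top A x$, together with the dual characterization of the $\ell_1$-ball support function. Once that is in place, both implications are immediate from the kernel structure of $A^\top$ and compactness of the sphere.
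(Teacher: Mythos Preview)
Your proposal is correct and follows essentially the same approach as the paper: both reduce to the observation that $w_{AB_1^n}(\theta)=2\|A^\top\theta\|_\infty$ and then argue via the kernel of $A^\top$. Your write-up is in fact a bit more careful than the paper's, since you explicitly invoke compactness of the unit sphere to pass from ``positive for every $\theta$'' to ``strictly positive minimum,'' whereas the paper leaves that step implicit.
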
 

We will also need two geometric lemmas inspired by Nikolov and Tang \citet{nikolov2023general}, that connect $\ell_1$ geometry and $\ell_2$ geometry, and also to the factorization norm.
For $K,L \subseteq \mathbb R^m$, denote by $K \subseteq_{\leftrightarrow} L \Leftrightarrow \exists v\in \mathbb{R}^m, K + v \subseteq L.$ That is, $K\subseteq_{\leftrightarrow} L$ means that $K$ is covered by $L$ in terms of translation. We define 
$$\Lambda_p(A):= \inf_{W\in\mathbb{R}^{m\times m}} \left\{\sqrt{{\mathsf{tr}_{p/2}(WW^\top)}}: AB_{1}^n \subseteq_{\leftrightarrow} WB_2^m\right\}.$$

 The first lemma is similar to the one in Nikolov and Tang \citet{nikolov2023general}, but for a general mechanism (instead of only for unbiased mechanisms). This lemma shows that if the variance of one way marginal of an additive noise mechanism $\mathcal M(\cdot)$ is lower bounded by the square of the width of the sensitivity polytope $AB_1^n$, then $AB_1^n$ can be covered by $C\sqrt{\Sigma_\mathcal M(x)}B_2^m$ in terms of translation with proper $C$. 
  \begin{lemma}[Nikolov and Tang \citet{nikolov2023general}]\label{lem:var_cover}
    Let $\mathcal{M}:\mathbb{R}^{n}\rightarrow \mathbb{R}^{m}$ be any randomized mechanism and $A\in \mathbb{R}^{m\times n}$ be any matrix. If there exists some universal constant $C$ such that for any input $x\in \mathbb{R}^n$ and any $\theta\in \mathbb{R}^m$, it satisfies 
             ${\mathsf{Var}[\theta^\top \mathcal M(x)]} \geq \left(\frac{w_\theta(AB_1^n) }{C} \right)^2,$
    then $AB_1^n \subseteq_{\leftrightarrow} C\sqrt{\Sigma_{\mathcal{M}}(x)}B_2^m$.
\end{lemma}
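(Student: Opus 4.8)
The plan is to prove Lemma~\ref{lem:var_cover} by a duality argument in the spirit of the corresponding lemma in Nikolov and Tang~\citet{nikolov2023general}, adapting it so that the covering is stated up to translation (i.e.\ in the $\subseteq_{\leftrightarrow}$ sense) rather than exact containment, which is what allows us to avoid the unbiasedness assumption. First I would note that a convex body $K$ satisfies $K\subseteq_{\leftrightarrow} WB_2^m$ if and only if there is a translate of $K$ whose support function in every direction $\theta$ is at most $\|W^\top\theta\|_2$; equivalently, recalling the definition $w_\theta(K) = \max_{u\in K}\theta^\top u - \min_{u\in K}\theta^\top u$ of the \emph{width}, the centrally symmetrized body $\tfrac12(K-K)$ is contained in $WB_2^m$ exactly when $\tfrac12 w_\theta(K) \le \|W^\top\theta\|_2$ for all $\theta\in\mathbb R^m$. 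So it suffices to show that, under the hypothesis ${\mathsf{Var}[\theta^\top\mathcal M(x)]} \ge (w_\theta(AB_1^n)/C)^2$ for all $\theta$, the matrix $W = C\sqrt{\Sigma_{\mathcal M}(x)}$ satisfies $\tfrac12 w_\theta(AB_1^n) \le \|W^\top\theta\|_2$ for every $\theta$.

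The second step is the key computation: for every direction $\theta\in\mathbb R^m$ we have, by definition of the covariance matrix,
\[
\theta^\top \Sigma_{\mathcal M}(x)\,\theta = \mathsf{Var}[\theta^\top\mathcal M(x)],
\]
and since $\Sigma_{\mathcal M}(x)$ is positive semidefinite, $\|\sqrt{\Sigma_{\mathcal M}(x)}^\top\theta\|_2^2 = \theta^\top\Sigma_{\mathcal M}(x)\theta = \mathsf{Var}[\theta^\top\mathcal M(x)]$. Plugging in the hypothesis gives $\|W^\top\theta\|_2^2 = C^2\,\mathsf{Var}[\theta^\top\mathcal M(x)] \ge w_\theta(AB_1^n)^2$, hence $\|W^\top\theta\|_2 \ge w_\theta(AB_1^n) \ge \tfrac12 w_\theta(AB_1^n)$, which is (more than) what we need. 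Combining this with the geometric reformulation of the first step yields $AB_1^n \subseteq_{\leftrightarrow} C\sqrt{\Sigma_{\mathcal M}(x)}B_2^m$.

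The remaining work is to justify the geometric reformulation carefully, namely that for a compact convex set $K$ the symmetrized body $\tfrac12(K-K)$, translated appropriately, coincides with the relevant object, and that containment of a convex body $K'$ in an ellipsoid $WB_2^m$ is characterized by the support function inequality $h_{K'}(\theta)\le \|W^\top\theta\|_2$ for all $\theta$ (when $W$ is symmetric PSD, $WB_2^m$ has support function $\theta\mapsto\|W\theta\|_2$). One has to be a little careful when $\Sigma_{\mathcal M}(x)$ is singular: then $\sqrt{\Sigma_{\mathcal M}(x)}B_2^m$ is a lower-dimensional ellipsoid, but the hypothesis forces $w_\theta(AB_1^n)=0$ on the kernel of $\Sigma_{\mathcal M}(x)$, so $AB_1^n$ (after translation to put its center of symmetry at the origin) actually lies in the range of $\Sigma_{\mathcal M}(x)$ and the argument still goes through. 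I expect this degenerate case together with the precise statement of the support-function/ellipsoid duality to be the only mildly technical point; the core inequality chain is immediate once the covariance identity $\theta^\top\Sigma_{\mathcal M}(x)\theta = \mathsf{Var}[\theta^\top\mathcal M(x)]$ is in hand. The main conceptual obstacle — handling bias — has in fact already been absorbed into the use of widths $w_\theta(\cdot)$ and the translation-invariant relation $\subseteq_{\leftrightarrow}$, so no separate argument about $\mathbb E[\mathcal M(x)]$ is needed here.
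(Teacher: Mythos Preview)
Your approach is essentially the same as the paper's: both reduce to the support-function inequality $\|C\sqrt{\Sigma_{\mathcal M}(x)}\,\theta\|_2 \ge w_\theta(AB_1^n)$ via the identity $\theta^\top\Sigma_{\mathcal M}(x)\theta=\mathsf{Var}[\theta^\top\mathcal M(x)]$, and then conclude containment of a suitable translate. The paper phrases the last step slightly differently (it fixes some $v\in AB_1^n$, notes $0\in K_v:=AB_1^n-v$, and bounds $h_{K_v}(\theta)\le w_\theta(AB_1^n)$), but the content is the same.

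One point does need correcting. The general biconditional you assert in your first step,
\[
K\subseteq_{\leftrightarrow} WB_2^m \iff \tfrac12(K-K)\subseteq WB_2^m \iff \forall\theta,\ \tfrac12 w_\theta(K)\le \|W^\top\theta\|_2,
\]
is \emph{false} for arbitrary convex $K$: take $W=\mathbb I_2$ and $K$ an equilateral triangle of circumradius $2/\sqrt{3}$; then $\tfrac12(K-K)$ is a regular hexagon of circumradius $1$ (so $\tfrac12(K-K)\subseteq B_2^2$), yet no translate of $K$ fits in $B_2^2$. So ``it suffices to show $\tfrac12 w_\theta(AB_1^n)\le\|W^\top\theta\|_2$'' is not justified by your reformulation as stated.

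The argument is nonetheless salvageable in two ways, and you in fact already supply the ingredients. First, you actually prove the \emph{full-width} inequality $\|W^\top\theta\|_2\ge w_\theta(AB_1^n)$, which gives $AB_1^n-AB_1^n\subseteq WB_2^m$; picking any $v\in AB_1^n$ then yields $AB_1^n-v\subseteq AB_1^n-AB_1^n\subseteq WB_2^m$, i.e.\ $AB_1^n\subseteq_{\leftrightarrow}WB_2^m$ directly (this is exactly the paper's route). Second, and more simply, $AB_1^n$ is origin-symmetric, so $\tfrac12(AB_1^n-AB_1^n)=AB_1^n$ and $\tfrac12 w_\theta(AB_1^n)=h_{AB_1^n}(\theta)$; thus the half-width condition you wrote is in this particular case just $h_{AB_1^n}(\theta)\le h_{WB_2^m}(\theta)$, which already gives $AB_1^n\subseteq WB_2^m$ outright. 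Either way you should drop the general ``if and only if'' and instead invoke one of these two observations.
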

The original lemma in \cite{nikolov2023general} is only stated for unbiased mechanisms instead of general mechanisms. Thus, we include a proof in Appendix \ref{sec:proof_remove_assumptiion} (restated as Lemma \ref{lem:var_cover_restate}) for completeness. 

The final piece we need is a lemma implicit in Nikolov and Tang \citet{nikolov2023general} that connects $\Lambda_p(A)$ and the factorization norm $\gamma_{(p)}(A)$. 
\begin{lemma}[Nikolov and Tang \citet{nikolov2023general}]\label{lem:generalized_trace_norm}
  For any $p\in [2,\infty]$ and $A\in \mathbb{R}^{m\times n}$, 
  $\Lambda_p(A) \geq \gamma_{(p)}(A).$ 
\end{lemma}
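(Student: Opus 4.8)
\textbf{Proof proposal for Lemma~\ref{lem:generalized_trace_norm}.}
The plan is to exhibit, for any feasible $W$ in the definition of $\Lambda_p(A)$, a factorization $A=LR$ whose $\gamma_{(p)}$-cost is no larger than $\sqrt{\mathsf{tr}_{p/2}(WW^\top)}$; taking the infimum over $W$ then gives the claim. So suppose $W\in\mathbb{R}^{m\times m}$ satisfies $AB_1^n \subseteq_{\leftrightarrow} WB_2^m$, i.e.\ there is $v\in\mathbb{R}^m$ with $AB_1^n + v \subseteq WB_2^m$. Since $AB_1^n$ is symmetric about the origin (because $B_1^n$ is), symmetrizing the containment lets us drop the translation: if $AB_1^n+v\subseteq WB_2^m$ then $-AB_1^n-v = AB_1^n - v \subseteq WB_2^m$ as well (using $-B_2^m = B_2^m$), and averaging the two containments via convexity of $WB_2^m$ yields $AB_1^n \subseteq WB_2^m$ outright. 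Thus without loss of generality the covering is centered.

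Next I would translate the geometric containment $AB_1^n \subseteq WB_2^m$ into a statement about columns. The set $AB_1^n$ is the convex hull of $\{\pm A e_j : j\in[n]\}$ where $e_j$ are the standard basis vectors, so $AB_1^n \subseteq WB_2^m$ is equivalent to $A e_j \in WB_2^m$ for every $j$, i.e.\ each column $a_j$ of $A$ can be written as $a_j = W r_j$ with $\|r_j\|_2 \le 1$ — here I may assume $W$ is invertible (or pass to a limit / restrict to the range of $W$, since if some $a_j\notin \operatorname{range}(W)$ the containment fails). Setting $R = [r_1 \mid \cdots \mid r_n] \in \mathbb{R}^{m\times n}$ and $L = W$, we get $A = LR$ with $\|R\|_{1\to 2} = \max_j \|r_j\|_2 \le 1$ and $\sqrt{\mathsf{tr}_{p/2}(LL^\top)} = \sqrt{\mathsf{tr}_{p/2}(WW^\top)}$. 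Hence $\gamma_{(p)}(A) \le \sqrt{\mathsf{tr}_{p/2}(LL^\top)}\,\|R\|_{1\to 2} \le \sqrt{\mathsf{tr}_{p/2}(WW^\top)}$, and taking the infimum over all feasible $W$ gives $\gamma_{(p)}(A)\le \Lambda_p(A)$.

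The main subtlety — and the step I would be most careful about — is the handling of the translation $v$ together with rank-deficiency of $W$ or of $A$: one must make sure that symmetrization genuinely removes $v$ without changing $\mathsf{tr}_{p/2}(WW^\top)$, and that when $W$ is not full rank one can still extract bounded-norm preimages $r_j$ (choosing, say, the minimum-norm preimage in $\operatorname{range}(W^\top)$, which exists precisely because $a_j\in\operatorname{range}(W)$ as forced by the containment). A clean way to sidestep edge cases is a perturbation argument: replace $W$ by $W_t$ with $W_t W_t^\top = WW^\top + t\,\mathbb{I}_m$, run the argument for each $t>0$ where $W_t$ is invertible, obtaining $\gamma_{(p)}(A) \le \sqrt{\mathsf{tr}_{p/2}(W_t W_t^\top)}$, and let $t\to 0$ using continuity of $\mathsf{tr}_{p/2}$ and the fact that $AB_1^n \subseteq_\leftrightarrow WB_2^m$ implies $AB_1^n \subseteq_\leftrightarrow W_tB_2^m$ for all $t\ge 0$. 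Everything else is routine; since this lemma is attributed to Nikolov and Tang, I expect the write-up to largely cite their argument and only spell out the centering/rank reductions.
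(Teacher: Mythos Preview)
Your proposal is correct and follows essentially the same approach as the paper. The paper does not give a full proof; it simply cites Nikolov and Tang and offers the one-line sketch that from any feasible $W$ one extracts a factorization $A=LR$ with $L=W$ and $\|R\|_{1\to 2}\le 1$, which is exactly your construction --- you just spell out the symmetrization to remove the translation and the rank/perturbation details that the paper leaves implicit.
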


Now we are ready to prove \Cref{thm:lower_bound_add_intro}. 

\begin{proof}[Proof of \Cref{thm:lower_bound_add_intro}]. Let $\tilde{\epsilon} = 2\epsilon - \log(1-\delta')$. Note that, for every $\epsilon > 0$, we have $\delta' \leq 1-e^{-\epsilon}$, and thus $\tilde{\epsilon}\leq 2\epsilon + \epsilon \leq 3\epsilon$. Finally  $n^{1-2/p}\geq n^{-1}$. 
For any $x$ and $x'$ chosen in Lemma \ref{lem:hatP}, we consider two cases based on the variance, $\mathsf{Var}[\theta^\top \mathcal M(x)]$:

\begin{enumerate}
    \item
 \textbf{When ${ \mathsf{Var}[\theta^\top \mathcal M(x)]} < \frac{\left(w_{AB_1^n }(\theta) \right)^2}{256 \delta'}$}. By Lemma \ref{lem:hatP}, $|\mathbb{E}_{X\sim \widehat{P}}[X] - \mathbb{E}_{X\sim Q}[X]| $ is at least

\begin{equation*}
  \begin{aligned}
    \left(\frac{1}{2} - 2\delta'\right) \frac{w_{AB_1^n}(\theta)}{2} - \frac{17}{8}\sqrt{\delta' \mathsf{Var}[\theta^\top \mathcal M(x)]} \geq {1-8\delta' \over 8} {w_{AB_1^n}(\theta)}.
  \end{aligned}
\end{equation*}
  Note that $Q$ is the distribution of $\theta^\top \mathcal M(x')$, and ${\mathsf{Var}[\theta^\top \mathcal M(x)]} =  {\mathsf{Var}[\theta^\top \mathcal M(x')]}$ since the oblivious noise $\theta^\top z$ is independent of the input. Then, by the Hammersley-Chapman-Robins bound (Lemma \ref{lem:hcr_bound}), we have that for such a pair of datasets $(x,x')$:
      \begin{align}
        {\mathsf{Var}[\theta^\top \mathcal M(x)]} &=  {\mathsf{Var}[\theta^\top \mathcal M(x')]} \geq \frac{\left|\mathbb{E}_{X\sim \widehat{P}}[X] - \mathbb{E}_{X\sim Q}[X] \right|^2}{{\chi^2(\widehat{P}, \theta^\top \mathcal M(x'))}}\geq \frac{(1-8\delta')^2 \left({w_{AB_1^n}(\theta)}\right)^2}{64{\chi^2(\widehat{P}, Q)}} \nonumber \\
       & \geq \frac{(1-8\delta')^2 \left({w_{AB_1^n}(\theta)}\right)^2}{64e^{-\tilde{\epsilon}}(e^{\tilde{\epsilon}}-1)^2}. 
       \label{eq:case1}
      \end{align}
    Here, we used 
    Lemma \ref{lem:nikolov2023general_privacylossratio} that shows that $\widehat{P}$ and $Q$ are $\tilde{\epsilon}$-{\em indistinguishable} (Definition~\ref{sec:indistinguishable}). Thus $\chi^2(\widehat P, Q) \leq e^{-\tilde{\epsilon}}(e^{\tilde{\epsilon}} - 1)^2$ by Lemma 39 in \cite{nikolov2023general} (also see Lemma \ref{lem:from_DP_to_chisquare}). 

    \item  \textbf{When ${ \mathsf{Var}[\theta^\top \mathcal M(x)]} \geq  \frac{\left(w_{AB_1^n }(\theta) \right)^2}{256 \delta'}$}.  
First note that, when $\delta' \leq \epsilon^2 \leq \tilde{\epsilon}^2$, we have $\frac{1-8\delta'}{16\tilde{\epsilon}} \leq \frac{1}{16\sqrt{\delta'}}$. Therefore, for every $\theta\in \mathbb{R}^m$, as in the other case, for any $x$, 
\begin{align}
    { \mathsf{Var}[\theta^\top \mathcal M(x)]} \geq  \frac{(1-8\delta')^2\left(w_{AB_1^n}(\theta)\right)^2}{64e^{-3{\epsilon}}(e^{3{\epsilon}}-1)^2}. 
    \label{eq:case2}
\end{align}
 
\end{enumerate}

 Lemma \ref{lem:var_cover} with \cref{eq:case1} and \cref{eq:case2} implies that $AB_1^n \subseteq_{\leftrightarrow} C\sqrt{\Sigma_\mathcal M(x)}B_2^m$ where $C = \frac{16\tilde{\epsilon}}{1-8\delta'}$. 
So
\begin{equation}\label{eq:to_lambdaA}
      C^2 \cdot {\mathsf{tr}_{p/2}(\Sigma_\mathcal M(x))} \geq 
      \inf_{W\in\mathbb{R}^{m\times m}} \left\{{\mathsf{tr}_{p/2}(WW^\top)}: AB_{1}^n \subseteq_{\leftrightarrow} WB_2^m\right\} = (\Lambda_p(A))^2
  \end{equation}
  for all $p\geq 2$. 
That is, $\mathsf{tr}_{p/2}(\Sigma_\mathcal M(x)) \geq \left({(1-8\delta')\Lambda_p(A) \over 16 \widetilde \epsilon}\right)^2$. 

Combining Lemma~\ref{lem:reduce_to_trace_norm}, Lemma~\ref{lem:generalized_trace_norm}, and \cref{eq:to_lambdaA} therefore gives us the result: 
  \begin{equation*}
    \begin{aligned}
      \left(\mathbb{E}\left[\|\mathcal M(x) - Ax\|_p^p\right]\right)^{1/p} &\geq  \sqrt{\mathsf{tr}_{p/2}(\Sigma_{M}(X))} \geq \frac{(1-8\delta')\gamma_{(p)}(A)}{16\tilde{\epsilon}}.
    \end{aligned}
  \end{equation*}
The proof of Theorem \ref{thm:lower_bound_add_intro} is complete after replacing $\tilde{\epsilon}$ by $\epsilon$.
\end{proof}

\subsection{Proof of \texorpdfstring{\Cref{thm:lower_bound_DP_intro}}{Lg}}\label{sec:remove_assumptiion}
\Cref{thm:lower_bound_add_intro} assumes that rows of the linear query matrix is linearly independent (i.e., $\kappa(A)>0$), otherwise the lower bound reduces to the one for pure differential privacy. We next show that for additive noise mechanisms, this assumption can be removed in the most natural high privacy regime:

\begin{theorem}\label{thm:lower_bound_without_dependency_intro}
   Fix any $0<\epsilon<\frac{1}{2}$, $0\leq \delta \leq 1$, $p \in [2,\infty]$ and query matrix $A\in \mathbb{R}^{m\times n}$. If $M(\cdot)$ is an additive noise mechanism such that $\mathcal M(x) = Ax + z$ for any dataset $x\in \mathbb{R}^n$ and $M(\cdot)$ preserves $(\epsilon,\delta)$-differential privacy, then for every $x\in \mathbb{R}^n$, we have that there exists a universal constant $C$,
  $$\left(\mathbb{E}\left[\|\mathcal M(x) - Ax\|_p^p\right]\right)^{1/p} \geq \frac{(1-\delta')\gamma_{(p)}(A)}{C\epsilon}, \quad \text{where} \quad \delta' = {\frac{e^{1/2}-1}{1-e^{-\epsilon}}\delta}.$$
\end{theorem}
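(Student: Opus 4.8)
The plan is to reduce the general (rank-deficient) case to the full-rank case already handled by \Cref{thm:lower_bound_add_intro}, by perturbing the workload matrix $A$ into a nearby matrix $\widetilde{A}$ that has linearly independent rows, applying the full-rank bound to $\widetilde{A}$, and then showing that $\gamma_{(p)}(\widetilde{A})$ and the error on $\widetilde{A}$ are both close to the corresponding quantities for $A$. The natural construction is $\widetilde{A} = A + \eta B$ for a small parameter $\eta > 0$ and a fixed matrix $B$ (e.g. a truncation of the identity, or a generic matrix) chosen so that $\widetilde{A}$ has full row rank for all sufficiently small $\eta$; then $\kappa(\widetilde{A}) > 0$ by \Cref{lem:kappa}. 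The additive noise mechanism $\mathcal M$, which answers $Ax + z$, also answers $\widetilde{A}x + z$ up to the deterministic shift $\eta Bx$, so privacy is unaffected and the error changes by at most $\|\eta B x\|_p$, which we can control on the relevant test inputs.

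First I would fix the bad-$\delta$ regime issue: \Cref{thm:lower_bound_add_intro} only applies when $\delta \le \delta(\widetilde A,\epsilon,n) = \min\{\tfrac1{16},\epsilon^2,\tfrac{\epsilon\,\kappa(\widetilde A)\, n^{1-2/p}}{12\,\gamma_{(p)}(\widetilde A)}\}$, and the third term can be made arbitrarily small as $\eta \to 0$ (since $\kappa(\widetilde A)\to \kappa(A)$ which may be $0$). This is exactly where the high-privacy assumption $\epsilon < \tfrac12$ and the redefinition $\delta' = \tfrac{e^{1/2}-1}{1-e^{-\epsilon}}\delta$ must enter: the point is that in the high-privacy regime the lower bound for $\epsilon$-DP (the $\delta=0$ case, or rather the bound obtained by ignoring the $\delta$ slack) is already $\Omega(\gamma_{(p)}(A)/\epsilon)$, so one does not actually need $\delta$ to lie in the full-rank-admissible range — one runs the argument ``as if $\delta=0$'' and absorbs the genuine $\delta$ into the $(1-\delta')$ factor via the indistinguishability/ $\chi^2$ machinery (Lemmas \ref{lem:nikolov2023general_privacylossratio}, \ref{lem:from_DP_to_chisquare}, \ref{lem:hcr_bound}) with $\tilde\epsilon = 2\epsilon - \log(1-\delta')$, just as in the proof of \Cref{thm:lower_bound_add_intro}, but now tolerating any $\delta$ with $\delta' \le 1-e^{-\epsilon}$ because the $\kappa$-dependent constraint is only needed for the variance dichotomy and in the high-privacy regime the "large variance" branch alone already yields \cref{eq:case2}. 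Concretely, I would revisit the two-case split in the proof of \Cref{thm:lower_bound_add_intro}: the ``large variance'' case \cref{eq:case2} did not use $\kappa(A)$ at all, and for any $\theta$ for which the small-variance case would apply, one can instead pass to the perturbed $\widetilde A$ to make $\kappa(\widetilde A)>0$ and the bound robust, then take $\eta\to 0$.

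The cleanest execution, then, is: (i) for each unit $\theta$, apply the full-rank result to $\widetilde A_\eta = A + \eta B$ to get $\mathsf{tr}_{p/2}(\Sigma_{\mathcal M+\eta B\cdot}(x)) \ge \big(\tfrac{(1-8\delta')\gamma_{(p)}(\widetilde A_\eta)}{16\tilde\epsilon}\big)^2$, valid for $\eta$ small enough that $\delta$ is in-range for $\widetilde A_\eta$ — which holds because $\gamma_{(p)}(\widetilde A_\eta)$ stays bounded while $\kappa(\widetilde A_\eta)$ stays bounded away from $0$ for $\eta$ in a fixed small interval, *provided* we only need it for the finitely-many (or compactly-parametrized) directions where the small-variance branch is active; (ii) use continuity of $\gamma_{(p)}$ (it is a minimum over factorizations of a continuous function, hence upper semicontinuous, and one checks $\gamma_{(p)}(\widetilde A_\eta) \to \gamma_{(p)}(A)$, or at least $\liminf \ge \gamma_{(p)}(A) - O(\eta)$ by feeding an optimal factorization of $A$ plus a small correction) to pass to the limit $\eta \to 0$; (iii) note the noise mechanism's covariance is the same ($z$ is unchanged, the $\eta B x$ shift is deterministic), and $\|\mathcal M(x) - Ax\|_p = \|\mathcal M(x)-\widetilde A_\eta x\|_p + O(\eta)$, so \Cref{lem:reduce_to_trace_norm} transfers; (iv) conclude $(\mathbb E\|\mathcal M(x)-Ax\|_p^p)^{1/p} \ge \tfrac{(1-\delta')\gamma_{(p)}(A)}{C\epsilon}$ with $C$ absorbing the constants $16$, $e^{3\epsilon}-1 \le 6\epsilon$-type slack, and the $e^{1/2}$ factors coming from $\tilde\epsilon \le 2\epsilon + \tfrac12$ in the high-privacy regime.

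The main obstacle I expect is (ii)–(iii): controlling $\gamma_{(p)}(\widetilde A_\eta)$ and the admissible-$\delta$ range \emph{uniformly} as $\eta\to 0$. Upper semicontinuity of $\gamma_{(p)}$ gives $\limsup_{\eta\to 0}\gamma_{(p)}(\widetilde A_\eta) \le \gamma_{(p)}(A)$ easily (plug in $A$'s optimal factorization perturbed), but we need the reverse inequality $\liminf \ge \gamma_{(p)}(A)$, i.e. lower semicontinuity, which is the delicate direction — it should follow because any factorization $L_\eta R_\eta = \widetilde A_\eta$ with bounded $\sqrt{\mathsf{tr}_{p/2}(L_\eta L_\eta^\top)}\|R_\eta\|_{1\to2}$ has a convergent subsequence whose limit factorizes $A$, but one must rule out $R_\eta$ blowing up while $L_\eta$ shrinks, which requires a compactness/normalization argument. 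Alternatively, and more robustly, one sidesteps continuity entirely by choosing $B$ so that $\gamma_{(p)}(A + \eta B) \ge \gamma_{(p)}(A) - C'\eta\,\|B\|$ directly via the triangle-type inequality for $\gamma_{(p)}$ (which is a norm-like quantity), and then the whole argument goes through with an explicit $O(\eta)$ loss that vanishes in the limit. That triangle inequality for $\gamma_{(p)}$ — or a direct Lipschitz bound $|\gamma_{(p)}(A)-\gamma_{(p)}(A')| \le \gamma_{(p)}(A-A')$ and $\gamma_{(p)}(\eta B)\to 0$ — is the one auxiliary fact I would need to establish or cite, and it is the crux of making the rank-deficient reduction rigorous.
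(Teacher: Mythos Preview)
Your perturbation strategy has a genuine gap that you correctly identify but do not close. The admissible range $\delta\le\delta(\widetilde A_\eta,\epsilon,n)$ in \Cref{thm:lower_bound_add_intro} contains the term $\tfrac{\epsilon\,\kappa(\widetilde A_\eta)\,n^{1-2/p}}{12\gamma_{(p)}(\widetilde A_\eta)}$, and since $\kappa$ is continuous and $\kappa(A)=0$ in the rank-deficient case, $\kappa(\widetilde A_\eta)\to 0$ as $\eta\to 0$. Hence for any fixed $\delta>0$ the hypothesis of \Cref{thm:lower_bound_add_intro} eventually fails, and you cannot take the limit. Your proposed fix---``the large-variance branch \cref{eq:case2} did not use $\kappa(A)$''---misreads the case split: that branch is a \emph{hypothesis} on $\mathsf{Var}[\theta^\top\mathcal M(x)]$, not a conclusion, and for directions $\theta$ in the small-variance branch you are forced back to \Cref{lem:hatP}, whose applicability is exactly what fails as $\eta\to 0$. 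There is also a second, more structural problem: the mechanism $\widetilde{\mathcal M}(x)=\widetilde A_\eta x+z$ with the \emph{same} noise $z$ need not be $(\epsilon,\delta)$-DP, because $z$ was only assumed to mask shifts in $AB_1^n$, not in the strictly larger polytope $\widetilde A_\eta B_1^n$; and if instead you keep $\mathcal M(x)=Ax+z$ but evaluate it against $\widetilde A_\eta x$, the error $z-\eta Bx$ is no longer oblivious, so \Cref{thm:lower_bound_add_intro} does not apply.

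The paper takes a completely different route that bypasses $\kappa$ altogether and does not invoke \Cref{thm:lower_bound_add_intro}. It proves directly, for every direction $\theta$, that $\mathsf{Var}[\theta^\top\mathcal M(x)]\gtrsim \tfrac{(1-\delta')^2}{\epsilon^2}\,w_{AB_1^n}(\theta)^2$ (Lemma~\ref{lem:cov_cover_the_body_new}), using a group-privacy rescaling to convert $(\epsilon,\delta)$-DP into $(1/2,\delta')$-DP with $\delta'=\tfrac{e^{1/2}-1}{e^\epsilon-1}\delta$, followed by the second-moment lower bound of Kasiviswanathan et al.\ (Lemma~\ref{lem:krsu411}/\ref{lem:reprove_krsu}) applied to the one-dimensional query $\theta^\top A x$. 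This is where the restriction $\epsilon<\tfrac12$ and the specific form of $\delta'$ actually originate---not from any $\chi^2$/HCR machinery. With the per-direction variance bound in hand, Lemma~\ref{lem:var_cover} gives $AB_1^n\subseteq_{\leftrightarrow} C\epsilon\sqrt{\Sigma_{\mathcal M}(x)}\,B_2^m$, and the conclusion follows from Lemmas~\ref{lem:generalized_trace_norm} and~\ref{lem:reduce_to_trace_norm} exactly as before. No full-rank assumption, no perturbation, no continuity argument for $\gamma_{(p)}$ is needed.
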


Unlike the proof of \Cref{thm:lower_bound_add_intro}, we prove the above result using Lemma \ref{lem:cov_cover_the_body_new}, in which some of the technical ingredients are implicit in Kasiviswanathan et al. \cite[Lemma 4.12]{kasiviswanathan2010price}. Then we combine it with Nikolov and Tang \cite[Lemma 35]{nikolov2023general}. We defer the proof of \Cref{thm:lower_bound_without_dependency_intro} to Section \ref{sec:proof_remove_assumptiion}. Note that for general $\epsilon>0$, the analysis of \Cref{thm:lower_bound_add_intro} also naturally gives an $\Omega(\gamma_{(p)}(A)/(e^{3\epsilon} -1))$ lower bound when $A$ has full rank rows, while \Cref{thm:lower_bound_without_dependency_intro} only works for high privacy regime. 

To obtain a lower bound for general $(\epsilon,\delta)$-differentially private algorithm, we recall that the reduction in Bhaskar et al.~\cite{bhaskara2012unconditional} does not rely on the error metric. 
In particular, \Cref{thm:lower_bound_DP_intro} follows by combining \Cref{thm:lower_bound_without_dependency_intro} and the reduction given by the following theorem to get a worst-case lower bound for arbitrary mechanisms.
\begin{theorem}[Theorem 4.3 in Bhaskar et al. \citet{bhaskara2012unconditional}]\label{thm:from_oblivious_to_dependent}
  Fix any $A\in \mathbb{R}^{m\times n}$. Let $\mathcal{M}:\mathbb{R}^n\rightarrow \mathbb{R}^m$ be a $(\epsilon,\delta)$-differentially private algorithm. Then there exists a $(2\epsilon,e^\epsilon \delta)$-differentially private algorithm $\mathcal{M}':= Ax + z$ with oblivious $z$ such that $\mathsf{err}_{\ell_p^p}(\mathcal M',A) \leq \mathsf{err}_{\ell_p^p}(\mathcal M,A)$.
\end{theorem}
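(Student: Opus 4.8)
The plan is to obtain $\mathcal M'$ by \emph{averaging $\mathcal M$ over translations of its input} and then passing to a limit. For a parameter $R>0$ let $C_R:=[-R,R]^n$, draw $W_R\sim\mathrm{Unif}(C_R)$ independently of the coins of $\mathcal M$, set $z_R:=\mathcal M(W_R)-AW_R$, and define $\mathcal M'_R(x):=Ax+z_R$. By construction $z_R$ is independent of $x$, so $\mathcal M'_R$ already has the additive, oblivious form required. Its accuracy is immediate from Fubini: the error of $\mathcal M'_R$ on every input has the law of $z_R$, and
\[
\mathbb E\bigl[\|z_R\|_p^p\bigr]=\mathbb E_{W_R}\,\mathbb E\bigl[\|\mathcal M(W_R)-AW_R\|_p^p\,\big|\,W_R\bigr]\le\sup_{w\in\mathbb R^n}\mathbb E\bigl[\|\mathcal M(w)-Aw\|_p^p\bigr]=\mathsf{err}_{\ell_p^p}(\mathcal M,A)^p,
\]
uniformly in $R$ (if the right-hand side is $\infty$ there is nothing to prove).

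Next I would show $\mathcal M'_R$ is differentially private up to a vanishing error. Fix neighbouring $x,y$, write $h:=y-x$ with $\|h\|_1\le 1$, and fix measurable $S\subseteq\mathbb R^m$. Unfolding the average and substituting $w\mapsto w+h$ in the integral for $y$ (unit Jacobian, and $A(w+h)-Ay=Aw-Ax$) gives
\[
\Pr[\mathcal M'_R(x)\in S]=\tfrac{1}{\mathrm{vol}(C_R)}\!\int_{C_R}\!\Pr\bigl[\mathcal M(w)\in S+Aw-Ax\bigr]dw,\qquad \Pr[\mathcal M'_R(y)\in S]=\tfrac{1}{\mathrm{vol}(C_R)}\!\int_{C_R-h}\!\Pr\bigl[\mathcal M(w+h)\in S+Aw-Ax\bigr]dw.
\]
Since $w$ and $w+h$ are neighbouring for every $w$, $(\epsilon,\delta)$-DP of $\mathcal M$ gives $\Pr[\mathcal M(w)\in T]\le e^{\epsilon}\Pr[\mathcal M(w+h)\in T]+\delta$ pointwise; integrating over $C_R$, and then enlarging the integration domain from $C_R$ to $C_R-h$ at the cost of the boundary ratio $\xi_R:=\mathrm{vol}\bigl(C_R\triangle(C_R-h)\bigr)/\mathrm{vol}(C_R)=O(n/R)$ (which $\to 0$ for fixed $n$), one gets $\Pr[\mathcal M'_R(x)\in S]\le e^{\epsilon}\Pr[\mathcal M'_R(y)\in S]+\delta+e^{\epsilon}\xi_R$. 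Hence each $\mathcal M'_R$ is $(\epsilon,\delta+e^{\epsilon}\xi_R)$-DP and meets the accuracy bound.

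Finally I would take $R\to\infty$. The uniform moment bound $\sup_R\mathbb E[\|z_R\|_p^p]<\infty$ makes $\{z_R\}_{R>0}$ tight (Markov), so by Prokhorov some subsequence satisfies $z_{R_k}\Rightarrow z$; set $\mathcal M'(x):=Ax+z$ with this $z$ drawn independently of $x$. Accuracy passes to the limit by lower semicontinuity of $w\mapsto\|w\|_p^p$ under weak convergence: $\mathbb E[\|z\|_p^p]\le\liminf_k\mathbb E[\|z_{R_k}\|_p^p]\le\mathsf{err}_{\ell_p^p}(\mathcal M,A)^p$, i.e., $\mathsf{err}_{\ell_p^p}(\mathcal M',A)\le\mathsf{err}_{\ell_p^p}(\mathcal M,A)$. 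Privacy also passes to the limit: the $(\epsilon,\delta')$-DP inequality between $\mathcal M'_{R_k}(x)$ and $\mathcal M'_{R_k}(y)$ is the hockey-stick bound $E_{e^{\epsilon}}\!\bigl(\mathrm{law}(Ax+z_{R_k})\,\|\,\mathrm{law}(Ay+z_{R_k})\bigr)\le\delta'$, the two laws converge weakly (each is a continuous image of $z_{R_k}$), and $(P,Q)\mapsto E_{e^{\epsilon}}(P\|Q)$ is jointly lower semicontinuous under weak convergence $P_k\Rightarrow P,\ Q_k\Rightarrow Q$ (e.g., via its variational representation as a supremum over bounded continuous test functions), so $E_{e^{\epsilon}}\!\bigl(\mathrm{law}(Ax+z)\,\|\,\mathrm{law}(Ay+z)\bigr)\le\liminf_k(\delta+e^{\epsilon}\xi_{R_k})=\delta$. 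Thus $\mathcal M'$ is $(\epsilon,\delta)$-DP, which in particular yields the claimed $(2\epsilon,e^{\epsilon}\delta)$-DP (this is Theorem~4.3 of \cite{bhaskara2012unconditional}, whose own bookkeeping produces the slightly lossier constants stated there).

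The step I expect to be the main obstacle is this last one. Because $\mathbb R^n$ carries no translation-invariant probability measure, one cannot average over all translations at once, so the construction must run on a growing window $C_R$ and then take a limit; the care lies in (i) confirming that the boundary ratio $\xi_R$ is the only loss incurred at finite $R$, and (ii) arguing that the differential-privacy constraint — which is not an open- or closed-set probability and hence not automatically stable under weak limits — nevertheless survives, for which the lower semicontinuity of the hockey-stick divergence, together with the tightness supplied by the uniform $\ell_p^p$-moment bound, is exactly what is needed.
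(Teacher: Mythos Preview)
The paper does not give its own proof of this statement; it invokes Theorem~4.3 of Bhaskara et al.\ as a black box. Your argument is correct and in fact sharpens the conclusion.

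Your core construction --- set $z=\mathcal M(W)-AW$ for a uniformly random input $W$ --- is exactly the idea behind the cited reduction. The difference is in how the translation non-invariance of the sampling distribution is handled. Bhaskara et al.\ work over a bounded database domain where a uniform law exists but is not shift-invariant; absorbing the boundary defect costs them a second application of privacy, whence the $(2\epsilon,e^{\epsilon}\delta)$ parameters. You instead average over a growing cube $C_R\subset\mathbb R^n$ and pass to a weak limit: the boundary loss $\xi_R=O(1/R)$ vanishes, so the limiting oblivious mechanism is actually $(\epsilon,\delta)$-DP, strictly stronger than what is stated. The limit step is sound. Tightness of $\{z_R\}$ follows from the uniform $p$-th moment bound; the error bound survives because $w\mapsto\|w\|_p^p$ is nonnegative and continuous, hence its expectation is lower semicontinuous under weak convergence; and privacy survives because on a Polish space the hockey-stick divergence has the variational form
\[
E_\alpha(P\|Q)=\sup\Bigl\{\textstyle\int f\,dP-\alpha\int f\,dQ : f\in C_b,\ 0\le f\le 1\Bigr\},
\]
which is a supremum of functionals that are continuous in $(P,Q)$ under weak convergence, giving the joint lower semicontinuity you use. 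So your proof both fills in what the paper omits and shows that the factor-of-two loss in the privacy parameters is an artifact of the bounded-domain bookkeeping, not of the reduction itself.
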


\subsection{\texorpdfstring{Connecting $\gamma_{(p)}(\cdot)$ and Schatten-$1$ norm: Proof of \Cref{thm:main}}{Lg}}\label{sec:main_proof}
In previous sections, we established the connection between the hardness of privately answering linear queries defined by $A$ and the generalized factorization norm of $A$, denoted as $\gamma_{(p)}(A)$. However, expressing $\gamma_{(p)}(A)$ analytically for a general matrix $A$ can be difficult. To provide a more practical lower bound and facilitate potential applications, in the following lemma, we give a lower bound of $\gamma_{(p)}$ in terms of the Schatten-$1$ norm of $A$, which is simply the sum of singular values of $A$. 
\begin{lemma}\label{thm:lowerbound} 
Let $A \in \mathbb R^{m\times n}$ be any real matrix. It holds that $$\gamma_{(p)}(A) \geq  m^{1/p}\|A\|_1/\sqrt{m n}.$$
\end{lemma}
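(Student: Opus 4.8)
The plan is to lower-bound $\gamma_{(p)}(A) = \min_{LR=A}\sqrt{\mathsf{tr}_{p/2}(LL^\top)}\,\|R\|_{1\to 2}$ by fixing an arbitrary factorization $LR = A$ with $L\in\mathbb R^{m\times k}$, $R\in\mathbb R^{k\times n}$, and bounding each of the two factors $\sqrt{\mathsf{tr}_{p/2}(LL^\top)}$ and $\|R\|_{1\to 2}$ from below in terms of quantities that combine to give $\|A\|_1$. First I would handle $\|R\|_{1\to 2}$: by definition $\|R\|_{1\to 2} = \max_j \|R e_j\|_2$ is the largest $\ell_2$-norm of a column of $R$, so in particular $\|R\|_{1\to 2}^2 \geq \frac1n\sum_{j=1}^n \|Re_j\|_2^2 = \frac1n\|R\|_F^2$. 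For the other factor, I would use that $\mathsf{tr}_{p/2}(LL^\top) = \bigl(\sum_{i=1}^m ((LL^\top)_{ii})^{p/2}\bigr)^{2/p}$, and since this is (the square of) an $\ell_{p/2}$-norm of the nonnegative vector of diagonal entries, monotonicity of $\ell_q$-norms in $q\ge 1$ — or Hölder — gives $\mathsf{tr}_{p/2}(LL^\top) \geq m^{2/p-1}\sum_{i=1}^m (LL^\top)_{ii} = m^{2/p-1}\|L\|_F^2$.

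Combining these two bounds, any factorization satisfies
\[
\sqrt{\mathsf{tr}_{p/2}(LL^\top)}\,\|R\|_{1\to 2} \;\geq\; m^{1/p-1/2}\,\|L\|_F\cdot \frac{\|R\|_F}{\sqrt n} \;=\; \frac{m^{1/p}}{\sqrt{mn}}\,\|L\|_F\|R\|_F.
\]
So it remains to show $\|L\|_F\|R\|_F \geq \|A\|_1$ for every factorization $LR=A$; taking the minimum over factorizations then yields the claim. This last inequality is the standard characterization of the nuclear (Schatten-$1$) norm: $\|A\|_1 = \min_{LR=A}\|L\|_F\|R\|_F$. I would prove the direction I need directly — writing $A = \sum_\ell \sigma_\ell u_\ell v_\ell^\top$ by the SVD, for any $LR=A$ we have $\sigma_\ell = u_\ell^\top A v_\ell = u_\ell^\top L R v_\ell = \langle L^\top u_\ell, R v_\ell\rangle \leq \|L^\top u_\ell\|_2\|Rv_\ell\|_2$ by Cauchy–Schwarz, hence $\|A\|_1 = \sum_\ell \sigma_\ell \leq \sum_\ell \|L^\top u_\ell\|_2\|Rv_\ell\|_2 \leq \sqrt{\sum_\ell\|L^\top u_\ell\|_2^2}\sqrt{\sum_\ell\|Rv_\ell\|_2^2}$ by Cauchy–Schwarz again, and since $\{u_\ell\}$, $\{v_\ell\}$ extend to orthonormal bases the two sums are at most $\|L\|_F^2$ and $\|R\|_F^2$ respectively.

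I do not expect a serious obstacle here; the only points requiring care are getting the exponent of $m$ right in the $\ell_{p/2}$-to-$\ell_1$ comparison (note $p/2 \ge 1$ since $p\ge 2$, so the inequality goes the direction claimed, with the factor $m^{2/p-1}\le 1$), and confirming that the rank parameter $k$ of the factorization plays no role — it does not, since both $\|L\|_F\|R\|_F \ge \|A\|_1$ and the diagonal-trace bound hold for any inner dimension. The case $p=\infty$ degenerates to $\mathsf{tr}_\infty(LL^\top)=\max_i(LL^\top)_{ii}\ge \frac1m\|L\|_F^2$, which is consistent with the formula $m^{1/p}\|A\|_1/\sqrt{mn}$ at $p=\infty$, so the statement is uniform in $p\in[2,\infty]$.
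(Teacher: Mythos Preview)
Your proof is correct. The approach, however, differs from the paper's. The paper invokes a variational characterization of $\gamma_{(p)}$ from Nikolov--Tang,
\[
\gamma_{(p)}(A)=\max\{\gamma_{(2)}(DA):D\succeq 0\ \text{diagonal},\ \mathsf{tr}_q(D^2)=1\},\qquad q=\tfrac{p}{p-2},
\]
plugs in the particular choice $D=m^{1/p-1/2}I$, and then cites the known bound $\gamma_{(2)}(A)\ge\|A\|_1/\sqrt{n}$ from Henzinger et al. You instead work directly from the definition: for an arbitrary factorization $LR=A$ you bound $\sqrt{\mathsf{tr}_{p/2}(LL^\top)}\ge m^{1/p-1/2}\|L\|_F$ and $\|R\|_{1\to 2}\ge\|R\|_F/\sqrt{n}$, and then prove $\|L\|_F\|R\|_F\ge\|A\|_1$ from scratch via the SVD. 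Your route is more elementary and fully self-contained---it needs neither the dual characterization of $\gamma_{(p)}$ nor the cited $\gamma_{(2)}$ bound (indeed your argument reproves the latter along the way). The paper's route is shorter if one is willing to quote those results as black boxes. One small wording issue: the inequality $\mathsf{tr}_{p/2}(LL^\top)\ge m^{2/p-1}\|L\|_F^2$ follows from H\"older (equivalently the power-mean inequality), not from monotonicity of $\ell_q$-norms, which goes the other way; but your stated inequality and its use are correct.
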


\begin{proof}
By Nikolov and Tang \citet[Theorem 23]{nikolov2023gaussian} and Lemma 27 in \citet{nikolov2023general}, for any $p>2$, the $\gamma_{(p)}$-norm of $A$ can be rewritten as the following optimization problem:
$$\gamma_{(p)}(A) = \max \{\gamma_{(2)}(DA): D \text{ is diagonal }, D \succeq 0, \mathrm{Tr}_q(D^2) = 1\}$$
where $q = \frac{p}{p-2}$. Let $D = m^{\frac{1}{p} - \frac{1}{2}} I$, then $D$ is a diagonal PSD matrix and $\mathrm{Tr}_q(D^2) = m^{\frac{2}{p} - 1}\cdot m^{\frac{1}{q}} = 1.$
Using  Henzinger et al. \citet[Lemma 1.1]{henzinger2023almost}, we therefore have 
{
$$\gamma_{(p)}(A) \geq m^{1/p - 1/2}\cdot \gamma_{(2)}(I\cdot A) = m^{{1}/{p} - {1}/{2}}\cdot \gamma_{(2)}( A) \geq   {m^{1/p - 1/2}\|A\|_1 \over \sqrt{n}},$$}
completing the proof.
\end{proof}

\Cref{thm:main} directly follows from Lemma~\ref{thm:lowerbound} and \Cref{thm:lower_bound_DP_intro}.

\subsection{\texorpdfstring{Application I: Tight lower bound for private prefix sum with $\ell_p^p$ error}{Lg}}
\label{sec:prefix_sum_proof}

So far, we have seen that the lower bounds on privately answering linear queries depend on $\gamma_{(p)}(A)$. In this section, we focus on a fundamental type of query: prefix sum and establish an explicit bound that underpins Theorem \ref{thm:continual_counting} by giving tight upper and lower bounds of $\gamma_{(p)}(A)$ and $\kappa(A)$ for such a specific $A$. In particular, given $n\in \mathbb{N}_+$, we consider the prefix sum (i.e., continual counting) matrix $A_{\mathsf{prefix}}$, whose $(i,j)$-th entry is
\begin{align}
  A_{\mathsf{prefix}}[i,j] = \begin{cases}
      1 & i \geq j \\
      0 & \text{otherwise}
  \end{cases}
  \label{def:count}
  \end{align}
be the matrix computing prefix sum of the dataset $x\in \mathbb{R}^n$. We first give the following lower bound on private prefix sum:

\begin{theorem}
    \label{thm:lb_on_prefix_sum}
  Fix any $\epsilon\in (0,\frac{1}{6})$ and $p \in [2,\infty]$. Then, for any $\delta \leq C_{\epsilon}{n^{1/p-1/2}}$ where $$C_{\epsilon} =\min\left\{\frac{1}{12}\frac{\epsilon(1-e^{-\epsilon})e^{-\epsilon}}{ (1+\ln(4n/5)/\pi)}, \frac{\epsilon^2 e^{-\epsilon}(1-e^{-\epsilon})}{2}\right\},$$ if $M:\mathbb{R}^n\rightarrow \mathbb{R}^m$ preserves $(\epsilon,\delta)$-differential privacy, then 
  $$\max_{x\in \mathbb{R}^n}\left(\mathbb{E}\left[\|\mathcal M(x) - A_{\mathsf{prefix}}x\|_p^p\right]\right)^{1/p} \geq (1-\tilde{\delta})\cdot \frac{n^{1/p}\log n}{96\epsilon} \quad \text{where} \quad \tilde{\delta} = {2\delta e^{\epsilon} \over(1-e^{-\epsilon})}.$$ 
\end{theorem}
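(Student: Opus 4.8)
The plan is to apply the general machinery of \Cref{thm:lower_bound_DP_intro} (or rather its additive-noise precursor \Cref{thm:lower_bound_add_intro}) to the specific matrix $A_{\mathsf{prefix}}$, and to do so I need explicit lower bounds on the two quantities that control the bound: the generalized factorization norm $\gamma_{(p)}(A_{\mathsf{prefix}})$ and the minimum width $\kappa(A_{\mathsf{prefix}})$ of the sensitivity polytope. For $\gamma_{(p)}$, the key is the scaling identity $\gamma_{(p)}(A) \geq m^{1/p-1/2}\gamma_{(2)}(A)$ (the $D = m^{1/p-1/2}I$ choice used in the proof of \Cref{thm:lowerbound}), combined with the well-known fact that $\gamma_{(2)}(A_{\mathsf{prefix}}) = \Theta(\log n)$ — more precisely $\gamma_{(2)}(A_{\mathsf{prefix}}) \geq \frac{1}{\pi}(\log n + 1)$ or a similar explicit constant coming from Henzinger et al. This yields $\gamma_{(p)}(A_{\mathsf{prefix}}) \gtrsim m^{1/p - 1/2}\log n = n^{1/p-1/2}\log n$ since $m = n$ here.

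Next I would compute $\kappa(A_{\mathsf{prefix}})$, the width of $A_{\mathsf{prefix}} B_1^n$ in its narrowest direction. Since $A_{\mathsf{prefix}}$ is lower-triangular with all ones, it is invertible, so by \Cref{lem:kappa} we have $\kappa > 0$; the width in direction $\theta$ is $2\max_{\|x\|_1 \le 1}\theta^\top A_{\mathsf{prefix}} x = 2\|A_{\mathsf{prefix}}^\top \theta\|_\infty$, and minimizing over unit $\theta$ gives $\kappa(A_{\mathsf{prefix}}) = 2\min_{\|\theta\|_2=1}\|A_{\mathsf{prefix}}^\top\theta\|_\infty = 2/\|(A_{\mathsf{prefix}}^{-1})^\top\|_{\infty\to 2} = 2/\|A_{\mathsf{prefix}}^{-1}\|_{2\to 1}$-type quantity; a direct estimate shows $\kappa(A_{\mathsf{prefix}})$ is bounded below by a constant (indeed by $2/\sqrt n$ crudely, or $\Theta(1)$ with care, since $A_{\mathsf{prefix}}^{-1}$ is the difference matrix). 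The precise constant is what feeds into the admissible range of $\delta$: plugging $\kappa \gtrsim$ (const) and $\gamma_{(p)} \asymp n^{1/p-1/2}\log n$ into $\delta(A,\epsilon,n) = \min\{\frac1{16},\epsilon^2, \frac{\epsilon\kappa n^{1-2/p}}{12\gamma_{(p)}}\}$ gives the third term $\asymp \frac{\epsilon n^{1-2/p}}{n^{1/p-1/2}\log n} \cdot \text{const} = \Theta\!\big(\frac{\epsilon n^{(2-p)/(2p)}\cdot n^{?}}{\log n}\big)$ — I would carefully track exponents to match the stated $C_\epsilon n^{1/p-1/2}$ form, absorbing the $(1+\ln(4n/5)/\pi)$ denominator from the explicit $\gamma_{(2)}$ lower bound.

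With these two estimates in hand, the theorem follows by invoking \Cref{thm:lower_bound_add_intro} (for additive noise mechanisms) together with the reduction \Cref{thm:from_oblivious_to_dependent} of Bhaskar et al.\ to pass to arbitrary $(\epsilon,\delta)$-DP mechanisms, giving $\mathsf{err}_{\ell_p^p}(\mathcal M, A_{\mathsf{prefix}}) \gtrsim \frac{(1-\tilde\delta)\gamma_{(p)}(A_{\mathsf{prefix}})}{e^{3\epsilon}-1} \gtrsim \frac{(1-\tilde\delta)n^{1/p-1/2}\log n}{\epsilon}$ in the high-privacy regime $\epsilon < 1/6$. Wait — the target bound is $n^{1/p}\log n$, not $n^{1/p-1/2}\log n$, so the worst-case error in the theorem is over $x \in \mathbb{R}^n$ (not normalized), and the extra $\sqrt n = n^{1/2}$ comes from the fact that $\gamma_{(p)}$ already accounts for the $\|R\|_{1\to 2}$ scaling; re-reading, $\gamma_{(2)}(A_{\mathsf{prefix}}) \geq \|A_{\mathsf{prefix}}\|_1/\sqrt n$ combined with $\|A_{\mathsf{prefix}}\|_1 \gtrsim n\log n$ (sum of singular values) does give $\gamma_{(p)} \gtrsim n^{1/p}\log n$ directly via \Cref{thm:lowerbound}-style reasoning, bypassing the normalization confusion. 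So I would use $\gamma_{(p)}(A_{\mathsf{prefix}}) \geq n^{1/p-1/2}\gamma_{(2)}(A_{\mathsf{prefix}})$ and the bound $\gamma_{(2)}(A_{\mathsf{prefix}}) \geq \frac{\sqrt n(\log n + 1)}{\pi}$ from Henzinger et al., yielding $\gamma_{(p)}(A_{\mathsf{prefix}}) \geq \frac{n^{1/p}(\log n+1)}{\pi}$, and then the $96\epsilon$ denominator in the statement comes from composing the $8(e^{3\epsilon}-1) \le 48\epsilon$ constant of \Cref{thm:lower_bound_add_intro} with the factor-$2$ loss in the Bhaskar reduction and the explicit $\pi$. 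The main obstacle I anticipate is the bookkeeping on $\kappa(A_{\mathsf{prefix}})$ and propagating its exact value through the $\delta$-range so that the final admissible $\delta \le C_\epsilon n^{1/p-1/2}$ has clean constants — the conceptual content is a straightforward instantiation, but getting the stated $C_\epsilon$ requires the sharp explicit lower bound on $\gamma_{(2)}(A_{\mathsf{prefix}})$ and a non-lossy estimate of $\kappa$, both of which I would derive in a short lemma preceding the main argument.
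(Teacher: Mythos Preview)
Your proposal is correct and follows essentially the same approach as the paper: lower-bound $\gamma_{(p)}(A_{\mathsf{prefix}})$ via the diagonal choice $D=n^{1/p-1/2}I$ together with the Henzinger et al.\ bound on $\gamma_{(2)}(A_{\mathsf{prefix}})$, establish a constant lower bound on $\kappa(A_{\mathsf{prefix}})$ (the paper proves $\kappa(A_{\mathsf{prefix}})=2$ in a separate lemma), and then instantiate \Cref{thm:lower_bound_add_intro} combined with the Bhaskar et al.\ reduction (\Cref{thm:from_oblivious_to_dependent}) to pass from additive-noise to general mechanisms. The only substantive difference is that the paper computes $\kappa(A_{\mathsf{prefix}})$ by a direct inductive argument rather than your dual-norm formulation, and uses the \emph{upper} bound $\gamma_{(p)}\le\gamma_F$ (not $\gamma_{(2)}$) when determining the admissible range of $\delta$---this is what produces the $(1+\ln(4n/5)/\pi)$ factor in $C_\epsilon$.
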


Next, we show that there exists a factorization of $A_{\mathsf{prefix}}$ such that the $\ell_p^p$-error of the matrix mechanism is bounded by $O(n^{1/p}\log(n))$ implying the lower bound in \Cref{thm:lb_on_prefix_sum} is optimal for $p=O(1)$ proving \Cref{thm:continual_counting}. If $p=\Omega(1)$, then this lower bound is near optimal with only a $\Theta(\sqrt{\log n})$ gap.

\begin{theorem}\label{thm:ub_on_countinul_counting}
    Fix parameters $\epsilon>0$ and $0<\delta<1$. Given any $x\in \mathbb{R}^n$, there exists a $(\epsilon,\delta)$-differentially private matrix mechanism $M$ such that
\[
\left(\mathbb E[\| M(x) - A_{\mathsf{prefix}}x\|_p^p] \right)^{1/p} \leq   {3n^{1/p}\log n \over \epsilon} \sqrt{\log(1/\delta) \cdot \min\{ p, \log(n) \} \over 2 } .
\]
\end{theorem}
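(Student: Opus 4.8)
The plan is to exhibit an explicit factorization $A_{\mathsf{prefix}} = LR$ and bound the $\ell_p^p$-error of the associated Gaussian matrix mechanism $\mathcal M(x) = L(Rx + z)$ with $z \sim \mathcal N(\mathbf 0, \sigma^2 \mathbb I_k)$. The error vector is $Lz$, a centered Gaussian vector in $\mathbb R^n$ whose $i$-th coordinate has variance $\sigma^2 \|L_i\|_2^2$ (where $L_i$ is the $i$-th row of $L$). For a centered Gaussian $g$ with variance $\tau^2$ we have $\mathbb E[|g|^p] = \tau^p \cdot c_p$ with $c_p = O(\sqrt p)^p$ (the $p$-th absolute moment of a standard Gaussian, controlled via $c_p^{1/p} = O(\sqrt p)$, and also trivially $c_p^{1/p} = O(\sqrt{\log n})$ once we take the max over $n$ coordinates and use a union/tail bound — this is where the $\min\{p,\log n\}$ enters). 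Hence
\[
\left(\mathbb E\big[\|Lz\|_p^p\big]\right)^{1/p} = \sigma \cdot c_p^{1/p} \cdot \left(\sum_{i=1}^n \|L_i\|_2^p\right)^{1/p} \le \sigma \cdot O\!\left(\sqrt{\min\{p,\log n\}}\right) \cdot n^{1/p} \max_i \|L_i\|_2 .
\]
So the task reduces to choosing $LR = A_{\mathsf{prefix}}$ minimizing $\|R\|_{1\to 2} \cdot \max_i \|L_i\|_2 = \|L\|_{2\to\infty}\|R\|_{1\to 2}$, i.e. essentially $\gamma_2(A_{\mathsf{prefix}})$, while simultaneously being able to calibrate $\sigma$ to the privacy requirement: Gaussian mechanism with sensitivity $\|R\|_{1\to 2}$ is $(\epsilon,\delta)$-DP for $\sigma = \|R\|_{1\to2}\sqrt{2\log(1/\delta)}/\epsilon$ (standard analytic Gaussian mechanism bound).

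The key step is thus to supply a good factorization of the lower-triangular all-ones matrix with $\|L\|_{2\to\infty} = O(\log n)$ and $\|R\|_{1\to 2} = O(1)$, or more directly a factorization with $\max_i\|L_i\|_2 \cdot \|R\|_{1\to 2} = O(\log n)$. This is exactly what the binary tree mechanism provides: take $R$ to be the incidence matrix of the dyadic intervals (each of the $n$ prefixes is a disjoint union of $O(\log n)$ dyadic blocks, each column of $R$ has at most $\lceil\log_2 n\rceil$ ones appropriately normalized), and $L$ the $0/1$ matrix selecting which dyadic blocks sum to each prefix. Then $\|R\|_{1\to 2} = O(1)$ (each standard basis vector maps to a vector with $O(1)$ weight after the right scaling) and each row of $L$ has at most $\lceil \log_2 n\rceil$ nonzero $\pm 1$ entries, so $\|L_i\|_2 \le \sqrt{\lceil\log_2 n\rceil} = O(\sqrt{\log n})$. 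Alternatively, one can cite the known near-optimal explicit factorization giving $\gamma_2(A_{\mathsf{prefix}}) = \Theta(\log n)$ with the optimal split $\|L\|_{2\to\infty} = \Theta(\log n)$, $\|R\|_{1\to2} = \Theta(1)$; plugging the binary-tree numbers $\sigma = O(\sqrt{\log(1/\delta)}/\epsilon)$, $\max_i\|L_i\|_2 = O(\sqrt{\log n})$ is slightly lossy (gives $\log^{3/2} n$ inside), so to hit the stated $n^{1/p}\log n$ one wants the factorization whose $2\to\infty$ norm of $L$ is $O(\log n)$ while rows of $L$ still have $\ell_2$-norm $O(\log n)$ — i.e. one uses $\max_i\|L_i\|_2 = \|L\|_{2\to\infty} = O(\log n)$ together with $\|R\|_{1\to2} = O(1)$ and $\sigma = \|R\|_{1\to2}\sqrt{2\log(1/\delta)}/\epsilon$, yielding the claimed $\tfrac{3n^{1/p}\log n}{\epsilon}\sqrt{\log(1/\delta)\min\{p,\log n\}/2}$ after absorbing constants.

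Concretely, the steps I would carry out in order: (i) recall/compute the moment bound $(\mathbb E\|Lz\|_p^p)^{1/p} \le \sigma (\sum_i \|L_i\|_2^p)^{1/p}\cdot \beta_p$ with $\beta_p := (\mathbb E_{g\sim\mathcal N(0,1)}|g|^p)^{1/p}$, and show $\beta_p \le C\sqrt{\min\{p,\log(n)\}}$ — for $p \le \log n$ use $\beta_p \le C\sqrt p$ directly; for $p > \log n$ bound $\|Lz\|_p \le n^{1/p}\|Lz\|_\infty$ and use a Gaussian maximal inequality $\mathbb E[\max_i |({Lz})_i|^p]^{1/p} \le C\sqrt{\log n}\,\sigma\max_i\|L_i\|_2$; (ii) bound $(\sum_i\|L_i\|_2^p)^{1/p} \le n^{1/p}\max_i\|L_i\|_2 \le n^{1/p}\|L\|_{2\to\infty}$; (iii) fix the explicit binary-tree factorization $LR = A_{\mathsf{prefix}}$ and verify $\|R\|_{1\to 2} = O(1)$ and $\|L\|_{2\to\infty} = O(\log n)$; (iv) set $\sigma = \|R\|_{1\to2}\sqrt{2\log(1/\delta)}/\epsilon$, invoke the Gaussian mechanism privacy guarantee, and multiply the three bounds, tracking constants to land at $\le \tfrac{3 n^{1/p}\log n}{\epsilon}\sqrt{\log(1/\delta)\min\{p,\log n\}/2}$. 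The main obstacle is step (iii) combined with constant-tracking: getting $\|L\|_{2\to\infty} = O(\log n)$ with small enough constant while keeping $\|R\|_{1\to2}$ an absolute constant (not $\sqrt{\log n}$) requires the right normalization of the dyadic-interval factorization, and step (i)'s unified treatment of the two regimes through the $\min\{p,\log n\}$ switch needs a careful Gaussian tail/maximal-inequality argument rather than a black-box moment formula.
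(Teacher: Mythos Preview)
Your overall strategy is correct and matches the paper's: exhibit a factorization $A_{\mathsf{prefix}}=LR$ with $\sqrt{\mathsf{tr}_{p/2}(LL^\top)}\cdot\|R\|_{1\to 2}=O(n^{1/p}\log n)$, then apply the Gaussian mechanism and bound the $p$-th moment. The paper does not re-derive your step~(i); it simply invokes the general upper bound (\Cref{thm:upper_bound}) as a black box, whose proof already contains the $\min\{p,\log n\}$ moment switch.

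The paper also uses a different explicit factorization than the binary tree: the square-root (Toeplitz) factorization of Fichtenberger et al., where $L=R$ are lower-triangular Toeplitz with entries given by the Wallis coefficients $f(k)\le 1/\sqrt{\pi k}$. There one computes $\|R\|_{1\to 2}^2=\sum_{i=0}^{n-1}f(i)^2=O(\log n)$ and $\|L_i\|_2^2=O(\log i)$ for every row, so $\sqrt{\mathsf{tr}_{p/2}(LL^\top)}=O(n^{1/p}\sqrt{\log n})$ and the product is $O(n^{1/p}\log n)$.

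Your binary-tree numbers in step~(iii) are muddled, and this is the one place that would not compile into a correct proof as written. For the standard binary-tree factorization, each index lies in $\Theta(\log n)$ dyadic intervals, so $\|R\|_{1\to 2}=\Theta(\sqrt{\log n})$, not $O(1)$; and each prefix decomposes into at most $\lceil\log_2 n\rceil$ dyadic blocks, so $\|L\|_{2\to\infty}=\Theta(\sqrt{\log n})$, not $\Theta(\log n)$. No normalization can make $\|R\|_{1\to 2}=O(1)$ without pushing the factor back into $L$. The good news is that the product is still $O(\log n)$, and plugging $(\sum_i\|L_i\|_2^p)^{1/p}\le n^{1/p}\sqrt{\log n}$ together with $\sigma=O(\sqrt{\log n\cdot\log(1/\delta)}/\epsilon)$ gives exactly the stated bound; there is no $\log^{3/2}n$ loss. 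So once you correct the split to $(\sqrt{\log n},\sqrt{\log n})$, your argument goes through and is equivalent to the paper's.
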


The upper and lower bound in \Cref{thm:continual_counting} directly follows from \Cref{thm:lb_on_prefix_sum} and \Cref{thm:ub_on_countinul_counting}. We defer the proof of Theorem \ref{thm:ub_on_countinul_counting} to Appendix \ref{sec:proof_ub_prefixsum}.

\subsubsection{\texorpdfstring{Proof of \Cref{thm:lb_on_prefix_sum}}{Lg}}
 To prove this theorem for all $\epsilon>0$, we need two things: firstly, a lower bound on the factorization norm $\gamma_{(p)}(A_{\mathsf{prefix}})$; secondly, in order to determine $\delta'$, we show that $\kappa(A_{\mathsf{prefix}})$ is lower bounded by a constant (Lemma \ref{lem:width_of_count}). Such a geometric property of $A_{\mathsf{prefix}}$ could also be of independent interest. Then, the explicit lower bound on privately computing $A_{\mathsf{prefix}}x$ is obtained by applying Theorem \ref{thm:lower_bound_add_intro} and the black-box reduction given in \Cref{thm:from_oblivious_to_dependent} (Theorem 4.3 in Bhaskar et al. \citet{bhaskara2012unconditional}). 
\begin{lemma}\label{thm:lowerbound_on_counting} 
Let $A_{\mathsf{prefix}}$ be the matrix defined in \Cref{def:count}. Then, $\gamma_{(p)}(A_{\mathsf{prefix}}) \gtrsim  n^{1/p}\log n.$
\end{lemma}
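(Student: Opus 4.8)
**Proof proposal for Lemma \ref{thm:lowerbound_on_counting} (lower bound $\gamma_{(p)}(A_{\mathsf{prefix}}) \gtrsim n^{1/p}\log n$).**

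The plan is to combine the general lower bound $\gamma_{(p)}(A) \geq m^{1/p-1/2} \gamma_{(2)}(A)$ — which is exactly the chain of inequalities extracted in the proof of Lemma~\ref{thm:lowerbound}, specialized to a square matrix so that $m = n$ — with the known tight bound on $\gamma_{(2)}$ of the prefix-sum (lower-triangular all-ones) matrix. Concretely, the first step is to recall from the proof of Lemma~\ref{thm:lowerbound} that choosing the diagonal scaling $D = n^{1/p-1/2} I$ in the dual characterization $\gamma_{(p)}(A) = \max\{\gamma_{(2)}(DA): D \succeq 0 \text{ diagonal}, \mathrm{Tr}_q(D^2)=1\}$ (valid for $p > 2$, with $q = p/(p-2)$) yields $\gamma_{(p)}(A_{\mathsf{prefix}}) \geq n^{1/p-1/2}\,\gamma_{(2)}(A_{\mathsf{prefix}})$. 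For $p = 2$ the inequality is trivially an equality. So it remains only to lower bound $\gamma_{(2)}(A_{\mathsf{prefix}})$ by $\Omega(n^{1/2}\log n)$, i.e.\ equivalently to show $\gamma_{(2)}(A_{\mathsf{prefix}}) = \Omega(\sqrt{n}\log n)$; combined with the scaling factor this gives $\gamma_{(p)}(A_{\mathsf{prefix}}) \gtrsim n^{1/p-1/2}\cdot n^{1/2}\log n = n^{1/p}\log n$, as desired.

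The second step is therefore to establish $\gamma_{(2)}(A_{\mathsf{prefix}}) = \Omega(\sqrt n \log n)$. This is a classical fact: Henzinger et al.\ \citet{henzinger2023almost} showed that the $\gamma_2$-factorization norm of the $n\times n$ lower-triangular all-ones matrix is $\Theta(\sqrt{n}\log n)$ — indeed they compute the leading constant — and the lower bound portion is all we need here. Alternatively one can invoke the Schatten-$1$ bound $\gamma_{(2)}(A) \geq \|A\|_1/\sqrt n$ from Henzinger et al.\ \citet[Lemma 1.1]{henzinger2023almost} and the fact that the singular values of $A_{\mathsf{prefix}}$ are $\frac{1}{2}\sec\!\big(\frac{(2k-1)\pi}{2(2n+1)}\big)$ for $k=1,\dots,n$, whose sum is $\Theta(n\log n)$ (the harmonic-type blow-up of the small singular values contributes the $\log n$ factor), giving $\gamma_{(2)}(A_{\mathsf{prefix}}) \geq \|A_{\mathsf{prefix}}\|_1/\sqrt n = \Omega(\sqrt n \log n)$. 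Either route closes the argument; I would cite \citet{henzinger2023almost} for the $\gamma_2$ bound directly since it is cleanest and is already referenced throughout this section.

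The main (and really the only) obstacle is just making sure the dual/scaling step is applied with the right exponents — in particular that for a \emph{square} workload ($m=n$) the factor $m^{1/p-1/2}$ and the factor $n^{1/2}$ inside $\|A\|_1/\sqrt n$ combine correctly to leave exactly $n^{1/p}\log n$ — and handling the $p = 2$ boundary case separately since the dual characterization quoted from \citet{nikolov2023gaussian} is stated for $p > 2$. Neither is substantive: the $p=2$ case is the definition, and the exponent bookkeeping is identical to that already carried out in the proof of Lemma~\ref{thm:lowerbound}. There is no hidden difficulty; the lemma is essentially Lemma~\ref{thm:lowerbound} plus the known value of $\|A_{\mathsf{prefix}}\|_1$.
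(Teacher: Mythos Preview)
Your proposal is correct and follows essentially the same route as the paper: plug $D = n^{1/p-1/2}I$ into the dual characterization $\gamma_{(p)}(A) = \max\{\gamma_{(2)}(DA): D\succeq 0 \text{ diagonal},\ \mathsf{tr}_q(D^2)=1\}$ to reduce to $\gamma_{(2)}(A_{\mathsf{prefix}})$, and then cite Henzinger et al.\ for $\gamma_{(2)}(A_{\mathsf{prefix}}) \gtrsim \sqrt{n}\log n$. One terminological slip to fix: you write ``the $\gamma_2$-factorization norm \ldots\ is $\Theta(\sqrt{n}\log n)$,'' but in this paper's notation $\gamma_2 = \gamma_{(\infty)}$ (which is $\Theta(\log n)$ for the prefix matrix), whereas the norm you actually need and are using throughout is $\gamma_{(2)} = \gamma_F$; just write $\gamma_{(2)}$ consistently.
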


\begin{proof}
Recall that for any $p> 2$, the $\ell_p$ factorization norm of $A_{\mathsf{prefix}}$ can be rewritten as:
$$\gamma_{(p)}(A_{\mathsf{prefix}}) = \max \{\gamma_{(2)}(DA_{\mathsf{prefix}}): D \text{ is diagonal }, D \succeq 0, \mathrm{Tr}_q(D^2) = 1\}$$
where $q = \frac{p}{p-2}$. Let $D = n^{1/p - 1/2} I$, then $D$ is a diagonal PSD matrix and $\mathrm{Tr}_q(D^2) = n^{2/p - 1}\cdot n^{1/q} = 1.$
Using  Henzinger et al. \cite[equation (5.30)]{henzinger2023almost}, 
$$\gamma_{(p)}(A_{\mathsf{prefix}}) \geq n^{1/p - 1/2}\cdot \gamma_{(2)}(I\cdot A_{ \mathrm{prefix}})\gtrsim  n^{1/p}\log n$$
completing the proof.
\end{proof}

\noindent Next, we compute $\kappa(A_{\mathsf{prefix}})$. The proof of Lemma \ref{lem:width_of_count} is given in Section \ref{sec:proofkappaprefixsum}.

\begin{lemma}\label{lem:width_of_count}
  Let $\kappa(\cdot)$ be as defined in \cref{eq:kappa_def}. Then  $\kappa(A_{\mathsf{prefix}}) = 2$.
\end{lemma}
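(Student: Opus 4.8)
The plan is to reduce $\kappa(A_{\mathsf{prefix}})$ to an elementary question about suffix sums and then treat the two inequalities separately. Since $B_1^n$ is origin-symmetric, for every unit vector $\theta$ one has $\min_{\|x\|_1\le 1}\theta^\top A_{\mathsf{prefix}}x = -\max_{\|x\|_1\le 1}\theta^\top A_{\mathsf{prefix}}x$, so
\[
w_{A_{\mathsf{prefix}}B_1^n}(\theta) \;=\; 2\max_{\|x\|_1\le 1}\langle A_{\mathsf{prefix}}^\top\theta,\,x\rangle \;=\; 2\,\|A_{\mathsf{prefix}}^\top\theta\|_\infty .
\]
Because $A_{\mathsf{prefix}}$ is the lower-triangular all-ones matrix, $A_{\mathsf{prefix}}^\top$ is the upper-triangular all-ones matrix, and $(A_{\mathsf{prefix}}^\top\theta)_i=\sum_{j=i}^n\theta_j$ is the $i$-th suffix sum of $\theta$. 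Hence
\[
\kappa(A_{\mathsf{prefix}}) \;=\; 2\min_{\|\theta\|_2=1}\ \max_{1\le i\le n}\Bigl|\sum_{j=i}^n\theta_j\Bigr|,
\]
and equivalently $\kappa(A_{\mathsf{prefix}})\ge 2$ is the assertion that the sensitivity polytope $A_{\mathsf{prefix}}B_1^n$ contains the Euclidean unit ball $B_2^n$. So it suffices to establish this containment together with a direction witnessing equality.

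For the upper bound $\kappa(A_{\mathsf{prefix}})\le 2$ I would simply exhibit a witnessing direction. Taking $\theta=e_n$, the last standard basis vector (so $\|\theta\|_2=1$), every suffix sum equals $\theta_n=1$, so $\|A_{\mathsf{prefix}}^\top e_n\|_\infty=1$ and $w_{A_{\mathsf{prefix}}B_1^n}(e_n)=2$; thus the minimum defining $\kappa$ is at most $2$. In fact any $\theta=e_i$ works, since $A_{\mathsf{prefix}}^\top e_i$ is a $0/1$ vector with a $1$ in coordinate $i$.

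The heart of the argument is the matching lower bound $\kappa(A_{\mathsf{prefix}})\ge 2$, i.e.\ $\max_{1\le i\le n}\bigl|\sum_{j=i}^n\theta_j\bigr|\ge 1$ for every unit $\theta$. Here I would introduce $S_i=\sum_{j=i}^n\theta_j$ with $S_{n+1}=0$, so that $\theta_i=S_i-S_{i+1}$ and $\|\theta\|_2^2=\sum_{i=1}^n(S_i-S_{i+1})^2=1$; the claim is that if $\max_i|S_i|\le 1$ this sum of squared consecutive differences cannot exceed $1$. My approach would be a smoothing/extremal argument: fix a minimizing $\theta$, argue that one may push each $S_i$ to an extreme value and so reduce to a finite optimization over sign patterns of the sequence $(S_i)$, and then evaluate that finite optimum using the telescoping identity $\theta_i=S_i-S_{i+1}$. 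I expect this step to be the main obstacle: it is easy to see $\kappa(A_{\mathsf{prefix}})>0$ (by Lemma~\ref{lem:kappa}, since $A_{\mathsf{prefix}}$ has linearly independent rows), but controlling how much of the $\ell_2$ mass of $\theta$ can be absorbed by cancellation among the suffix sums — and thereby extracting the exact constant rather than merely a positive lower bound — is where the care is needed, and is the point I would scrutinize most closely.
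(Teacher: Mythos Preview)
Your reduction and the upper bound are both correct: by symmetry of $B_1^n$ one has $w_{A_{\mathsf{prefix}}B_1^n}(\theta)=2\|A_{\mathsf{prefix}}^\top\theta\|_\infty$, the coordinates of $A_{\mathsf{prefix}}^\top\theta$ are exactly the suffix sums $S_i=\sum_{j\ge i}\theta_j$, and $\theta=e_i$ witnesses width $2$. The obstacle you flag in the lower-bound step is, however, insurmountable: the assertion $\max_i|S_i|\ge 1$ for every unit $\theta$ is \emph{false}, so no smoothing or extremal argument can establish it. Already for $n=2$, taking $\theta=(2,-1)/\sqrt{5}$ gives $S_1=1/\sqrt{5}$ and $S_2=-1/\sqrt{5}$, hence width $2/\sqrt{5}<2$; equivalently, the parallelogram $A_{\mathsf{prefix}}B_1^2$ (the convex hull of $\pm(1,1)$ and $\pm(0,1)$) visibly does not contain the unit disk. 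In general, maximizing $\|\theta\|_2^2=\sum_i(S_i-S_{i+1})^2$ over the cube $\|S\|_\infty\le 1$ is a convex maximization, attained at the alternating vertex $S_i=(-1)^i$, which yields $\|\theta\|_2^2=4n-3$; thus $\kappa(A_{\mathsf{prefix}})=2/\sqrt{4n-3}$, not $2$.

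The paper's own argument breaks at the same place but one step earlier: its displayed identity $\max_{\|x\|_1\le 1}\theta^\top A_{\mathsf{prefix}}x=\sum_i i|\theta_i|$ is incorrect (the left-hand side equals $\|A_{\mathsf{prefix}}^\top\theta\|_\infty$, a maximum of absolute suffix sums, not a weighted $\ell_1$-norm of $\theta$), and the induction that follows establishes only $\sum_i i|\theta_i|\ge 1$, which is true but not what is needed. So both your proposal and the paper's proof are aimed at an inequality that does not hold; the lemma as stated is incorrect for every $n\ge 2$.
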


Now we are ready to complete the proof of \Cref{thm:lb_on_prefix_sum}, which is a lower bound for private continual releasing of the prefix sum on arbitrary $\ell_p^p$ metric with $2\leq p <\infty$. 

  Recalling Theorem \ref{thm:lower_bound_add_intro}, it remains to show $\delta'(A_{\mathsf{prefix}},\epsilon, n) \geq C_{\epsilon}n^{1/p-1/2}$. In particular, we have the following:  
  \begin{equation*}
    \begin{aligned}      \delta'(A_{\mathsf{prefix}},\epsilon, n) & = \frac{2(e^{\epsilon}-1)}{e^{2\epsilon}}\cdot \min \left\{\frac{1}{16}, \frac{\epsilon\cdot \kappa(A_{\mathsf{prefix}})\cdot  n^{\frac{2-p}{2p}}}{12\gamma_{(p)}(A_{\mathsf{prefix}})}, \epsilon^2\right\}
    \geq  \frac{2(e^{\epsilon}-1)}{e^{2\epsilon}}\min \left\{\frac{\epsilon \cdot  n^{\frac{2-p}{2p}}}{6\gamma_{F}(A_{\mathsf{prefix}})}, \epsilon^2\right\}\\
      &\geq n^{\frac{2-p}{2p}}\cdot \min\left\{\frac{1}{12}\frac{\epsilon(1-e^{-\epsilon})e^{-\epsilon}}{ (1+\ln(4n/5)/\pi)}, \frac{\epsilon^2 e^{-\epsilon}(1-e^{-\epsilon})}{2}\right\}=  {C_{\epsilon} \over n^{1/2-1/p}},
    \end{aligned}
  \end{equation*}
  where the first inequality comes from that  $\gamma_{(p)}(A) \leq \gamma_{(2)}(A) = \gamma_F(A)$ for any $A$ and $p\geq 2$, the second inequality follows from Henzinget et al. \citet{henzinger2023almost}. This completes the proof.

\section{Discussion and Limitations}\label{sec:discussion}
In this paper, we established lower bounds on approximating the linear query $Ax$ with respect to approximate differential privacy under $\ell_p^p$ error, so we can study the optimality of matrix mechanisms not only in expectation but also with respect to probability tail bounds. For limitations, we note that we only give a worst case lower bound over all $x\in \mathbb{R}^n$ by the definition of $\ell_p^p$ error metric (see also \cref{eq:universal_metric}). To understand why we cannot get a instance-optimal lower bound, consider a trivial mechanism $M_{x_0}$ such that for any $x\in \mathbb{R}^n$, it always outputs $Ax_0$ where $x_0\in \mathbb{R}^n$ is any given dataset. Clearly $M_{x_0}$ is not an oblivious additive noise mechanism, and it preserves perfect differential privacy, i.e., $\epsilon = 0$, and perfect accuracy on the input $x_0$, which explains why an instance-optimal lower bound is unrealistic for general mechanisms. 

In Nikolov et al.~\citet{nikolov2023general}, the authors study unbiased mechanism, and show that the Gaussian mechanism is indeed instance-optimal over all such unbiased mechanisms, by giving an asymmetric lower bound saying that if an unbiased mechanism performs well in an input $x_0$, then it must perform worse in some other inputs $x'$ where $x'$ neighboring $x_0$. It is still open if such an asymmetric lower bound exists for general linear queries over all general mechanisms.

 \medskip
 \subsubsection*{Acknowledgement} The authors would like to thank Aleksandar Nikolov for his valuable comments and suggestions on improving this paper, as well as for his insights on removing the assumption of linear independence. The authors also thank George Li for his discussion during this project, and helpful feedback on the draft of the paper. JU research is supported by Rutgers Decanal Grant no. 302918.

\bibliographystyle{alpha}
\bibliography{privacy}
\newpage
\appendix
\section{Basic Definitions and Preliminaries}
\paragraph{Matrix theory and Convex Geometry}
We first introduce several definitions regarding the matrix norms and geometric properties of the query matrix $A$.
\begin{definition}
    [Schatten-$1$ norm] Let $s_1,\cdots, s_m$ be the singular values of $A$, we define the Schatten-$1$ norm to be
    $$\|A\|_1 = \sum_{i=1}^m s_i.$$
\end{definition}
\begin{definition}
  [$p$-trace] Fix any $d\in \mathbb{N}_+$. Let $U\in \mathbb{R}^{d\times d}$ be a positive semi-definite matrix, we define the $p$-trace norm to be
  $$\mathsf{tr}_p(U) := \left(\sum_{i=1}^d U_{ii}^p\right)^{1/p}$$
\end{definition}

 Naturally, we define $\mathsf{tr}_{\infty}(U) = \max_{i\in [d]} |U_{ii}|$. The following definition for generalized factorization norm was firstly pointed out by Nikolov and Tang~\citet{nikolov2023general}:

\begin{definition} For any $2\leq p \leq \infty$ and $A\in \mathbb{R}^{m\times n}$, we define
  $$\gamma_{(p)}(A) : = \min_{LR = A}\left\{\sqrt{\mathsf{tr}_{p/2}(LL^\top)} \|R\|_{1\rightarrow 2}\right\}.$$
\end{definition}
\noindent It can be verified that $\gamma_{(2)}(A) = \gamma_F(A)$ and $\gamma_{(\infty)}(A) = \gamma_2(A)$. This is because when $p=2$, $\mathsf{tr}_{p/2}(LL^\top) = \| L\|_F^2$ and when $p \to \infty$, then  $\mathsf{tr}_{\infty}(LL^\top) = \max_{i \in [d]} (LL^\top)_{ii}=\| L \|_{2 \to \infty}^2$. We use the following result to connect the factorization norm and the Schatten-$1$ norm:
\begin{lemma}
[\citet{henzinger2023almost} and \citet{li2013optimal}]
\label{lem:schatten_norm}
    Let $A \in \mathbb C^{m \times n}$ be a complex matrix. Then 
    \[
    \gamma_{(2)}(A) \geq {\|A\|_1\over \sqrt{n}}
    \]
\end{lemma}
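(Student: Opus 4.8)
The statement to prove is Lemma~\ref{lem:schatten_norm}: for any complex matrix $A \in \mathbb{C}^{m\times n}$, one has $\gamma_{(2)}(A) \geq \|A\|_1/\sqrt{n}$. Recall that $\gamma_{(2)}(A) = \gamma_F(A) = \min_{LR=A}\{\|L\|_F \|R\|_{1\to 2}\}$. The plan is to fix an arbitrary factorization $A = LR$ and show directly that $\|L\|_F \|R\|_{1\to 2} \geq \|A\|_1/\sqrt{n}$, then take the minimum over all factorizations.

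\textbf{Key steps.} First I would control the contribution of $R$: by definition $\|R\|_{1\to 2} = \max_j \|R e_j\|_2 = \max_j \|R_{\cdot j}\|_2$ is the largest $\ell_2$-norm of a column of $R$, so every column of $R$ has $\ell_2$-norm at most $\|R\|_{1\to 2}$, and consequently $\|R\|_F^2 = \sum_{j=1}^n \|R_{\cdot j}\|_2^2 \leq n\,\|R\|_{1\to 2}^2$, i.e. $\|R\|_F \leq \sqrt{n}\,\|R\|_{1\to 2}$. It therefore suffices to prove $\|L\|_F\,\|R\|_F \geq \|A\|_1$. Second, I would invoke the standard fact relating the trace norm (Schatten-$1$ norm) to Frobenius factorizations: for any factorization $A = LR$, $\|A\|_1 \leq \|L\|_F \|R\|_F$. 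This is a classical identity — in fact $\|A\|_1 = \min_{LR=A}\|L\|_F\|R\|_F$ — and can be derived by applying the Cauchy--Schwarz inequality entrywise to the singular value decomposition, or from the fact that the trace norm is the dual of the operator norm together with the Cauchy--Schwarz inequality $|\mathrm{tr}(L^* B R^*)| = |\langle L, B R^*\rangle| \le \|L\|_F\|BR^*\|_F \le \|L\|_F\|B\|_{\mathrm{op}}\|R\|_F$ applied to an optimal dual witness $B$; one could also simply cite \citet{li2013optimal}. Combining the two bounds yields
\[
\|L\|_F\,\|R\|_{1\to 2} \;\geq\; \frac{\|L\|_F\,\|R\|_F}{\sqrt{n}} \;\geq\; \frac{\|A\|_1}{\sqrt{n}}.
\]
Taking the infimum over all factorizations $LR = A$ gives $\gamma_{(2)}(A) \geq \|A\|_1/\sqrt{n}$, as claimed.

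\textbf{Main obstacle.} There is essentially no hard step here; the content is entirely in the inequality $\|A\|_1 \leq \|L\|_F\|R\|_F$, which is standard and already attributed in the literature (e.g. \citet{li2013optimal}, and used in \citet{henzinger2023almost}). The only point requiring a small amount of care is the reduction from the mixed norm $\|R\|_{1\to 2}$ to the Frobenius norm of $R$ via the column-wise bound $\|R\|_F \le \sqrt{n}\,\|R\|_{1\to 2}$, and making sure the argument goes through over $\mathbb{C}$ rather than $\mathbb{R}$ — but all the norms and the SVD are defined identically in the complex case, so nothing changes. Hence the lemma follows by combining these two elementary estimates.
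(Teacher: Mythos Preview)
Your proof is correct. The paper does not actually prove this lemma; it is stated in the preliminaries appendix as a cited result from \citet{henzinger2023almost} and \citet{li2013optimal}, so there is no ``paper's own proof'' to compare against. Your argument --- bounding $\|R\|_F \le \sqrt{n}\,\|R\|_{1\to 2}$ columnwise and then invoking the standard trace-norm factorization inequality $\|A\|_1 \le \|L\|_F\|R\|_F$ --- is exactly the classical route and is indeed the argument underlying the cited references.
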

Using \Cref{lem:schatten_norm}, Henzinger et al. \citet{henzinger2023almost} showed the following bound:
\begin{theorem}
[\citet{henzinger2023almost}]
For any $n \in \mathbb N$, let $M_{\mathsf{count}}$ be a lower triangular matrix with all ones. Then 
 \[
\gamma_{(2)}(M_{\mathsf{count}}) \geq {\sqrt{n}\over \pi } \left(2 +  \ln\left( \frac{2n+1}{5} \right) + \frac{\ln(2n+1) }{2 n} \right)
\]   
\end{theorem}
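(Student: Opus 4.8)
The plan is to route everything through the Schatten-$1$ norm: by \Cref{lem:schatten_norm} it suffices to lower bound $\|M_{\mathsf{count}}\|_1 = \sum_{k=1}^n s_k(M_{\mathsf{count}})$ and divide by $\sqrt n$, so the proof splits into (i) writing the singular values of the all-ones lower-triangular matrix in closed form and (ii) a careful integral estimate of their sum that is sharp enough to capture the additive constant in the statement.

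For (i): the inverse $M_{\mathsf{count}}^{-1}$ is the bidiagonal matrix with $1$'s on the diagonal and $-1$'s on the first subdiagonal, so $M_{\mathsf{count}}^{-1}(M_{\mathsf{count}}^{-1})^\top$ is the tridiagonal matrix $T$ with $T_{ii}=2$ for $i<n$, $T_{nn}=1$ and $T_{i,i+1}=T_{i+1,i}=-1$. A direct check shows that $v^{(k)}_j=\sin\!\big(\tfrac{(2k-1)j\pi}{2n+1}\big)$, $k=1,\dots,n$, are eigenvectors of $T$: for $1\le j\le n-1$ the recursion $-v_{j-1}+2v_j-v_{j+1}=(2-2\cos\theta_k)v_j$ holds with $\theta_k=\tfrac{(2k-1)\pi}{2n+1}$ and $v_0:=0$, while at $j=n$ the asymmetric last diagonal entry is exactly absorbed by the identities $\sin(n\theta_k)=(-1)^{k+1}\cos(\theta_k/2)$ and $\cos(n\theta_k)=(-1)^{k+1}\sin(\theta_k/2)$, which follow from $n\theta_k=(k-\tfrac12)\pi-\theta_k/2$. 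Hence the eigenvalues of $T$ are $4\sin^2\!\big(\tfrac{(2k-1)\pi}{2(2n+1)}\big)$, and since $T$ and $M_{\mathsf{count}}M_{\mathsf{count}}^\top$ are spectral inverses, the singular values of $M_{\mathsf{count}}$ are $s_k=\tfrac12\csc\!\big(\tfrac{(2k-1)\pi}{2(2n+1)}\big)$; therefore $\|M_{\mathsf{count}}\|_1=\tfrac12\sum_{k=1}^n\csc(x_k)$ with $x_k=\tfrac{(2k-1)\pi}{2(2n+1)}$.

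For (ii): the points $x_k$ are equally spaced with step $\Delta=\tfrac{\pi}{2n+1}$, with $x_1=\Delta/2$ and, crucially, $x_n+\Delta=\pi/2$. Since $\csc$ is positive and decreasing on $(0,\pi/2]$, each term dominates the integral over the cell to its right, $\Delta\csc(x_k)\ge\int_{x_k}^{x_k+\Delta}\csc$, so summing $k=2,\dots,n$ gives $\sum_{k=2}^n\csc(x_k)\ge\tfrac{2n+1}{\pi}\int_{x_2}^{\pi/2}\csc x\,dx=-\tfrac{2n+1}{\pi}\ln\tan\!\big(\tfrac{3\pi}{4(2n+1)}\big)$ using $\int\csc=\ln\tan(x/2)$; bounding $\tan y\le\tfrac{4y}{\pi}$ for $y\in[0,\pi/4]$ (convexity of $\tan$, chord from $0$ to $\pi/4$) yields $-\ln\tan\!\big(\tfrac{3\pi}{4(2n+1)}\big)\ge\ln\tfrac{2n+1}{3}$. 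Adding the first term separately via $\sin t\le t$, namely $\csc(x_1)=\csc\!\big(\tfrac{\pi}{2(2n+1)}\big)\ge\tfrac{2(2n+1)}{\pi}$, gives $\|M_{\mathsf{count}}\|_1\ge\tfrac{2n+1}{2\pi}\big(2+\ln\tfrac{2n+1}{3}\big)$. Dividing by $\sqrt n$ and writing $\tfrac{2n+1}{2\sqrt n}=\sqrt n\,(1+\tfrac{1}{2n})$ gives $\gamma_{(2)}(M_{\mathsf{count}})\ge\tfrac{\sqrt n}{\pi}\big(1+\tfrac{1}{2n}\big)\big(2+\ln\tfrac{2n+1}{3}\big)$, which is at least the claimed quantity since $\ln\tfrac{2n+1}{3}\ge\ln\tfrac{2n+1}{5}$ and, expanding, $\big(1+\tfrac1{2n}\big)\big(2+\ln\tfrac{2n+1}{3}\big)-\big(2+\ln\tfrac{2n+1}{5}+\tfrac{\ln(2n+1)}{2n}\big)=\ln\tfrac{5}{3}+\tfrac{2-\ln 3}{2n}>0$.

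The main obstacle is (i): pinning down the exact spectrum of the tridiagonal matrix with the asymmetric corner entry, so that the singular values come out in the clean form $\tfrac12\csc(x_k)$. Everything in (ii) is then elementary, but the constants only land where they should because of the exact alignment $x_n+\Delta=\pi/2$ (which makes the integral bound telescope cleanly to $\tan(\pi/4)=1$) and because the largest singular value $s_1$ is kept separate rather than folded into the Riemann comparison; a cruder estimate gives only the $\Theta(\sqrt n\log n)$ bound already used in \Cref{thm:lowerbound_on_counting}.
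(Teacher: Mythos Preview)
Your argument is correct. One small notational slip: the tridiagonal matrix with $T_{ii}=2$ for $i<n$ and $T_{nn}=1$ is $(M_{\mathsf{count}}^{-1})^\top M_{\mathsf{count}}^{-1}$, not $M_{\mathsf{count}}^{-1}(M_{\mathsf{count}}^{-1})^\top$ (the latter has the exceptional entry at the top-left corner). This is harmless, since the two matrices share the same spectrum, and your eigenvector verification is written for the version with $T_{nn}=1$ and is correct as stated: the boundary relation at $j=n$ reduces to $\sin((n+1)\theta)=\sin(n\theta)$, which is exactly $(2n+1)\theta\in(2\mathbb Z+1)\pi$.

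As for the comparison: the present paper does not prove this statement at all; it is quoted verbatim from \cite{henzinger2023almost}. Your proof is precisely the argument of that reference: apply \Cref{lem:schatten_norm}, compute the singular values of $M_{\mathsf{count}}$ in closed form as $\tfrac12\csc\!\big(\tfrac{(2k-1)\pi}{2(2n+1)}\big)$ via the tridiagonal spectrum of $(M_{\mathsf{count}}^{-1})^\top M_{\mathsf{count}}^{-1}$, and then lower bound the resulting sum by the integral $\int\csc$, treating the largest term $\csc(x_1)$ separately. The only cosmetic difference is in how the final constants are massaged; your chord bound $\tan y\le 4y/\pi$ on $[0,\pi/4]$ and the expansion $(1+\tfrac1{2n})(2+\ln\tfrac{2n+1}{3})-(2+\ln\tfrac{2n+1}{5}+\tfrac{\ln(2n+1)}{2n})=\ln\tfrac53+\tfrac{2-\ln 3}{2n}>0$ cleanly recover the exact constant in the statement.
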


\begin{definition}[Parity Query]
Let $d$ and $w$ be integer parameters and let the domain be $\mathcal X = \{\pm 1\}^d$. Then a parity query is a query that belongs to the family of queries
\begin{equation}
    \begin{aligned}
    \mathcal Q_{d,w} = \left\{ q_P(x) = \prod_{i \in P} x_i : P \subset \{1, \cdots, d\}, |P| = w \right\}.
\label{eq:parity}  
    \end{aligned}
\end{equation}
  
\end{definition}

\begin{definition}[Hadamard matrix]\label{def:hadamard}
    Fix any integer $d\geq 1$, the $d$-th Hadamard matrix $H_d$ is a $2^d\times 2^d$ matrix 
    $$\begin{bmatrix} H_{d-1} & H_{d-1} \\ H_{d-1} & -H_{d-1} \end{bmatrix}.$$
    When $d = 0$, $H_0 = \begin{bmatrix}1 \end{bmatrix}$.
\end{definition}

The following definitions are related to the geometry property of a query matrix.
\begin{definition}
\label{def:covering}
  Fix any $d\in \mathbb{N}$. For any $K,L \subset \mathbb{R}^d$, we write 
  $$K \subseteq_{\leftrightarrow} L \Leftrightarrow \exists v\in \mathbb{R}^d, K + v \subseteq L.$$
\end{definition}
This is saying that if $K \subseteq_{\leftrightarrow} L$, then $K$ can be covered by $L$ by ``relocating'' the center of $K$. Next, we define the width of a convex body.
\begin{definition} [Width of a convex body]
\label{def:width}
Given any vector $\theta\in \mathbb{R}^m$, we define the width of any convex body $K\subseteq \mathbb{R}^m$ with respect to $\theta$ be 
    $$w_K(\theta) : = \max_{x\in K} \theta^\top x - \min_{x\in K}\theta^\top x.$$
\end{definition}
We use $B_{p}^d$ to denote the unit ball of dimension $d$ with respect to the $\ell_p$ norm. Formally, 
$$B_p^d = \{x\in \mathbb{R}^d : \|x\|_p\leq 1\}.$$
For matrix $W$ with $d$ columns, we also write $WB_p^d = \{Wx:x\in B_p^d\}$ to denote the {\em sensitivity polytope} of $W$ with respect to the $p$-th norm. 

\begin{definition}
[Nikolov and Tang~\citet{nikolov2023general}]
\label{def:lambda_A}
  For any query matrix $A\in \mathbb{R}^{m\times n}$ and $p\in[2,\infty]$, we define
  $$\Lambda_p(A) := \inf_{W\in\mathbb{R}^{m\times m}} \left\{\sqrt{\mathsf{tr}_{p/2}(WW^\top)}: AB_{1}^n \subseteq_{\leftrightarrow} WB_2^m\right\}.$$
\end{definition}

Here, we give some insights about why $\Lambda_p(A)$ in Definition \ref{def:lambda_A} is useful for establishing the lower bound. Geometrically, $AB_1^n$ is exactly the convex body comprising differences between the ground truth output of any pair of neighboring datasets, $A(x-x')$ where $\|x-x'\|\leq 1$. Since $\Lambda_p(A)$ is the minimum trace norm of $WW^\top$ where $WB_2^m$ covers the sensitivity polytope $AB_1^n$, then, $\Lambda_p(A)$ can be interpreted as a specific kind of measurements on the volume of the body $WB_2^m$ that ``covers'' $A(x-x')$ over all pair of neighboring datasets. 

Intuitively, if this volume gets larger, it is harder to preserve utility because the outputs of neighboring datasets will be far apart. Therefore, it gives a way to prove the lower bound by establishing a connection between the $\ell_p^p$ error and $\Lambda_p(A)$. The following lemma also reveals the relationship between $\Lambda_p(A)$ and the factorization norm $\gamma_{(p)}(A)$:

\begin{lemma}[Nikolov and Tang~\citet{nikolov2023general}]
  For any $p\in [2,\infty]$ and $A\in \mathbb{R}^{m\times n}$, 
  $\Lambda_p(A) \geq \gamma_{(p)}(A).$ 
\end{lemma}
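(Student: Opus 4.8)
\textbf{Proof proposal for $\Lambda_p(A) \geq \gamma_{(p)}(A)$.}

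The plan is to directly unpack the definition of $\Lambda_p(A)$ and show that any matrix $W$ witnessing a covering $AB_1^n \subseteq_{\leftrightarrow} WB_2^m$ gives rise to a factorization $LR = A$ whose cost $\sqrt{\mathsf{tr}_{p/2}(LL^\top)}\,\|R\|_{1\to 2}$ is at most $\sqrt{\mathsf{tr}_{p/2}(WW^\top)}$. Taking the infimum over all such $W$ on the left and recognizing the resulting bound as a minimization of the form defining $\gamma_{(p)}$ will then give the claim. So, fix any $W \in \mathbb{R}^{m\times m}$ and $v \in \mathbb{R}^m$ with $AB_1^n + v \subseteq WB_2^m$. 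Since $0 \in AB_1^n$ (take $x = 0$), the columns $Ae_1, \dots, Ae_n$ of $A$ (and their negatives, since $B_1^n$ is symmetric) all lie in $WB_2^m - v$; more carefully, for each standard basis vector $e_i$ we have $\pm Ae_i \in AB_1^n$, so $\pm Ae_i + v \in WB_2^m$, hence $Ae_i = \tfrac12\big((Ae_i + v) - (-Ae_i + v)\big) \in WB_2^m$ by convexity and symmetry of $WB_2^m$. Thus each column $Ae_i$ can be written as $W r_i$ with $\|r_i\|_2 \leq 1$; collecting the $r_i$ into a matrix $R \in \mathbb{R}^{m\times n}$ gives $A = WR$ with $\|R\|_{1\to 2} = \max_i \|r_i\|_2 \leq 1$.

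With the factorization $L := W$, $R$ as above, we get
\[
\gamma_{(p)}(A) \leq \sqrt{\mathsf{tr}_{p/2}(LL^\top)}\,\|R\|_{1\to 2} \leq \sqrt{\mathsf{tr}_{p/2}(WW^\top)}\cdot 1 = \sqrt{\mathsf{tr}_{p/2}(WW^\top)}.
\]
Since this holds for every $W$ admitting a translate with $AB_1^n + v \subseteq WB_2^m$, taking the infimum over all such $W$ yields $\gamma_{(p)}(A) \leq \Lambda_p(A)$, which is the desired inequality. (The case $p = \infty$ is identical, reading $\mathsf{tr}_{p/2}$ as $\mathsf{tr}_\infty$ throughout.)

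The only subtlety — and the step I'd expect to need the most care — is the passage from ``$AB_1^n$ translated by $v$ lies inside $WB_2^m$'' to ``each column of $A$ lies in $WB_2^m$ (untranslated),'' i.e. getting rid of the translation $v$ so that the factorization reconstructs $A$ exactly rather than some affine shift of it. The symmetry-and-convexity averaging argument above handles this: because $B_1^n$ and $B_2^m$ are both origin-symmetric convex bodies, membership of $\pm Ae_i + v$ in $WB_2^m$ forces $Ae_i$ itself into $WB_2^m$, and one should also check that the argument extends from the columns $Ae_i$ to arbitrary points $Ax$ with $\|x\|_1\le 1$ if one wants the full set inclusion $AB_1^n \subseteq WB_2^m$ — but for constructing the factorization only the columns are needed, since $A$ is determined by its action on the basis. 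A secondary point is that the infimum in $\Lambda_p(A)$ may not be attained, so the argument must be phrased as ``for every feasible $W$'' and the infimum taken at the end, exactly as written above.
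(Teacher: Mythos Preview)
Your proof is correct and follows essentially the same approach the paper sketches (and attributes to Nikolov and Tang): from any feasible $W$ with $AB_1^n \subseteq_{\leftrightarrow} WB_2^m$, build a factorization $A = WR$ with $\|R\|_{1\to 2}\le 1$, so $\gamma_{(p)}(A)\le \sqrt{\mathsf{tr}_{p/2}(WW^\top)}$, and take the infimum. Your symmetry-and-convexity averaging to eliminate the translation $v$ is exactly the right way to make the sketch rigorous, and your observation that only the columns $Ae_i$ (not all of $AB_1^n$) are needed to construct $R$ is correct.
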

Basically speaking, for any matrix $A\in \mathbb{R}^{m\times n}$, one can always find a factorization of $A =LR$ such that $\|R\|_{1\rightarrow 2}$ is smaller than $1$ and that $AB_1^n \subseteq_{\leftrightarrow} LB_2^m$. Then, taking the $L$ that minimizes $\sqrt{\mathsf{tr}_{p/2}(LL^\top)}$ yields the above lemma.

\paragraph{Differential privacy.} 

Here, we first introduce Gaussian mechanism, which is the main component of the upper bound proof in this paper.
\begin{lemma}
  [Gaussian mechanism]
  \label{lem:Gaussian_mechanism}
  Fix any $0\leq \epsilon,\delta \leq 1$. Let $f:X\rightarrow Y$ be any deterministic function. If for all neighboring dataset $x,x'$, $\|f(x) - f(x')\|_2 \leq \Delta$, then $\mathcal M(x) = f(x) + z$ where $z\sim \mathcal{N}(0,\sigma^2 I)$ satisfies $(\epsilon,\delta)$-differential privacy as long as $\sigma^2 \geq \frac{9\Delta^2\log(1/\delta)}{2\epsilon^2}$.
\end{lemma}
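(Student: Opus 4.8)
The plan is the standard \emph{privacy-loss random variable} argument, carried out with explicit constants. First I would record the elementary reduction: a mechanism $\mathcal M$ is $(\epsilon,\delta)$-DP provided that for every pair of neighboring datasets $x,x'$ the privacy loss $\mathcal L_{x,x'}(o):=\ln\big(p_{\mathcal M(x)}(o)/p_{\mathcal M(x')}(o)\big)$ obeys $\Pr_{o\sim\mathcal M(x)}[\mathcal L_{x,x'}(o)>\epsilon]\le\delta$. This follows by splitting any measurable $S$ into $S\cap\{\mathcal L_{x,x'}\le\epsilon\}$, on which $p_{\mathcal M(x)}\le e^{\epsilon}p_{\mathcal M(x')}$ pointwise so its $\mathcal M(x)$-mass is at most $e^{\epsilon}\Pr[\mathcal M(x')\in S]$, and $S\cap\{\mathcal L_{x,x'}>\epsilon\}$, whose $\mathcal M(x)$-mass is at most $\delta$. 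So it suffices to bound the upper tail of $\mathcal L_{x,x'}$.

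Next I would compute $\mathcal L_{x,x'}$ explicitly for $\mathcal M(x)=f(x)+z$, $z\sim\mathcal N(0,\sigma^{2}I)$. Setting $v:=f(x)-f(x')$, so $\|v\|_2\le\Delta$, and evaluating the log-ratio of the two isotropic Gaussian densities at $o=f(x)+z$, a one-line computation gives $\mathcal L_{x,x'}(o)=\langle z,v\rangle/\sigma^{2}+\|v\|_2^{2}/(2\sigma^{2})$. Since $\langle z,v\rangle\sim\mathcal N(0,\sigma^{2}\|v\|_2^{2})$, the privacy loss is itself Gaussian: $\mathcal L_{x,x'}\sim\mathcal N(\eta,2\eta)$ with $\eta:=\|v\|_2^{2}/(2\sigma^{2})\le\Delta^{2}/(2\sigma^{2})$.

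Finally I would bound the tail. Assuming $\eta<\epsilon$ (which the hypothesis on $\sigma^{2}$ guarantees), the Gaussian tail bound gives $\Pr[\mathcal L_{x,x'}>\epsilon]=\Pr\big[\mathcal N(0,1)>(\epsilon-\eta)/\sqrt{2\eta}\big]\le\exp\big(-(\epsilon-\eta)^{2}/(4\eta)\big)$. The assumed bound $\sigma^{2}\ge 9\Delta^{2}\ln(1/\delta)/(2\epsilon^{2})$ yields $\eta\le\epsilon^{2}/(9\ln(1/\delta))$, and a short calculation (using $\epsilon\le1$, and with $\delta$ bounded away from $1$ so that $\ln(1/\delta)\ge\epsilon/3$, the regime $\delta\to1$ being degenerate) gives $\eta\le\epsilon/3$; hence $(\epsilon-\eta)^{2}/(4\eta)\ge(2\epsilon/3)^{2}/(4\eta)=\epsilon^{2}/(9\eta)\ge\ln(1/\delta)$, so $\Pr[\mathcal L_{x,x'}>\epsilon]\le\delta$ and the first step finishes the proof.

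I do not expect a real obstacle here: this is a textbook fact, and the factor $9/2$ is deliberately loose (the asymptotically optimal constant is $2$, so the slack absorbs lower-order terms for $\epsilon\le1$). The only delicate point is the bookkeeping in the last step: checking that both $\eta<\epsilon$ (so the tail bound applies) and the final chain of inequalities follow from the single assumed lower bound on $\sigma^{2}$. One could instead simply cite a standard analysis of the Gaussian mechanism, but since the rest of the paper tracks constants explicitly, it is cleaner to give this short self-contained computation.
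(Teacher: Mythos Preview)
Your argument is the standard privacy-loss analysis and is correct in the main regime; the only soft spot, which you already flag, is the edge case $\delta$ close to $1$ (specifically $\ln(1/\delta)<\epsilon/3$), where the inequality $\eta\le\epsilon/3$ fails and a separate easy check is needed. The paper itself does not give any proof of this lemma: it is recorded in the preliminaries as a known fact (the classical Gaussian mechanism) and simply invoked. So there is nothing to compare against; your self-contained derivation is a fine addition, and if you want it to match the stated hypothesis $0\le\delta\le1$ exactly you should add one sentence handling the large-$\delta$ regime directly (e.g.\ note that when $\ln(1/\delta)<\epsilon/3$ one has $\Pr[\mathcal N(\eta,2\eta)>\epsilon]\le\tfrac12+\Pr[|\mathcal N(0,1)|\le\sqrt{\eta/2}]\le\tfrac12+\sqrt{\eta/(4\pi)}$, and check this is at most $\delta$ using $\eta\le\epsilon^{2}/(9\ln(1/\delta))$ and $\epsilon\le1$), or else restrict the statement to, say, $\delta\le e^{-1/3}$.
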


As in Nikolov and Tang~\citet{nikolov2023general}, one of the necessary conditions of DP algorithms that we will consider is that DP algorithms preserve the $\chi^2$-divergence between neighboring datasets. For two distribution $P$ and $Q$, the $\chi^2$ divergence between them is 
\begin{equation}\label{eq:chisquare}
  \chi^2(P,Q) := \mathbb{E}_{x\sim Q} \left[\left(\frac{P(x)}{Q(x)} - 1\right)^2\right]. 
\end{equation}
It is not hard to verify (perhaps it is also well-known) the following lemma:
\begin{lemma}[Lemma 39 in Nikolov and Tang~\cite{nikolov2023general}]\label{lem:from_DP_to_chisquare}
  Suppose $M$ is an $\epsilon$-differentially private algorithm and $x$, $x'$ be two neighboring datasets such that $\|x-x'\|_1\leq 1$. Let $P$ and $Q$ be the distributions of $\mathcal M(x)$ and $\mathcal M(x')$ respectively. Then 
  $$\chi^2(P, Q) \leq e^{-\epsilon}(e^\epsilon - 1)^2.$$
\end{lemma}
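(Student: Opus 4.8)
The plan is to reduce the statement to an elementary one-variable convexity estimate on the likelihood ratio. First I would use pure $\epsilon$-differential privacy ($\delta=0$) of $\mathcal M$ on the neighbouring pair $x,x'$: applying \Cref{def.dp} in both directions gives $\mathbf{Pr}[\mathcal M(x)\in S]\le e^{\epsilon}\mathbf{Pr}[\mathcal M(x')\in S]$ and $\mathbf{Pr}[\mathcal M(x')\in S]\le e^{\epsilon}\mathbf{Pr}[\mathcal M(x)\in S]$ for every measurable $S$. Hence $P$ and $Q$ are mutually absolutely continuous, and the Radon--Nikodym derivative $r:=\tfrac{dP}{dQ}$ satisfies $e^{-\epsilon}\le r(\omega)\le e^{\epsilon}$ for $Q$-almost every $\omega$. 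Two facts follow at once: $\mathbb{E}_{\omega\sim Q}[r(\omega)]=\int r\,dQ=\int dP=1$, and, expanding the square in the definition \eqref{eq:chisquare},
\[
\chi^2(P,Q)=\mathbb{E}_{\omega\sim Q}\!\left[(r(\omega)-1)^2\right]=\mathbb{E}_{\omega\sim Q}\!\left[r(\omega)^2\right]-1 .
\]

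Next I would bound $\mathbb{E}_{Q}[r^2]$ using that $t\mapsto t^2$ is convex and that $r$ is confined to $[a,b]$ with $a=e^{-\epsilon}$ and $b=e^{\epsilon}$. On this interval the parabola lies below its chord through the endpoints, i.e.\ $t^2\le (a+b)t-ab$ for all $t\in[a,b]$ (one checks equality at $t=a$ and $t=b$). Taking expectations over $Q$ and using $\mathbb{E}_{Q}[r]=1$ yields $\mathbb{E}_{Q}[r^2]\le (a+b)-ab=e^{\epsilon}+e^{-\epsilon}-1$. Substituting back,
\[
\chi^2(P,Q)\le e^{\epsilon}+e^{-\epsilon}-2=e^{-\epsilon}\big(e^{2\epsilon}-2e^{\epsilon}+1\big)=e^{-\epsilon}(e^{\epsilon}-1)^2,
\]
which is precisely the claimed bound.

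There is essentially no hard step here; the only point needing a little care is the measure-theoretic setup when the output domain $\mathcal R$ is continuous, namely justifying existence of the density $r=\tfrac{dP}{dQ}$ and the identity $\mathbb{E}_{Q}[r]=1$ — but this is routine once one observes that pure $\epsilon$-DP forces mutual absolute continuity of $P$ and $Q$. An equivalent way to present the key inequality, if one prefers a more "extremal'' flavour, is: among all distributions of $r$ supported on $[e^{-\epsilon},e^{\epsilon}]$ with mean $1$, the maximizer of $\mathbb{E}[r^2]$ is the two-point distribution on $\{e^{-\epsilon},e^{\epsilon}\}$, and evaluating $\chi^2$ at that extremal distribution gives exactly $e^{-\epsilon}(e^{\epsilon}-1)^2$, showing the bound is tight.
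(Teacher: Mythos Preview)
Your argument is correct. The paper does not supply its own proof of this lemma; it merely cites it as Lemma 39 in Nikolov and Tang and remarks that it is ``not hard to verify (perhaps it is also well-known).'' Your convexity/chord argument on the likelihood ratio $r=dP/dQ\in[e^{-\epsilon},e^{\epsilon}]$, together with $\mathbb{E}_Q[r]=1$, is exactly the elementary verification one would expect: the inequality $(t-e^{-\epsilon})(t-e^{\epsilon})\le 0$ on that interval gives $\mathbb{E}_Q[r^2]\le e^{\epsilon}+e^{-\epsilon}-1$, and hence $\chi^2(P,Q)=\mathbb{E}_Q[r^2]-1\le e^{-\epsilon}(e^{\epsilon}-1)^2$. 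The measure-theoretic justification (mutual absolute continuity and the a.e.\ bounds on $r$) is handled correctly, and your remark that the two-point distribution on $\{e^{-\epsilon},e^{\epsilon}\}$ attains equality confirms tightness.
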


\noindent The reason why we consider $\chi^2$ distribution is that the lower bound of the variance of a real random variable can be characterized by its $\chi^2$ divergence between another arbitrary random variable. This is the classical Hammersley-Chapman-Robins bound stated in the following lemma:

\begin{lemma}
[Hammersley-Chapman-Robins bound]
\label{lem:hcr_bound}
  For any two distributions $P$, $Q$ over real numbers and for $X$, $Y$ distributed, respectively, according to $P$ and $Q$, we have
  $$\sqrt{\mathsf{Var}(Y)} \geq \frac{|\mathbb{E}[X] - \mathbb{E}[Y]|}{\sqrt{\chi^2(P ,Q)}}.$$
\end{lemma}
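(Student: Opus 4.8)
The plan is to derive this classical bound by expressing the gap between the means as a covariance against the (centered) likelihood ratio and then invoking the Cauchy--Schwarz inequality. First I would dispose of the degenerate cases: if $\chi^2(P,Q)=\infty$ or $\mathsf{Var}(Y)=\infty$ the inequality is vacuous, and if $\chi^2(P,Q)=0$ then $P=Q$ and both sides vanish; hence we may assume $0<\chi^2(P,Q)<\infty$ and $\mathsf{Var}(Y)<\infty$. Finiteness of $\chi^2(P,Q)$ forces $P\ll Q$, so the density ratio $r:=\mathrm{d}P/\mathrm{d}Q$ is well defined $Q$-almost everywhere; I will also assume the means $\mathbb{E}[X],\mathbb{E}[Y]$ exist, since otherwise there is nothing to prove.

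Next I would introduce the centered likelihood ratio $g:=r-1$. From the definition of $\chi^2$-divergence in \cref{eq:chisquare} we have $\mathbb{E}_{Y\sim Q}[g(Y)^2]=\chi^2(P,Q)$, and from $\mathbb{E}_{Y\sim Q}[r(Y)]=\int \mathrm{d}P = 1$ we get $\mathbb{E}_{Y\sim Q}[g(Y)]=0$; in particular $\mathsf{Var}_Q(g(Y))=\chi^2(P,Q)$. The key identity is then
$$\mathbb{E}[X]-\mathbb{E}[Y]=\int y\,\mathrm{d}P(y)-\int y\,\mathrm{d}Q(y)=\int y\,g(y)\,\mathrm{d}Q(y)=\mathbb{E}_{Y\sim Q}\!\left[Y\,g(Y)\right],$$
and since $g$ has mean zero under $Q$ this equals $\mathsf{Cov}_Q(Y,g(Y))$.

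Finally I would apply Cauchy--Schwarz to this covariance:
$$\bigl|\mathbb{E}[X]-\mathbb{E}[Y]\bigr|=\bigl|\mathsf{Cov}_Q(Y,g(Y))\bigr|\le\sqrt{\mathsf{Var}_Q(Y)}\,\sqrt{\mathsf{Var}_Q(g(Y))}=\sqrt{\mathsf{Var}(Y)}\,\sqrt{\chi^2(P,Q)},$$
and dividing through by $\sqrt{\chi^2(P,Q)}>0$ gives the stated bound. The argument is short, and the only place requiring care is the measure-theoretic bookkeeping in the degenerate cases (existence of the means, the observation that we may assume $P\ll Q$, and justifying the step $\int y\,g(y)\,\mathrm{d}Q(y)=\int y\,(\mathrm{d}P-\mathrm{d}Q)$); the analytic content is just the single application of Cauchy--Schwarz, so I do not anticipate any genuine obstacle.
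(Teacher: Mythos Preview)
Your argument is correct and is precisely the standard derivation of the Hammersley--Chapman--Robbins inequality via Cauchy--Schwarz applied to the covariance with the centered likelihood ratio. The paper does not supply its own proof of this lemma---it is quoted as a classical fact in the preliminaries---so there is nothing to compare against; your write-up would serve perfectly well as the missing justification.
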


We also need the following lemma in this paper:
\begin{lemma}[Lemma 4.4 in \citet{kasiviswanathan2010price}]\label{lem:krsu}
      Let $w\in \mathbb{R}^n$ be any single query and $M'(x): = w^\top x + z'$ ($z'\in \mathbb{R}$) be any additive noise mechanism that is $(\epsilon,0)$-differentially private for any $0<\epsilon<1$, then $\mathbb{E}[z^2] \gtrsim \frac{1}{\epsilon^2}$ for some universal constant $C$.
\end{lemma}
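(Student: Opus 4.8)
The final statement to prove is Lemma~\ref{lem:krsu} (stated as ``Lemma 4.4 in \citet{kasiviswanathan2010price}''), namely that any $(\epsilon,0)$-DP additive noise mechanism $M'(x) = w^\top x + z'$ for a single real query $w \in \mathbb{R}^n$ must have $\mathbb{E}[(z')^2] \gtrsim 1/\epsilon^2$. Since this is quoted as a known result, the natural plan is to reproduce the standard packing/anti-concentration argument. First I would reduce to the one-dimensional problem: pick two inputs $x, x'$ with $\|x - x'\|_1 \le 1$ (or a chain of neighbors) such that the true answers differ by a fixed gap $t = w^\top(x - x') $ that is $\Omega(1)$ — this requires $\|w\|_\infty$ to be bounded below, so really the clean statement is after normalizing $w$ so that the sensitivity is exactly $1$, i.e. $t=1$. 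Then $M'(x)$ and $M'(x')$ are two shifts of the same noise distribution (shifts of $z'$) whose means are $1$ apart.

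\textbf{Key steps.} The core is a variance lower bound from indistinguishability. Let $P, Q$ be the laws of $M'(x), M'(x')$; these are translates of the law of $z'$ by the gap $1$. By $\epsilon$-DP, $P$ and $Q$ are $\epsilon$-indistinguishable, so by Lemma~\ref{lem:from_DP_to_chisquare} we get $\chi^2(P,Q) \le e^{-\epsilon}(e^\epsilon - 1)^2 \le \epsilon^2 \cdot e^\epsilon \lesssim \epsilon^2$ for $\epsilon < 1$. Now apply the Hammersley--Chapman--Robins bound (Lemma~\ref{lem:hcr_bound}) with the test function being the identity: $\sqrt{\mathsf{Var}(Y)} \ge |\mathbb{E}[X] - \mathbb{E}[Y]| / \sqrt{\chi^2(P,Q)}$, where $X \sim P$, $Y \sim Q$. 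Since $|\mathbb{E}[X] - \mathbb{E}[Y]| = 1$ (the true-answer gap, and $z'$ is oblivious so the bias cancels) and $\mathsf{Var}(Y) = \mathsf{Var}(z')$, we obtain $\mathsf{Var}(z') \ge 1/\chi^2(P,Q) \gtrsim 1/\epsilon^2$. Finally, $\mathbb{E}[(z')^2] \ge \mathsf{Var}(z') \gtrsim 1/\epsilon^2$, which is the claim. (If one does not normalize $w$, the precise statement should read $\mathbb{E}[(z')^2] \gtrsim \Delta_w^2/\epsilon^2$ where $\Delta_w = \|w\|_\infty$ is the $\ell_1\to\ell_\infty$ sensitivity; for the intended application $w$ is a $\pm 1$ vector so $\Delta_w = 1$.)

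\textbf{Main obstacle.} There is no deep obstacle — the argument is a two-line application of tools already assembled in the appendix (Lemma~\ref{lem:from_DP_to_chisquare} and Lemma~\ref{lem:hcr_bound}). The only subtlety worth care is the handling of bias: because $z'$ is \emph{oblivious} (its distribution does not depend on the input), the bias $\mathbb{E}[z']$ is the same under $P$ and $Q$, so it drops out of $|\mathbb{E}[X]-\mathbb{E}[Y]|$ and the gap is exactly the true-answer gap; this is precisely why we need the additive-noise (oblivious) structure rather than an arbitrary mechanism. A secondary point is ensuring $P(\omega)/Q(\omega) \in [e^{-\epsilon}, e^\epsilon]$ everywhere so that Lemma~\ref{lem:from_DP_to_chisquare} applies cleanly — for pure $\epsilon$-DP this holds for all outcomes by definition, so no truncation is needed (unlike the approximate-DP analysis elsewhere in the paper). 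Putting these together gives the result immediately.
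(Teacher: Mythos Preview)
Your proposal is correct. The paper itself does not prove Lemma~\ref{lem:krsu} --- it is simply cited from \cite{kasiviswanathan2010price} --- so there is no in-paper proof of exactly this statement to compare against. The closest proved analogue in the paper is Lemma~\ref{lem:reprove_krsu} (the $(\epsilon,\delta)$ generalization), and that proof proceeds by a different route: it rescales the mechanism via $x \mapsto x/(2\epsilon)$ to reduce to privacy parameter $1/2$, bounds the total variation distance between $M'(x)$ and $M'(x')$ by $e^{1/2}-1+\delta'$, and then invokes the anti-concentration statement Lemma~\ref{lem:krsu411} (Lemma~4.12 of \cite{kasiviswanathan2010price}) to conclude $\mathbb{E}[z^2] \gtrsim (w^\top(x-x'))^2/\epsilon^2$.

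Your approach via $\chi^2$-divergence and the Hammersley--Chapman--Robins bound is cleaner for the pure-DP case: it avoids the rescaling step and reads off the $1/\epsilon^2$ dependence directly from $\chi^2(P,Q) \le e^{-\epsilon}(e^\epsilon-1)^2 \lesssim \epsilon^2$. It is also more in the spirit of the paper's main argument for \Cref{thm:lower_bound_add_intro}, which likewise runs through HCR. The TV-based route, on the other hand, extends immediately to approximate DP --- precisely why the paper uses it for Lemma~\ref{lem:reprove_krsu} --- whereas your $\chi^2$ argument would need the truncation machinery around \cref{eq:hat_p_intro} to handle $\delta > 0$. Your closing remark about the implicit normalization $\|w\|_\infty = 1$ (so that the bound really is $\mathbb{E}[(z')^2] \gtrsim \|w\|_\infty^2/\epsilon^2$) is also on point; the lemma as stated in the paper suppresses this, and the proof of Lemma~\ref{lem:reprove_krsu} makes the $\|w\|_\infty$ factor explicit.
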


\begin{lemma}[Kasivishwanathan and Smith~\citet{kasiviswanathan2014semantics}]
\label{lem:kasiviswanathan2014semantics}
  Let $M$ be any $(\epsilon,\delta)$-differentially private mechanism, let $P$ be the distribution of $\mathcal M(x)$ and $Q$ be the distribution of $\mathcal M(x')$. Let 
  \begin{align}
      S_{P,Q,\epsilon} := \left\{\omega\in \Omega: e^{-\epsilon} \leq \frac{P(\omega)}{Q(\omega)}\leq e^\epsilon \right\}. \label{eq:set_SPQepsilon}
  \end{align} 
  Then
  $$\max\left\{\Pr\sparen{P\notin S}, \Pr\sparen{Q\notin S}\right\} \leq \delta' = \frac{2\delta}{1-e^{-\epsilon}}.$$
\end{lemma}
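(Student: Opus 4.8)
The plan is to prove the two tail bounds separately and symmetrically; by symmetry it suffices to bound $\Pr_{\omega\sim P}[\omega\notin S_{P,Q,\epsilon}]$, and the bound on $\Pr_{\omega\sim Q}[\omega\notin S]$ follows by an identical argument with the roles of $P$ and $Q$ swapped (note $S_{P,Q,\epsilon}=S_{Q,P,\epsilon}$, since $e^{-\epsilon}\le P(\omega)/Q(\omega)\le e^\epsilon$ iff $e^{-\epsilon}\le Q(\omega)/P(\omega)\le e^\epsilon$). Decompose the complement of $S$ into the ``too large'' part $U:=\{\omega: P(\omega)/Q(\omega) > e^\epsilon\}$ and the ``too small'' part $V:=\{\omega: P(\omega)/Q(\omega) < e^{-\epsilon}\}$. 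First I would handle $U$: on $U$ we have $Q(\omega) < e^{-\epsilon}P(\omega)$, so $Q(U) \le e^{-\epsilon}P(U)$. Applying the $(\epsilon,\delta)$-DP guarantee in the direction $P(U)\le e^{\epsilon}Q(U)+\delta$ gives $P(U) \le e^{\epsilon}\cdot e^{-\epsilon}P(U) + \delta = P(U)+\delta$, which is vacuous — so this crude step is not enough, and I instead apply DP to a \emph{strict} subset.

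\textbf{Key steps for the ``too large'' part.} The right move is to peel off the region where the ratio is very large. For a threshold $t\ge 1$ let $U_t := \{\omega : P(\omega)/Q(\omega) \ge t\}$. Then $Q(U_t) \le P(U_t)/t$, and by $(\epsilon,\delta)$-DP, $P(U_t)\le e^\epsilon Q(U_t)+\delta \le (e^\epsilon/t)P(U_t)+\delta$, hence $P(U_t)(1-e^\epsilon/t)\le \delta$, i.e.\ $P(U_t)\le \delta/(1-e^\epsilon/t)$ for any $t>e^\epsilon$. This does not directly bound $P(U)=P(U_{e^\epsilon})$, but it shows the mass escaping to large ratios is controlled; the standard resolution (this is exactly the folklore argument behind Lemma~\ref{lem:kasiviswanathan2014semantics}) is to integrate: writing $L(\omega)=\log(P(\omega)/Q(\omega))$ for the privacy loss random variable, one has $\Pr_{\omega\sim P}[L(\omega)>\epsilon] = \Pr_{\omega\sim P}[L>\epsilon]$, and one compares $P(U)$ and $Q(U)$ via $P(U)-e^\epsilon Q(U)\le \delta$ together with $Q(U) = \int_U Q \le \int_U e^{-\epsilon}P = e^{-\epsilon}P(U)$ only after first removing an $e^{-\epsilon}$-shrunk copy: concretely, define the sub-distribution $P'$ that equals $\min(P,e^{\epsilon}Q)$ pointwise; then $\|P-P'\|_{\mathrm{TV}} = \int_U (P-e^\epsilon Q) \le \delta$ by DP applied to the single set $U$, and since $P' \le e^\epsilon Q$ everywhere, $P'(U) \le e^\epsilon Q(U) \le e^\epsilon \cdot e^{-\epsilon}P(U) = P(U)$ — still vacuous. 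The clean way, which I will write out, is: $P(U) - \int_U P' = \int_U(P - e^\epsilon Q)_+ \le P(U) - e^\epsilon Q(U) \le \delta$, so $\int_U P' \ge P(U)-\delta$; but also $\int_U P' \le e^\epsilon Q(U)$, giving $e^\epsilon Q(U) \ge P(U)-\delta$, i.e.\ $Q(U)\ge e^{-\epsilon}(P(U)-\delta)$; combined with $Q(U)\le e^{-\epsilon}P(U)$ this is again consistent but not a bound. At this point I recognize the correct statement requires the extra $1/(1-e^{-\epsilon})$ factor precisely because $P(U)$ itself need not be $O(\delta)$ — rather one bounds it after noting $P(U) \le e^\epsilon Q(U) + \delta$ and iterating $Q(U)\le e^{-\epsilon}P(U)$ is the wrong direction; instead use $U':=\{P/Q > e^\epsilon\}$ has $Q(U') = \int_{U'}Q$ and split $P(U') = \int_{U'\cap\{P/Q\le M\}}P + \int_{\{P/Q>M\}}P$.

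\textbf{The main obstacle and its resolution.} The honest summary: the crux is that a naive one-shot application of DP to $U$ or $V$ is circular, and the factor $\frac{2}{1-e^{-\epsilon}}$ in the bound signals that one must sum a geometric series. The argument I will carry out is the one of Kasiviswanathan–Smith: for the set $V=\{P/Q<e^{-\epsilon}\}$, apply $(\epsilon,\delta)$-DP as $Q(V)\le e^\epsilon P(V)+\delta$; since on $V$ we have $P(\omega) < e^{-\epsilon}Q(\omega)$, $P(V)\le e^{-\epsilon}Q(V)$, so $Q(V) \le e^\epsilon e^{-\epsilon}Q(V)+\delta=Q(V)+\delta$ — vacuous again, confirming that the bound is genuinely on $P(V^c)$-type quantities only through the loss variable. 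The resolution that actually works and that I will write: bound $\Pr_{\omega\sim P}[L(\omega) < -\epsilon]$ by observing $\Pr_{\omega\sim P}[L<-\epsilon] = \int_{V} P \le e^{-\epsilon} Q(V) \le e^{-\epsilon}$, too weak; the \emph{correct} inequality uses that $(\epsilon,\delta)$-DP implies, for the loss $L$ under $P$, the tail bound $\Pr_P[L > \epsilon] \le \frac{\delta}{1-e^{-(L_{\max}-\epsilon)}}$-type estimates — precisely, one shows $\Pr_P[L>k\epsilon]\le e^{-(k-1)\epsilon}\delta + e^{-(k-1)\epsilon}\Pr_P[L>(k-1)\epsilon]$ is false too; the genuinely correct chain, which is Lemma 3.3 of Kasiviswanathan–Smith and which I will invoke/reprove, is: for any event $E$, $P(E) \le e^{2\epsilon}Q(E) + ...$. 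I expect the write-up to ultimately mirror exactly the short proof in \cite{kasiviswanathan2014semantics}: set $E = \{\omega : P(\omega) > e^\epsilon Q(\omega)\}$, use group privacy / the definition to get $\Pr_P[E] - e^\epsilon \Pr_Q[E] \le \delta$ and $\Pr_Q[E] \le e^{-\epsilon}\Pr_P[E]$ on $E$, then observe these give $\Pr_P[E] \le \frac{\delta}{1 - e^{\epsilon}\cdot e^{-\epsilon}}$ only in the limit, so the $2$ and the $(1-e^{-\epsilon})^{-1}$ come from handling $E$ and its symmetric counterpart $E' = \{Q > e^\epsilon P\}$ together and from the asymmetry $\delta \neq 0$; the main obstacle is getting these constants exactly right rather than off by a factor, and I will resolve it by following the Kasiviswanathan–Smith computation verbatim.
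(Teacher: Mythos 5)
Your proposal does not actually contain a proof: it is a sequence of attempts, each of which you yourself label ``vacuous,'' followed by a promise to ``follow the Kasiviswanathan--Smith computation verbatim.'' Deferring to the cited reference is not a derivation, so as written there is a genuine gap. (The paper itself only cites this lemma without proof, so there is no in-paper argument to compare against; the write-up has to stand on its own.)

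The frustrating part is that you had the key step in hand and discarded it. Your observation that for $U_t=\set{\omega: P(\omega)/Q(\omega)\ge t}$ one gets
\[
P(U_t)\le e^{\epsilon}Q(U_t)+\delta\le \frac{e^{\epsilon}}{t}\,P(U_t)+\delta
\quad\Longrightarrow\quad
P(U_t)\le \frac{\delta}{1-e^{\epsilon}/t}\ \text{ for } t>e^{\epsilon}
\]
is essentially the whole proof. The reason all your other attempts collapse is that you insisted on taking $t=e^{\epsilon}$, for which the denominator vanishes; but the set to which this lemma is actually applied everywhere in the paper (see the definition of $\widehat P$ in \cref{eq:hat_p_intro} and Lemma~\ref{lem:nikolov2023general_privacylossratio}) is $S_{P,Q,2\epsilon}$, i.e.\ the relevant threshold is $t=e^{2\epsilon}$. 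With that choice your inequality gives $P(U_{e^{2\epsilon}})\le \delta/(1-e^{-\epsilon})$. For the other tail $V=\set{\omega: P(\omega)/Q(\omega)\le e^{-2\epsilon}}$, run the identical argument with the roles of $P$ and $Q$ exchanged to get $Q(V)\le e^{\epsilon}P(V)+\delta\le e^{-\epsilon}Q(V)+\delta$, hence $Q(V)\le \delta/(1-e^{-\epsilon})$, and then $P(V)\le e^{-2\epsilon}Q(V)\le \delta/(1-e^{-\epsilon})$. Summing the two tails yields $P(S^c)\le 2\delta/(1-e^{-\epsilon})$, and symmetrically for $Q$; this is where the factor $2$ and the factor $(1-e^{-\epsilon})^{-1}$ come from --- there is no geometric series to sum. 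Your vacuity computations do correctly show that this route cannot work for the threshold-$\epsilon$ set as the lemma is literally typeset; the statement should be read with $S_{P,Q,2\epsilon}$, consistent with all of its uses in the paper. To fix the proposal, delete the exploratory dead ends and write out exactly the two-tail computation above.
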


Given two distribution $P$ and $Q$ and set defined by \cref{eq:set_SPQepsilon},  \citet{nikolov2023general} defined a a distribution $\widehat{P}$ such that for any $T \subset \Omega$,
\begin{equation}\label{eq:hat_p}
  \widehat{P} = \frac{Q(S_{P,Q,2\epsilon})}{P(S_{P,Q,2\epsilon})} P(T\cap S_{P,Q,2\epsilon}) + Q(T\backslash S_{P,Q,2\epsilon})
\end{equation}

Here we define $(\epsilon,\delta)$-indistinguishability:

\begin{definition}
[Kasivishwanathan and Smith~\citet{kasiviswanathan2014semantics}]
\label{sec:indistinguishable}
  Let $\Omega$ be a ground set and $\mu_1, \mu_2$ be two distributions with support $\Omega_1\subseteq \Omega$, $\Omega_2\subseteq \Omega$ respectively. We say that $\mu_1$ and $\mu_2$ are $(\epsilon,\delta)$-indistinguishable for $\epsilon>0$ and $\delta \in (0,1)$ if for any $S\subseteq \Omega$, it holds that 
  $${\mu_1} (S) \leq \mu_2(S)\cdot e^\epsilon + \delta \quad \text{and} \quad {\mu_2} (S) \leq {\mu_1}(S)\cdot e^\epsilon + \delta.$$
  If $\delta = 0$, we also say $\mu_1$ and $\mu_2$ are $\epsilon$-indistinguishable.
\end{definition}
We use the following lemma to characterize the relation between $\widehat{P}$ and $Q$:
\begin{lemma}[Lemma 46 in Nikolov and Tang~\citet{nikolov2023general}]
\label{lem:nikolov2023general_privacylossratio}
  Let $P,Q$ be a pair of $(\epsilon,\delta)$-indistinguishable distributions over $\Omega$ and $\widehat{P}$ be the distribution defined in \cref{eq:hat_p}, then 
  $$\left|\log \frac{\widehat{P}(\omega)}{Q(\omega)}\right|\leq 2\epsilon - \log(1-\delta') = \tilde{\epsilon}$$
  for all $\omega\in \Omega$. Here, $\delta' = \frac{2\delta}{1-e^{-\epsilon}}$. That is to say, $\widehat{P}$ and $Q$ are $\tilde{\epsilon}$-indistinguishable.
\end{lemma}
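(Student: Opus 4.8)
The plan is to first establish the pointwise likelihood-ratio bound $\left|\log\frac{\widehat P(\omega)}{Q(\omega)}\right|\le\tilde\epsilon$ for ($Q$-almost) every $\omega\in\Omega$, and then deduce $\tilde\epsilon$-indistinguishability by integrating. Write $S:=S_{P,Q,2\epsilon}$ for the set in \cref{eq:set_SPQepsilon} taken with parameter $2\epsilon$. I would begin by checking that $\widehat P$ from \cref{eq:hat_p} is genuinely a probability distribution: setting $T=\Omega$ gives $\widehat P(\Omega)=\frac{Q(S)}{P(S)}P(S)+Q(\Omega\setminus S)=Q(S)+Q(\Omega\setminus S)=1$, and each summand in \cref{eq:hat_p} is nonnegative. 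Crucially, $\widehat P$ is absolutely continuous with respect to $Q$: on $\Omega\setminus S$ it coincides with $Q$, while on $S$ its density against $Q$ is $\frac{Q(S)}{P(S)}\cdot\frac{dP}{dQ}$, which is legitimate since $\frac{P(\omega)}{Q(\omega)}\le e^{2\epsilon}$ on $S$ by the definition of $S$ (so $Q$ dominates $P$ there).

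Next I would bound this density piecewise. On $\Omega\setminus S$ we have $\widehat P(\omega)=Q(\omega)$, so the log-ratio is $0$. On $S$,
$$\log\frac{\widehat P(\omega)}{Q(\omega)}=\log\frac{Q(S)}{P(S)}+\log\frac{P(\omega)}{Q(\omega)}.$$
The second term lies in $[-2\epsilon,2\epsilon]$ directly because $\omega\in S=S_{P,Q,2\epsilon}$. For the first term, note that $P,Q$ being $(\epsilon,\delta)$-indistinguishable implies they are $(2\epsilon,\delta)$-indistinguishable, so Lemma \ref{lem:kasiviswanathan2014semantics} applied with privacy parameter $2\epsilon$ yields $P(S)\ge 1-\frac{2\delta}{1-e^{-2\epsilon}}\ge 1-\delta'$ and, symmetrically, $Q(S)\ge 1-\delta'$; together with $P(S),Q(S)\le 1$ this gives $1-\delta'\le\frac{Q(S)}{P(S)}\le\frac{1}{1-\delta'}$, hence $\left|\log\frac{Q(S)}{P(S)}\right|\le-\log(1-\delta')$ (using $\delta'<1$, which holds in the regime where the statement is non-vacuous). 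The triangle inequality then delivers $\left|\log\frac{\widehat P(\omega)}{Q(\omega)}\right|\le 2\epsilon-\log(1-\delta')=\tilde\epsilon$ for every $\omega$.

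Finally, to pass from the pointwise bound to $\tilde\epsilon$-indistinguishability as in Definition \ref{sec:indistinguishable}: for any measurable $T\subseteq\Omega$, $\widehat P(T)=\int_T\frac{d\widehat P}{dQ}\,dQ\le e^{\tilde\epsilon}Q(T)$ by the upper bound, and $Q(T)=\int_T\frac{dQ}{d\widehat P}\,d\widehat P\le e^{\tilde\epsilon}\widehat P(T)$ by the matching lower bound $\frac{\widehat P(\omega)}{Q(\omega)}\ge e^{-\tilde\epsilon}$; these are precisely the two inequalities in Definition \ref{sec:indistinguishable} with $\delta=0$. I expect the only genuinely delicate point to be the measure-theoretic bookkeeping — verifying $\widehat P\ll Q$ so that the pointwise ratio is well-defined, and confirming that the $Q$-null portion of $\Omega\setminus S$ (if any) contributes nothing — but this is routine once one observes that the likelihood ratio $P/Q$ is capped at $e^{2\epsilon}$ on $S$ and that the two measures literally agree off $S$.
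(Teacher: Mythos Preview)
The paper does not supply its own proof of this lemma; it is simply quoted as Lemma~46 from Nikolov and Tang~\cite{nikolov2023general} and used as a black box. So there is no in-paper argument to compare against.

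That said, your proposal is a correct and complete proof. The decomposition into the two regions $S=S_{P,Q,2\epsilon}$ and $\Omega\setminus S$ is exactly right: on $\Omega\setminus S$ the two measures agree by construction, and on $S$ you correctly split $\log(\widehat P/Q)=\log(Q(S)/P(S))+\log(P/Q)$, bound the second term by $2\epsilon$ from the definition of $S$, and bound the first via Lemma~\ref{lem:kasiviswanathan2014semantics}. One remark: your route through ``$(\epsilon,\delta)$-indistinguishable $\Rightarrow$ $(2\epsilon,\delta)$-indistinguishable, then apply Lemma~\ref{lem:kasiviswanathan2014semantics} at level $2\epsilon$'' works, but it is slightly cleaner to observe directly that $S_{P,Q,\epsilon}\subseteq S_{P,Q,2\epsilon}$, so the mass outside $S$ is already at most $\delta'=\tfrac{2\delta}{1-e^{-\epsilon}}$ from Lemma~\ref{lem:kasiviswanathan2014semantics} applied at level $\epsilon$; this avoids the extra inequality $\tfrac{2\delta}{1-e^{-2\epsilon}}\le\delta'$. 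The final passage from the pointwise likelihood-ratio bound to $\tilde\epsilon$-indistinguishability is standard and your argument is fine.
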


\paragraph{Error Metric.} Two of the normally used metrics for a private mechanism $M$ are the squared error (denoted by $\mathsf{err}_{MSE}$) and absolute error (denoted by $\mathsf{err}_{\ell_\infty}$), respectively: 
\begin{align}
\begin{split}
    \mathsf{err}_{MSE}(\mathcal M,A,n) &:= \max_{x\in \mathbb{R}^n}{\mathbb{E}\left[\frac{1}{n}\|\mathcal M(x) - Ax\|_2^2\right]} \\
    \mathsf{err}_{\ell_\infty}(\mathcal M,A,n) &:= \max_{x\in \mathbb{R}^n} \mathbb{E}\left[\|\mathcal M(x) - Ax\|_{\infty}\right].
\end{split}\label{eq:error_metric}
\end{align}
\section{High-Level Overview of Our Techniques}
\label{sec:high_level}
In this section, we briefly discuss some techniques and ideas that underpin our proof. 

\subsection{\texorpdfstring{Upper bound on matrix mechanism in $\ell_{p}^p$ metric.}{Lg}} To find an upper bound on answering linear queries, we use the Gaussian mechanism that adds correlated noise based on a factorization of the query matrix $A$. Specifically, given a query matrix $A\in \mathbb{R}^{m\times n}$, we consider the additive noise mechanism $\mathcal M(x) = Ax + z$. For any factorization of $A = LR$ where $L\in \mathbb{R}^{m\times k}$ and $R\in \mathbb{R}^{k\times n}$, such a mechanism can be rewritten as $\mathcal M(x) = L(Rx + z')$ where $Lz'$ has the same distribution as $z$. Finally, we show that minimizing the $\ell_p^p$ error on such mechanism is equivalent to finding an ``optimal'' factorization of $A$, and the optimal error can be characterized by the generalized factorization norm $\gamma_{(p)}(A)$.

\subsection{\texorpdfstring{Lower bound on oblivious additive noise approximate DP mechanisms in $\ell_{p}^p$ metric.}{Lg}}

To prove a lower bound on mechanisms that add oblivious additive noise, we consider the convex sensitivity polytope $AB_1^n = \{Ay: y\in \mathbb{R}^n \text{ and } \|y\|_1\leq 1\}$ of the query matrix $A\in \mathbb{R}^{m\times n}$. We use the following measurement introduced by Nikolov and Tang~\citet{nikolov2023general}:
\begin{equation}\label{eq:sensitivity_polytop_in_tech_overview}
    \inf_{W\in \mathbb{R}^{m\times m}} \left\{\sqrt{\mathrm{tr}_{p/2}(WW^\top)}: \exists v\in \mathbb{R}^m, AB_1^n + v\subseteq WB_2^m\right\}
\end{equation}
to bound the minimum scale of the variance needed for the noise to achieve differential privacy. 

Intuitively, if the measure of the sensitivity polytope $AB_1^n$ is larger (in terms of $\sqrt{\mathrm{tr}_{p/2}(WW^\top)}$), then it is harder to make two points in $AB_1^n$ indistinguishable. To formulate such intuition, we first establish a bridge between $\ell_p^p$ error and the covariance matrix $\Sigma_\mathcal M(x)\in \mathbb{R}^{m\times m}$ of the output distribution (Lemma \ref{lem:reduce_to_trace_norm}). Next, a direct approach is to show that if an oblivious mechanism $M$ is $(\epsilon,\delta)$-differentially private, then by a standard lower bound in Kasivishwanathan and Smith \citet{kasiviswanathan2010price}, the square root of the covariance matrix $\Sigma_\mathcal M(x)$ satisfies that $$AB_1^n + v \subseteq \sqrt{\Sigma_\mathcal M(x)}B_2^m$$ for some $v\in \mathbb{R}^m$, which establishes a relationship between the infimum value in \cref{eq:sensitivity_polytop_in_tech_overview} and the $\ell_p^p$ error. Finally, we apply Lemma 20 in Nikolov and Tang~\citet{nikolov2023general} to lower bound such infimum value by the general factorization norm $\gamma_{(p)}(A)$.

However, the lower bound in Kasivishwanathan and Smith \citet{kasiviswanathan2010price} works in only high privacy regime. To get a lower bound for approximate DP algorithms for all $\epsilon>0$, we use the fact that output distributions of differentially private mechanisms under two adjacent datasets must be close under $\chi^2$-divergence. Consequently, we employ the $\chi^2$-divergence to set a lower bound on the minimum variance of the oblivious noise that must be introduced to achieve differential privacy. Since we do not assume the unbiasedness as in Nikolov et al.\cite{nikolov2023general}, we have to consider the bias of the oblivious noise. However, we show that such a pipeline still works if the oblivious noise has a small bias. On the other hand, if the noise has a large enough bias, then one can show that the $\ell_p^p$ error is already large. Combined, we establish a lower bound that the $\frac{1}{p}$-root of the $\ell_p^p$ error is at least $\Omega((1-\delta) \gamma_{(p)}(A)/\epsilon)$ for any oblivious $(\epsilon,\delta)$-DP mechanisms on any query matrix $A\in \mathbb{R}^{m\times n}$ with $\kappa(A)>0$.

With the lower bounds on oblivious mechanisms, we use the standard reduction in \citet{bhaskara2012unconditional} to obtain a worst case (in terms of the input $x\in \mathbb{R}^n$) lower bound for \textit{general} $(\epsilon,\delta)$-DP mechanisms that might be data-dependent.

\subsection{Comparison of Techniques}\label{sec:compare_tech}
As alluded to in the introduction, the focus of Nikolov and Tang~ \cite{nikolov2023gaussian} is the \emph{$\ell_p^2$ instance optimality} of matrix mechanisms among \emph{unbiased} mechanisms, while we instead focus on \textit{the worst case $\ell_p^p$ optimality} of matrix mechanisms among \emph{any} differentially private mechanisms.
Our departure in analysis and its complication compared to previous works stems from the fact that we do not assume unbiased mechanisms. Our different approach also means that our dependency on $\kappa$ appears only in the applicable range of the privacy parameter $\delta$, instead of showing up in the lower bound itself. We elaborate it next. 

The lower bound of Nikolov and Tang~\cite{nikolov2023general} combined techniques from Edmonds et al.~\cite{edmonds2020power} for oblivious mechanisms with the classical results for unbiased estimators, i.e., they crucially rely on the estimator being unbiased. We first explain at a high level why they need the assumption of an unbiased mechanism. 

Edmonds et al.~\cite{edmonds2020power} showed that the variance of the one-dimensional private mechanism is lower bounded by the width of the underlying sensitivity polytope. For a data oblivious mechanism, as considered in Edmonds et al., ~\cite{edmonds2020power} in their first step, this almost immediately implies a lower bound. However, this might not always be true for an unbiased mechanism. In fact, since Edmonds et al.~\cite{edmonds2020power}  consider the $\ell_2$ error metric, they can assume without any loss of generality that the bias is $0$. This is not the case for $\ell_p^2$ error considered in Nikolov and Tang~\cite{nikolov2023general} or $\ell_p$-error as considered in this paper. 

This causes the departure of our proof technique from Edmonds et al.~\cite{edmonds2020power} and Nikolov and Tang~\cite{nikolov2023general} since we cannot assume that the bias is $0$ either by an assumption of unbiased mechanism or because of the choice of metric (i.e., $\ell_2$ error metric). We first show in Lemma~\ref{lem:big_bias} that the error would be large if the bias is large enough. So, the rest of our proof has to deal with the setting when the bias is small. In fact, using a case analysis based on the magnitude of bias is also helpful from another perspective: our lower bound depends only on $\gamma_{(p)}(\cdot)$ norm while the effect of minimum width of sensitivity polytope is reflected in the applicable range of $\delta$ when we consider any $\epsilon >0$ (including the low privacy regime). In general, the width of the sensitivity polytope can be $0$ as shown in Lemma~\ref{lem:kappa}, but as we show it is lower bounded by a constant for two important linear query matrices. Further, for general linear query matrices whose sensitivity polytope has a small minimum width, say $1/n$, our lower bound remains non-trivial, while Nikolov and Tang~\cite{nikolov2023general} only provided a very weak lower bound (that is, dependent inversely on the dimension). We discuss it next. 

For approximate differential privacy, Nikolov and Tang~\cite{nikolov2023general}  proved that any mechanism would have a large error either on the input $x$ or one of its neighbor $x'$. This is because they rely on a classical result from statistics, known as Hammersley-Chapman-Robins bound (Lemma~\ref{lem:hcr_bound}). To apply this bound, they need to prove that the $\chi^2$-divergence between the mechanism's output on two neighboring datasets is bounded. However, while this is true for $\epsilon$-differential privacy, this is not true for $(\epsilon,\delta)$-differential privacy because the support of the two mechanisms might differ. To ensure that the two distributions have the same support, they use the general trick used in differential privacy (and, to our knowledge, first appeared in Kasivishwanathan and Smith~\cite{kasivishwanathan2014}) and define a set as we defined in \cref{eq:hat_p_intro}. This set serves two purposes: (i) the $\chi^2$-divergence between both distributions is bounded, and (ii) the difference of the expectation of either of the two distributions restricted over the set is close to the original unrestricted distribution unless one of the two distributions has a large variance. We can now do the case analysis. In \textbf{case 1}, if the expectation of neither of the two distributions changes much, we can restrict our attention to the defined set. Otherwise, we are in \textbf{case 2}, where we just pick the distribution whose expectation changed by a lot and for which we are in the case where the variance is high. As a result, we can only prove that either $\mathcal M_{\mathsf{unbiased}}(x)$ or $\mathcal M_{\mathsf{unbiased}}(x')$ have a large error. 

There is another price with this analysis. If we are in \textbf{case 2}, then their technique gives a lower bound on the variance that depends on the minimum width ($\kappa(\cdot)$ in our paper and $w_0(\cdot)$ in Nikolov and Tang~\cite{nikolov2023general}). Due to Lemma~\ref{lem:kappa}, their result by itself is vacuous if the query matrix $A$ has linearly dependent rows. They alleviate this concern using the following trick: one can always find a random subspace, so the minimum width is at least the inverse of the dimension of the original sensitivity polytope under the projection onto that subspace. In other words, in \textbf{case 2}, we can only prove a lower bound with inverse dependence on the dimension. As a result, the lower bound is less useful as the dimension increases.   

Since the reduction from the class of oblivious additive noise differentially private mechanism to the class of general differentially private mechanism follows from Bhaskar et al.~\cite{bhaskara2012unconditional}, we only focus on the class of oblivious additive noise differentially private mechanism in the following exposition. Since the large bias case is easy to deal with (and already implies a lower bound on the error as shown in Lemma~\ref{lem:big_bias}), we need to deal with the case when the mechanism has a small bias. 

Dealing with the possibility of bias results in an extra term of $\mathbb{E}[\theta^\top z]\left|1-{Q\over P}\right|$ in \cref{eq:S1}, where $P$ and $Q$ are the distribution of the output of the mechanism on two neighboring datasets, $\theta \in \mathbb R^d$ and $z$ is the noise which is stochastically independent of the input.  
Since $\mathbb{E}[\theta^\top z]$ is not identically zero, this term finally results in an extra term of $\delta'\mathbb{E}[\theta^\top z]$ term. For a non-vacuous lower bound, this term has to be  $o(w_{AB_1^n}(\theta))$ in all directions $\theta$. Using the fact that we are in the low bias case, we have an upper bound on $\mathbb{E}[\theta^\top z]$; this gives us an applicable range of $\delta'$, i.e., the value of $\delta$ for which the term $\mathbb{E}[\theta^\top z] \in o(w_{AB_1^n}(\theta))$, which in turn depends on the narrowest direction of $w_{AB_1^n}$. This narrowest direction is $\kappa(A)$ by definition.

\section{Proof of \Cref{thm:parity}}
\label{sec:proof_parity}

We use the observation made in Edmonds et al. \citet{edmonds2020power}. Let $Q = \mathcal{Q}_{d,w}$ be the corresponding matrix of the $w$-way parity queries on the domain $\{-1,1\}^d$. Then, $Q$ is the sub-matrix of a $2^d\times 2^d$ Hadamard matrix $H$ (see also Definition \ref{def:hadamard}) produced by selecting ${d \choose w}$ rows of $H$. We have the following lemma that gives the lower bound on $\kappa(Q)$. This allows us to set the range of $\delta$ and combined with the worst case lower bound \Cref{thm:lower_bound_DP_intro} give an $(\epsilon,\delta)$-DP lower bound for general mechanisms on answering parity queries.

\begin{lemma}
$\kappa(Q) \geq 2$.
\end{lemma}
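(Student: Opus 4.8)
The plan is to lower-bound $\kappa(Q) = \min_{\|\theta\|_2 = 1} w_{QB_1^d}(\theta)$ by reducing it to a computation about a single column of the query matrix $Q$. Recall that $Q \in \{\pm 1\}^{m \times 2^d}$ with $m = \binom{d}{w}$ is the submatrix of the Hadamard matrix $H$ obtained by keeping the rows indexed by subsets $P$ with $|P| = w$. By definition of the width,
$$
w_{QB_1^d}(\theta) = \max_{\|x\|_1 \le 1} \theta^\top Q x - \min_{\|x\|_1 \le 1} \theta^\top Q x = 2\max_{\|x\|_1 \le 1} \theta^\top Q x = 2 \|Q^\top \theta\|_\infty = 2\max_{j} |\langle \theta, q_j\rangle|,
$$
where $q_j$ denotes the $j$-th column of $Q$. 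So it suffices to show that for every unit vector $\theta \in \mathbb{R}^m$ there is a column $q_j$ of $Q$ with $|\langle \theta, q_j\rangle| \ge 1$.

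The key observation is that the columns of $Q$ are (up to the restriction to the selected rows) columns of a Hadamard matrix, hence they form an orthogonal-like family. First I would use the fact that one column of $H$ — the column indexed by the empty set, i.e. $j$ corresponding to the all-ones input — restricts to the all-ones vector $\mathbf{1}_m$ on the selected rows, since $H_{P,\emptyset} = 1$ for every $P$. More usefully, for any two distinct columns $q_j, q_{j'}$ of $Q$, we have $\langle q_j, q_{j'}\rangle = \sum_{|P|=w} H_{P,j}H_{P,j'}$, which is a sub-sum of an orthogonality relation. The cleanest route is: the full set of columns of $H$ restricted to the $m$ selected rows spans $\mathbb{R}^m$ (since $H$ has full rank and the selected rows are linearly independent — a standard fact about Hadamard/parity matrices), so $\{q_j\}$ spans $\mathbb{R}^m$; moreover each $q_j$ has $\|q_j\|_2 = \sqrt{m}$. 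Then for any unit $\theta$, $\sum_j \langle \theta, q_j \rangle^2 = \theta^\top (QQ^\top) \theta$. Here $QQ^\top \in \mathbb{R}^{m\times m}$ has diagonal entries $2^d$ and, by the Hadamard structure, is exactly $2^d$ times a projection-type matrix; in fact $QQ^\top = 2^d I_m$ because distinct rows of $H$ are orthogonal and each has squared norm $2^d$. Hence $\sum_{j=1}^{2^d} \langle\theta, q_j\rangle^2 = 2^d$, so the maximum summand is at least $2^d / 2^d = 1$, giving $\max_j |\langle \theta, q_j\rangle| \ge 1$ and therefore $w_{QB_1^d}(\theta) \ge 2$. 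Taking the min over $\theta$ yields $\kappa(Q) \ge 2$.

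The main step to get right is the identity $QQ^\top = 2^d I_m$: this is just the statement that distinct rows of the Hadamard matrix $H$ are orthogonal (standard, proved by the recursive definition or by the character-sum formula $\sum_{x \in \{\pm1\}^d} \chi_P(x)\chi_{P'}(x) = 2^d \mathbf{1}[P = P']$), together with each row having squared Euclidean norm $2^d$ since all entries are $\pm 1$. I expect the only mild subtlety is bookkeeping: making sure the width is $2\|Q^\top\theta\|_\infty$ (the factor $2$ comes from $\max - \min$ over the symmetric body $B_1^d$, and $\max_{\|x\|_1\le 1}\theta^\top Q x$ is attained at a signed standard basis vector, giving the $\ell_\infty$ norm of $Q^\top \theta$), and confirming that the averaging argument $\max_j a_j^2 \ge \frac{1}{N}\sum_j a_j^2$ with $N = 2^d$ and $\sum_j a_j^2 = 2^d$ produces exactly the bound $1$. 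Everything else is routine, and in fact this shows $\kappa(Q) = 2$ exactly, since taking $\theta = q_j/\sqrt m$ for any column $j$ attains $w_{QB_1^d}(\theta) = 2\max_{j'}|\langle q_{j'}, q_j\rangle|/\sqrt m = 2\sqrt m$ — wait, that is larger than $2$, so the minimum is genuinely $2$ and is attained in a direction where all inner products are balanced; the lower bound $\kappa(Q)\ge 2$ is what we need and what the argument above delivers.
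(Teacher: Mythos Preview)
Your proof is correct and follows essentially the same approach as the paper: both arguments reduce to the identity $\|Q^\top\theta\|_2^2 = \theta^\top QQ^\top\theta = 2^d$ (from the orthogonality of the Hadamard rows), and then convert this into a width lower bound via an $\ell_2$--$\ell_\infty$ (equivalently $\ell_1$--$\ell_2$) norm inequality. Your formulation is slightly more direct in that you first identify $w_{QB_1^{2^d}}(\theta) = 2\|Q^\top\theta\|_\infty$ and then average, whereas the paper picks a specific $x_+$ proportional to $Q^\top\theta$ and applies $\|x_+\|_2 \ge \|x_+\|_1/\sqrt{2^d}$; these are dual phrasings of the same inequality. (Minor typo: your $B_1^d$ should be $B_1^{2^d}$ throughout.)
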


\begin{proof}
        Let $\ell = {d\choose w} $ and let $q_1^\top , q_2^\top\cdots, q_\ell^\top$ be the rows of $Q$. We first note that since $Q$ contains $\ell$ rows of a Hardamard matrix, then each row of $Q$ is orthogonal to each other, and the $\ell_2$ norm of each row $q_i^\top$ is  $2^{d/2}$ where $1\leq i\leq \ell$. We recall that 
\begin{align}\label{eq:2.4}
\kappa(Q):  =  \min_{\theta^\top \theta = 1} \left( \max_{\|x\|_1\leq 1} \theta^\top Q x - \min_{\|x\|_1\leq 1} \theta^\top Q x \right).
\end{align}
First note that for any fixed unit vector $\theta = (\theta_1,\theta_2,\cdots,\theta_\ell)^\top \in \mathbb{R}^\ell$, 
\begin{equation*}
    \begin{aligned}
         \|\theta^\top Q\|_2^2 
         &= (\theta_1 q_1^\top + \cdots +\theta_\ell q_\ell^\top) (\theta_1 q_1 + \cdots +\theta_\ell q_\ell)= \sum_{i=1}^\ell \theta_i^2  q_i^\top q_i = 2^d \sum_{i=1}^\ell \theta_i^2 = 2^d,
    \end{aligned}
\end{equation*}
where the second equality comes from that $q_i^\top q_j = 0$ for any $i\neq j$. Then, we choose $x_+\in \mathbb{R}^{2^d}$ be the vector such that $\|x_+\|_1 = 1$ and $\theta^\top Q = cx_+$ for some scalar $c >0$, and $x_- = -x_+$. Finally, observe that 
\begin{equation*}
    \begin{aligned}
        \theta^\top Q x_+ - \theta^\top Q x_- = 2\theta^\top Q x_+ = 2 \|\theta^\top Q\|_2\cdot \|x_+\|_2\geq 2\frac{\|\theta^\top Q\|_2}{\sqrt{2^d}} = 2,
    \end{aligned}
\end{equation*}
where the inequality comes from the fact that $\|x_+\|_2 \geq \|x_1\|_1/\sqrt{2^d}$. Since for any $\theta$ we can always find such a pair of $x_+$ and $x_-$, then we have $\kappa(Q)\geq 2$ by \cref{eq:2.4}.
\end{proof}

\begin{lemma}
    $\|Q\|_1 = {d \choose w} 2^{d/2}$.
\end{lemma}
\begin{proof}
    As noted in Edmonds et al. \citet{edmonds2020power}, the parity query matrix, $Q$, is the submatrix formed by choosing the appropriate ${d \choose w}$ rows of a $2^d \times 2^d$ unnormalized Hadamard matrix. In other words, $n=2^d$ and $m={d \choose w}$. Since the Hadamard matrix is orthogonal, the rows of $Q$ are linearly independent. Furthermore, there are ${d \choose w}$ singular values, all of which are $2^{d/2}$. Since $\|Q\|_1$ is just the sum of the singular values of $Q$, we have the result. 
\end{proof}

\noindent Setting $n=2^{d}$ and $m={d \choose w}$ gives us the required bound and proof of \Cref{thm:parity}.

\section{Proof of the Upper Bound}
\label{sec:upper_approxDP}
We first state the theorem in its full generality for the ease of the readers. 
\begin{theorem}\label{thm:upper_bound}
  Fix any $0<\epsilon, \delta<1$ and $2\leq p < \infty$. For any query matrix $A\in \mathbb{R}^{m\times n}$ and dataset $x\in \mathbb{R}^n$, there exists a factorization of $A = LR$ and a parameter $\sigma = \sigma(\epsilon,\delta,R)$ such that the mechanism
  $$\mathcal M(x) := L(Rx+z)$$
  where each entry in $z$ is i.i.d sampled from  $\mathcal{N}\left(0, \sigma^2\right)$ preserves $(\epsilon,\delta)$-differential privacy. Moreover, 
  $$\left(\mathbb{E}\left[\|\mathcal M(x) - Ax\|_p^p\right]\right)^{1/p}\leq 3{\gamma_{(p)}(A)}\cdot\sqrt{\frac{\log(1/\delta){\min\{p,\log(2m)\}}}{{2}\epsilon^2}} .$$
\end{theorem}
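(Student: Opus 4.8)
The plan is to instantiate the correlated Gaussian (matrix) mechanism at the factorization $A = LR$ that (nearly) achieves the minimum in the definition of $\gamma_{(p)}(A)$, and to control the $\ell_p^p$ error of the resulting noise by a direct moment computation. First I would fix a factorization $A = LR$ with $L \in \mathbb{R}^{m\times k}$, $R\in \mathbb{R}^{k\times n}$ such that $\sqrt{\mathsf{tr}_{p/2}(LL^\top)}\,\|R\|_{1\to 2} \le (1+o(1))\gamma_{(p)}(A)$; by rescaling $L \mapsto L\|R\|_{1\to 2}$ and $R \mapsto R/\|R\|_{1\to 2}$ we may assume $\|R\|_{1\to 2} = 1$, so that $\sqrt{\mathsf{tr}_{p/2}(LL^\top)} \lesssim \gamma_{(p)}(A)$. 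Since $\|R\|_{1\to 2} = 1$ means every column of $R$ has Euclidean norm at most $1$, the map $x \mapsto Rx$ has $\ell_1$-to-$\ell_2$ sensitivity $1$; hence by the Gaussian mechanism (Lemma \ref{lem:Gaussian_mechanism}) the release $Rx + z$ with $z\sim \mathcal N(0,\sigma^2 \mathbb I_k)$ and $\sigma^2 = \tfrac{9\log(1/\delta)}{2\epsilon^2}$ is $(\epsilon,\delta)$-DP, and post-processing by $L$ preserves this. (I would actually use a slightly sharper constant than $9/2$ to match the stated bound, or simply absorb it; the factor-$3$ in the statement leaves room.)

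The error is $\mathcal M(x) - Ax = L z$, a centered Gaussian vector in $\mathbb{R}^m$ whose $i$-th coordinate $(Lz)_i$ is $\mathcal N(0, \sigma^2 \|L_i\|_2^2)$, where $L_i$ is the $i$-th row of $L$, i.e.\ $\|L_i\|_2^2 = (LL^\top)_{ii}$. Then
\begin{align*}
\mathbb{E}\left[\|Lz\|_p^p\right] = \sum_{i=1}^m \mathbb{E}\left[|(Lz)_i|^p\right] = \sigma^p \, \mu_p \sum_{i=1}^m (LL^\top)_{ii}^{p/2},
\end{align*}
where $\mu_p = \mathbb{E}_{g\sim\mathcal N(0,1)}[|g|^p]$ is the $p$-th absolute moment of a standard Gaussian. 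Recognizing $\sum_i (LL^\top)_{ii}^{p/2} = \big(\mathsf{tr}_{p/2}(LL^\top)\big)^{p/2}$, and taking $p$-th roots, gives
\begin{align*}
\left(\mathbb{E}\left[\|\mathcal M(x) - Ax\|_p^p\right]\right)^{1/p} = \sigma \, \mu_p^{1/p} \sqrt{\mathsf{tr}_{p/2}(LL^\top)} \;\lesssim\; \sigma\,\mu_p^{1/p}\,\gamma_{(p)}(A).
\end{align*}
The remaining point is the estimate $\mu_p^{1/p} = \big(\mathbb{E}|g|^p\big)^{1/p} = O(\sqrt{p})$, which follows from the standard Gaussian moment formula $\mu_p = 2^{p/2}\Gamma(\tfrac{p+1}{2})/\sqrt{\pi}$ together with Stirling; alternatively the subgaussian tail bound gives $\mu_p^{1/p} \le \sqrt{p}$ directly via $\mathbb{E}|g|^p = \int_0^\infty p t^{p-1}\Pr[|g|\ge t]\,dt \le 2\int_0^\infty p t^{p-1} e^{-t^2/2}\,dt$. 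Plugging $\sigma = O(\sqrt{\log(1/\delta)}/\epsilon)$ and $\mu_p^{1/p} = O(\sqrt p)$ yields the bound with the $\sqrt p$ factor.

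To obtain the $\min\{p,\log(2m)\}$ in place of $p$, I would run the \emph{same} argument but only up to exponent $q := \min\{p, \log(2m)\}$ and then use monotonicity of $\ell_p$ norms in finite dimension: since $p \ge q$, $\|Lz\|_p \le \|Lz\|_q$ pointwise, so $\big(\mathbb{E}\|Lz\|_p^p\big)^{1/p} \le \big(\mathbb{E}\|Lz\|_p^q\big)^{1/q} \le \big(\mathbb{E}\|Lz\|_q^q\big)^{1/q}$, where the middle step is Jensen (the map $t\mapsto t^{q/p}$ is concave). Applying the computation above with $q$ replacing $p$, and using that $\gamma_{(q)}$ with this choice of $q$ is comparable to $\gamma_{(p)}$ up to the factor hidden by choosing $D = m^{1/p-1/q}\mathbb I$ in the dual formulation (cf.\ the proof of Lemma \ref{thm:lowerbound}) — more simply, one checks directly that for the \emph{same} factorization $A=LR$, $\sqrt{\mathsf{tr}_{q/2}(LL^\top)} \le m^{1/q-1/p}\sqrt{\mathsf{tr}_{p/2}(LL^\top)}$ by Hölder, and $m^{1/q-1/p} \cdot \mu_q^{1/q} = O(\sqrt{\log(2m)})$ when $q = \log(2m) \le p$ — gives the claimed $\sqrt{\min\{p,\log(2m)\}}$ dependence. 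The main obstacle I anticipate is not the Gaussian-moment calculation, which is routine, but being careful with this last interpolation so that the $m^{1/q-1/p}$ blow-up in switching from $\mathsf{tr}_{p/2}$ to $\mathsf{tr}_{q/2}$ is exactly cancelled by the improvement $\sqrt q$ vs.\ $\sqrt p$, landing on $\sqrt{\log(2m)}$ rather than something larger; getting the universal constant down to $3$ is then just bookkeeping on the Gaussian-mechanism noise scale and $\mu_q^{1/q} \le \sqrt q$.
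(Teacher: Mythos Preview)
Your overall architecture is exactly the paper's: pick the factorization achieving $\gamma_{(p)}(A)$, add Gaussian noise to $Rx$ (whose $\ell_1\to\ell_2$ sensitivity is $\|R\|_{1\to2}$), post-process by $L$, and compute $\big(\mathbb{E}\|Lz\|_p^p\big)^{1/p}=\mu_p^{1/p}\,\sigma\,\sqrt{\mathsf{tr}_{p/2}(LL^\top)}$ with $\mu_p^{1/p}=O(\sqrt p)$. That part is correct and essentially identical to the paper's proof.

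The gap is in your interpolation to replace $\sqrt p$ by $\sqrt{\log(2m)}$. You claim $\big(\mathbb{E}\|Lz\|_p^p\big)^{1/p}\le\big(\mathbb{E}\|Lz\|_p^q\big)^{1/q}$ for $q=\log(2m)\le p$, citing ``Jensen (the map $t\mapsto t^{q/p}$ is concave).'' But Jensen with that concave map gives $\mathbb{E}[X^q]=\mathbb{E}[(X^p)^{q/p}]\le(\mathbb{E}X^p)^{q/p}$, i.e.\ $(\mathbb{E}X^q)^{1/q}\le(\mathbb{E}X^p)^{1/p}$---the \emph{opposite} direction; indeed $p\mapsto(\mathbb{E}X^p)^{1/p}$ is non-decreasing for any nonnegative $X$. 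So the first link of your chain is false in general, and the subsequent H\"older bookkeeping between $\mathsf{tr}_{q/2}$ and $\mathsf{tr}_{p/2}$ never gets started. This is not a sign slip that can be patched locally: your own exact identity shows the error equals $\mu_p^{1/p}\sigma\sqrt{\mathsf{tr}_{p/2}(LL^\top)}$ with $\mu_p^{1/p}=\Theta(\sqrt p)$, so for \emph{this} mechanism and factorization one cannot beat $\sqrt p$ (already at $m=1$ a $\sqrt{\log 2}$ bound would force $\mu_p^{1/p}=O(1)$). The paper does not go through your interpolation route at all; it attributes the $\log(2m)$ alternative in a single line to ``the union bound over all coordinates,'' which your proposal does not attempt.
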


\begin{proof}
 Let $\rho = \frac{\epsilon}{3\sqrt{\log(1/\delta)}}$. Note that the factorization of query matrix $A$ is independent of $x$. Thus, the mechanism $\mathcal{M}(\cdot)$ can be considered as the post-processing of $Rx + z$. The $\ell_2$ sensitivity of $Rx$ can be bounded by 
  $$\|Rx - Rx'\|_2 \leq \max_{\|y\|_1 = 1}\|Ry\|_2 = \|R\|_{1\rightarrow 2},$$
  since $\|x - x'\|_1\leq 1$ if $(x,x')$ is a pair of neighboring datasets. Then, let $\sigma^2 = \frac{\Delta^2}{2\rho^2}$ where $\Delta = \|R\|_{1\rightarrow 2}$, by Lemma \ref{lem:Gaussian_mechanism}, $Rx+z$ preserves $(\epsilon,\delta)$-DP as well as $\mathcal M(x)$. For the utility part, we consider the Gaussian variable $z' = Lz$ and thus $z' \sim \mathcal{N}(0, \sigma^2 LL^\top )$. Then, the $\ell_p^p$ error can be formulated as
  \begin{equation*}
    \begin{aligned}
      \left(\mathbb{E}\left[\|\mathcal M(x) - Ax\|_p^p\right]\right)^{1/p} &= \left(\mathbb{E}\left[\|LRx + Lz - Ax\|_p^P\right]\right)^{1/p} = \left(\mathbb{E}\left[\|Ax + z' - Ax\|_p^P\right]\right)^{1/p}\\
      & = \left(\mathbb{E}\left[\|z'\|_p^p\right]\right)^{1/p} = \left(\sum_{i\in n} \mathbb{E} [|z'_i|^p]\right)^{1/p}\\
      &\leq \sqrt{\min\{p,\log(2m)\}} \left(\sum_{i\in n}  (\mathsf{Var}[z'_i])^{\frac{p}{2}}\right)^{\frac{2}{p}\cdot \frac{1}{2}}\\
      & = \sqrt{\min\{p,\log(2m)\}}\cdot \sigma\sqrt{\mathsf{tr}_{p/2}(L L^\top)}\\
      & = \frac{1}{\sqrt{2}\rho}\sqrt{\min\{p,\log(2m)\}}\sqrt{\mathsf{tr}_{p/2}(L L^\top)} \|R\|_{1\rightarrow 2}.
    \end{aligned}
  \end{equation*}
  Letting $L$ and $R$ be the optimal factorization of $A$ yields the desired result. Here, the inequality comes from the standard bound on the $p$-th moment of the Gaussian variable (Proposition 2.5.2 in Vershynin \cite{vershynin2018high}) and the union bound over all coordinates respectively. 
  s\end{proof}

\section{Missing Proofs from \Cref{sec:add_noise_lb}}
\label{sec:proof_lower_approx_DP}
\noindent Recall that $$\Sigma_\mathcal M(x) = \mathbb{E}[(\mathcal M(x) - \mathbb{E}[\mathcal M(x)])(\mathcal M(x) - \mathbb{E}[\mathcal M(x)])^\top]$$
is the covariance matrix of $\mathcal M(x)$. 

\subsection{Proof of Lemma \ref{lem:reduce_to_trace_norm}}
\label{sec:proof_reduce_to_trace_norm}
The proof follows from the following set of derivation. 
    \begin{equation*}
      \begin{aligned}
        \left(\mathbb{E}\left[\|\mathcal M(x) - Ax\|_p^p\right]\right)^{2/p} &\geq \mathbb{E} \|\mathcal M(x) - Ax\|_p^2 = \mathbb{E}\left[\left(\sum_{i=1}^d (\mathcal M(x)_i - (Ax)_i)^{2\cdot \frac{p}{2}}\right)^{\frac{2}{p}}\right]\\
        &\geq \left(\sum_{i=1}^d \left(\mathbb{E} \left[(\mathcal M(x)_i - (Ax)_i)^2\right]\right)^{\frac{p}{2}}\right)^{\frac{2}{p}}\\
        & = \left(\sum_{i=1}^d \left(\mathbb{E} \left[z_i^2\right]\right)^{\frac{p}{2}}\right)^{\frac{2}{p}} \geq \left(\sum_{i=1}^d \left(\mathbb{E} \left[z_i^2\right] - (\mathbb E{z_i})^2\right)^{\frac{p}{2}}\right)^{\frac{2}{p}}\\
        & = \left(\sum_{i=1}^d \mathsf{Var}[\mathcal M(x)_i]^{\frac{p}{2}}\right)^\frac{2}{p}= \mathsf{tr}_{p/2}(\Sigma_\mathcal M(x)).
      \end{aligned}
    \end{equation*}

\subsection{Proof of Lemma \ref{lem:hatP}}
\label{sec:prooflemhatP}
  Note that $\theta^\top \mathcal M(x)$ preserves $(\epsilon,\delta)$-differential privacy for any $\theta$ if $\mathcal M(\cdot)$ is $(\epsilon,\delta)$-differentially private. Further, $w_{AB_1^n}(\theta) = \max_{v\in AB_1^n} \theta^\top v - \min_{v\in AB_1^n} \theta^\top v$. For any proposition $\mathcal{P}$, we let 
  $$\mathbbm{1}\{\mathcal{P}\} = \begin{cases}
      1 & \text{if }\mathcal{P} \text{ is true} \\
      0 & \text{otherwise}
  \end{cases}.$$ 
  Let $S = S_{P,Q,2\epsilon}$. By the definition of $\widehat{P}$, similar to \citet{nikolov2023general}, we have
  \begin{equation}
    \begin{aligned}
    \begin{split}
      |\mathbb{E}_{X\sim \widehat{P}}[X] - \mathbb{E}_{X\sim Q}[X]| & =\left| \int\limits_{\mathbb{R}}{x\widehat{P}(x)} -  \int\limits_{\mathbb{R}}{xQ(x)}\right|\\
      & =\left| \int\limits_{\mathbb{R}\backslash S} \left(x\widehat{P}(x) - xQ(x)\right) + \int\limits_{S} x\widehat{P}(x) - \int\limits_{S}xQ(x) \right| \\
      & = \left| \frac{Q(S)}{P(S)} \int\limits_S xP(x) - \int\limits_S xQ(x) \right|\\
      & = \left| \frac{Q(S)}{P(S)}\mathbb{E}_{X\sim {P}}[X \mathbbm{1}\{X\in S\}] - \mathbb{E}_{X\sim Q}[X\mathbbm{1}\{X\in S\}]\right|\\
      & \geq \underbrace{\left| \frac{Q(S)}{P(S)}\mathbb{E}_{X\sim {P}}[X ] - \mathbb{E}_{X\sim Q}[X] \right|}_{S_1} \\
      & \qquad - \underbrace{\left|  \frac{Q(S)}{P(S)}\mathbb{E}_{X\sim Q}[X\mathbbm{1}\{X\notin S\}] - \mathbb{E}_{X\sim {Q}}[X \mathbbm{1}\{X\notin S\}] \right|}_{S_2}.\\
    \end{split}
    \end{aligned}
    \label{eq:lowerbound_lem}
  \end{equation}

We now bound the above two terms separately. Recall that $P$ and $Q$ are distributions of $M_{\theta}(x) = \theta^\top (Ax + z)$ and $M_{\theta}(x') = \theta^\top (Ax' + z)$ respectively, then 
  \begin{align}
    S_1 :=\left| \frac{Q(S)}{P(S)}\mathbb{E}_{X\sim {P}}[X ] - \mathbb{E}_{X\sim Q}[X] \right| \geq \left|\theta^\top A\left(\frac{Q(S)}{P(S)}x - x'\right)\right| - |\mathbb{E}[\theta^\top z]|\left|1- \frac{Q(S)}{P(S)}\right|.
    \label{eq:S1}
  \end{align}

 By Lemma \ref{lem:kasiviswanathan2014semantics}, $1-\delta' \leq P(S) \leq 1$ and $1-\delta' \leq Q(S) \leq 1$. Further, if we chose $\epsilon$ and $\delta$ such that $\delta' =  \frac{2\delta}{1-e^{-\epsilon}} <\frac{1}{2}$, then 
\begin{align}
    1-\delta' \leq \frac{Q(S)}{P(S)} \leq \frac{1}{1-\delta'} \leq 1+2\delta'.
    \label{eq:QoverP}
\end{align}
Now we consider the term 
$$f(x,y):=\left|\theta^\top A\left(\frac{Q(S)}{P(S)}x - y\right)\right|.$$ 
We do a case analysis based on the ratio ${Q(S)\over P(S)}$.

\begin{itemize}
    \item \textbf{When ${Q(S) \over P(S)} \geq 1$}, then  
    \[
    AB_1^n \subseteq K_{P,Q} := \left\{A \left(\frac{Q(S)}{P(S)}x - y\right): \|x-y\|_1\right\}.
    \]

    Therefore, there exists a pair of $(x_+,x_+')$ with $(x_+-y_+)\in B_1^n$ such that 
    $$f(x_+,y_+) = \left|\theta^\top A\left(\frac{Q(S)}{P(S)}x_+ - y_+\right)\right|= \frac{w_{AB_1^n}(\theta)}{2}.$$

    \item \textbf{When ${Q(S) \over P(S)} < 1$,} then the set 
    $$K_{P,Q}' = \left\{\frac{P(S)}{Q(S)} \cdot A \left(\frac{Q(S)}{P(S)}x - y\right): \|x-y\|_1\right\}  = \left\{A \left(x -\frac{P(S)}{Q(S)}y\right): \|x-y\|_1\right\}$$
    contains $AB_1^n$. In this case, there also exists a pair of $(x_-,y_-)$ with $(x_--y_-)\in B_1^n$ such that 
    $$\left|\theta^\top A\left(\frac{Q(S)}{P(S)}x_- - y_-\right)\right|= \frac{Q(S)}{P(S)}\frac{w_{AB_1^n}(\theta)}{2} \geq (1-\delta')\frac{w_{AB_1^n}(\theta)}{2}.$$
\end{itemize}

Finally, we have that 
\begin{equation}\label{eq:s1}
    \left| \frac{Q(S)}{P(S)}\mathbb{E}_{X\sim {P}}[X ] - \mathbb{E}_{X\sim Q}[X] \right| \geq (1-\delta')\cdot \frac{w_{AB_1^n(\theta)}}{2} - 2\delta'\mathbb E{\theta^\top z}.
\end{equation}
Next, we try to bound the second term in \cref{eq:lowerbound_lem}:

  \begin{align}
  \label{eq:S2bound}
  \begin{split}
    S_2 &= \left|  \frac{Q(S)}{P(S)}\mathbb{E}_{X\sim P}[X\mathbbm{1}\{X\notin S\}] - \mathbb{E}_{X\sim {Q}}[X \mathbbm{1}\{X\notin S\}] \right|\\
    &\leq\left|\frac{Q(S)}{P(S)}\mathbb{E}_{X\sim P}[(X - \mathbb{E}_P[X])\mathbbm{1}\{X\notin S\}] - \mathbb{E}_{X\sim Q}[(X - \mathbb{E}_Q[X])\mathbbm{1}\{X\notin S\}] \right| \\
   &\qquad +\left| \frac{Q(S)}{P(S)} \mathbb{E}_P[X]\cdot\mathbb{E}_P[\mathbbm{1}\{X\notin S\}] - \mathbb{E}_Q[X]\cdot\mathbb{E}_Q[\mathbbm{1}\{X\notin S\}]\right|\\
   &\leq \underbrace{\left|\frac{Q(S)}{P(S)}\mathbb{E}_{X\sim P}[(X - \mathbb{E}_P[X])\mathbbm{1}\{X\notin S\}] \right|}_{S_{21}} + \underbrace{\left|\mathbb{E}_{X\sim Q}[(X - \mathbb{E}_Q[X])\mathbbm{1}\{X\notin S\}] \right|}_{S_{22}} \\
   & \qquad + \underbrace{\delta'\left| \frac{Q(S)}{P(S)}\mathbb{E}_{X\sim {P}}[X ] - \mathbb{E}_{X\sim Q}[X] \right|}_{S_{23}}.
   \end{split}
  \end{align}
We bound each of these terms separately. 

\paragraph{Bounding $S_{21}$ and $S_{22}$}
Using $\frac{Q(S)}{P(S)}\leq 1+2\delta'$, we have  
\begin{align*}
    S_{21} &= {Q(S) \over P(S)} \left|\mathbb{E}_{X\sim P}[(X - \mathbb{E}_P[X])\mathbbm{1}\{X\notin S\}]\right|  \leq (1+2\delta') \left|\mathbb{E}_{X\sim P}[(X - \mathbb{E}_P[X])\mathbbm{1}\{X\notin S\}]\right| \\
    &\leq (1+2\delta')\sqrt{\mathbb{E}_{X\sim P}[(X-\mathbb{E}_P[X])^2]\mathbb{E}[\mathbbm{1}\{X\notin S\}]}\\
    &\leq (1+2\delta')\sqrt{\delta'\cdot \mathbb{E}_{X\sim P}[(X-\mathbb{E}_P[X])^2]}\leq (1+2\delta')\sqrt{\delta' \mathsf{Var}[\theta^\top \mathcal M(x)]}.
\intertext{Similarly, we see that}
    S_{22} &= \left|\mathbb{E}_{X\sim Q}[(X - \mathbb{E}_Q[X])\mathbbm{1}\{X\notin S\}]\right| \leq \sqrt{\delta' \mathsf{Var}[\theta^\top \mathcal M(x')]}.
\end{align*}

Therefore, 
\begin{align}
\begin{split}
S_{21}+S_{22} &= {Q(S) \over P(S)}\left|\mathbb{E}_{X\sim P}[(X - \mathbb{E}_P[X])\mathbbm{1}\{X\notin S\}]\right|  + \left|\mathbb{E}_{X\sim Q}[(X - \mathbb{E}_Q[X])\mathbbm{1}\{X\notin S\}]\right| \\
    &\leq {Q(S) \over P(S)}\sqrt{\delta' \mathsf{Var}[\theta^\top \mathcal M(x)]} + \sqrt{\delta' \mathsf{Var}[\theta^\top \mathcal M(x')]} \\
    &\leq (1+2\delta') \sqrt{\delta' \mathsf{Var}[\theta^\top \mathcal M(x)]} + \sqrt{\delta' \mathsf{Var}[\theta^\top \mathcal M(x')]},
\end{split}
\label{eq:S21S22}
\end{align}
where the last inequality is due to \cref{eq:QoverP}.

\paragraph{Bounding $S_{23}$}With a similar argument as in $S_1$, we have 
  \begin{align}
    S_{23} &= \delta' \left| \frac{Q(S)}{P(S)}\mathbb{E}_{X\sim {P}}[X ] - \mathbb{E}_{X\sim Q}[X] \right| \leq  \delta'\left|\theta^\top A\left(\frac{Q(S)}{P(S)}x - y\right)\right| + \delta'|\mathbb{E}[\theta^\top z]| \cdot\left|1- \frac{Q(S)}{P(S)}\right| \nonumber \\
    &\leq \delta' \left( \frac{w_{AB_1^n}(\theta)}{2} + 2\delta' \mathbb E{\theta^\top z} \right), \label{eq:S23}
  \end{align}
where the last inequality can be achieved under the same choice of $x$ and $y$ as in \cref{eq:s1}.

Plugging the bound in  \cref{eq:S21S22} and \cref{eq:S23} in to \cref{eq:S2bound}, we get 
\begin{align}
\begin{split}
    S_2 & \leq  S_{21} + S_{22} + S_{23} \\
        &\leq \delta' \left( \frac{w_{AB_1^n}(\theta)}{2} + 2\delta' \mathbb E{\theta^\top z} \right) + (1+2\delta') \sqrt{\delta' \mathsf{Var}[\theta^\top \mathcal M(x)]} + \sqrt{\delta' \mathsf{Var}[\theta^\top \mathcal M(x')]}
\end{split}\label{eq:S2}
\end{align}

Plugging \cref{eq:S1} and \cref{eq:S2} in \cref{eq:lowerbound_lem} and setting $(\epsilon,\delta)$ such that 
$$\delta' \leq \min\{\frac{1}{16}, \frac{\epsilon\cdot \kappa(A)\cdot n^{\frac{p-2}{2p}}}{12\gamma_{(p)}(A)},\epsilon^2\} \leq {1\over 2},$$ for any fix $\theta$, we have that for every $x\in \mathbb{R}^n$, there exists an $x'$ such that $\|x-x'\|_1\leq 1$ and 
\begin{equation*}
  \begin{aligned}
    |\mathbb{E}_{X\sim \widehat{P}}[X] &- \mathbb{E}_{X\sim Q}[X]| \geq {\left| \frac{Q(S)}{P(S)}\mathbb{E}_{X\sim {P}}[X ] - \mathbb{E}_{X\sim Q}[X] \right|} \\
      & \qquad \qquad \qquad \qquad - {\left|  \frac{Q(S)}{P(S)}\mathbb{E}_{X\sim Q}[X\mathbbm{1}\{X\notin S\}] - \mathbb{E}_{X\sim {Q}}[X \mathbbm{1}\{X\notin S\}] \right|} \\
      &\geq (1-2\delta')\cdot \frac{w_{AB_1^n}(\theta)}{2} - 3\delta'\mathbb E{\theta^\top z} - (1+2\delta')\sqrt{\delta' \mathsf{Var}[\theta^\top \mathcal M(x)]} - \sqrt{\delta' \mathsf{Var}[\theta^\top \mathcal M(x')]}\\
    &\geq  \frac{(1-2\delta')\cdot w_{AB_1^n}(\theta)}{2} - \frac{\epsilon\cdot \kappa(A)}{4\gamma_{(p)}(A)} \cdot \frac{\gamma_{(p)}(A)}{\epsilon}    - (2+2\delta')\sqrt{\delta' \mathsf{Var}[\theta^\top \mathcal M(x)]}\\
    &\geq \left(\frac{1}{2} - 2\delta'\right)\cdot \frac{w_{AB_1^n}(\theta)}{2} - \frac{17}{8}\sqrt{\delta' \mathsf{Var}[\theta^\top \mathcal M(x)]} ,
  \end{aligned}
\end{equation*}
which completes the proof of Lemma \ref{lem:hatP}. Here, the second last inequality comes from that $\mathcal M(\cdot)$ adds oblivious noise and thus $\mathsf{Var}[\theta^\top \mathcal M(x)] = \mathsf{Var}[\theta^\top \mathcal M(x')]$.

\section{Missing Proofs in Section \ref{sec:remove_assumptiion}}\label{sec:proof_remove_assumptiion}

\subsection{Proof of \Cref{thm:lower_bound_without_dependency_intro}}

The key step in proving \Cref{thm:lower_bound_without_dependency_intro} is applying Lemma 4.12 in Kasivishwanathan et al. \cite{kasiviswanathan2010price}.

\begin{lemma}[Kasiviswanathan et al. \cite{kasiviswanathan2010price}]\label{lem:krsu411}
  Suppose $X,Y$ are real-valued random variables with statistical difference at most $e^{1/2} -1 + \delta$. Then, for all $a\in \mathbb{R}$, at least one of $\mathbb{E}[X^2]$ or $\mathbb{E}[(Y-a)^2]$ is $\Omega(a^2(1-\delta)^2)$.
\end{lemma}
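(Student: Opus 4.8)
The plan is to argue by contraposition: if \emph{neither} second moment is large, then $X$ is tightly concentrated near $0$ while $Y$ is tightly concentrated near $a$, so the two distributions are nearly disjoint and the statistical difference is forced close to $1$, contradicting the hypothesis. Since $\mathrm{SD}(X,Y)\le 1$ always, the conclusion is only meaningful when $\delta$ is bounded away from $2-e^{1/2}\approx 0.35$ (so that the hypothesis is informative), and that is the regime I would work in; the $\Omega(\cdot)$ will absorb the resulting constants.

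First I would reduce to the case $a>0$: the case $a=0$ is trivial, and replacing $(X,Y)$ by $(-X,-Y)$ leaves $\mathbb{E}[X^{2}]$, $\mathbb{E}[(Y-a)^{2}]$ and the statistical difference unchanged while flipping the sign of $a$. Assume then $a>0$ and suppose, for contradiction, that $\mathbb{E}[X^{2}]<\beta^{2}$ and $\mathbb{E}[(Y-a)^{2}]<\beta^{2}$. Applying Markov's inequality to the nonnegative variables $X^{2}$ and $(Y-a)^{2}$ (no higher-moment control is needed, which is why this step is painless) gives $\Pr[X\ge a/2]\le\Pr[X^{2}\ge a^{2}/4]<4\beta^{2}/a^{2}$, and similarly $\Pr[(Y-a)^{2}\ge a^{2}/4]<4\beta^{2}/a^{2}$, so that $\Pr[Y\ge a/2]\ge\Pr[|Y-a|<a/2]>1-4\beta^{2}/a^{2}$.

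Next I would take the half-line $E=[a/2,\infty)$ as a witnessing event for the statistical difference:
\[
\mathrm{SD}(X,Y)\ \ge\ \Pr[Y\in E]-\Pr[X\in E]\ >\ \Bigl(1-\tfrac{4\beta^{2}}{a^{2}}\Bigr)-\tfrac{4\beta^{2}}{a^{2}}\ =\ 1-\tfrac{8\beta^{2}}{a^{2}}.
\]
Combining this with the hypothesis $\mathrm{SD}(X,Y)\le e^{1/2}-1+\delta$ yields $1-8\beta^{2}/a^{2}<e^{1/2}-1+\delta$, i.e. $\beta^{2}>\tfrac{a^{2}}{8}\bigl(2-e^{1/2}-\delta\bigr)$. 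Taking the contrapositive, $\max\{\mathbb{E}[X^{2}],\mathbb{E}[(Y-a)^{2}]\}\ge\tfrac{a^{2}}{8}\bigl(2-e^{1/2}-\delta\bigr)$, and for $\delta$ bounded away from $2-e^{1/2}$ we have $2-e^{1/2}-\delta\ge c_{0}(1-\delta)^{2}$ for a universal $c_{0}>0$ (since $(1-\delta)^{2}\le 1$), giving $\max\{\mathbb{E}[X^{2}],\mathbb{E}[(Y-a)^{2}]\}=\Omega\bigl(a^{2}(1-\delta)^{2}\bigr)$, as claimed.

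The argument has no serious technical obstacle; the only point that needs care is the constant bookkeeping at the end. The bare conclusion is $\Omega\bigl(a^{2}(2-e^{1/2}-\delta)\bigr)$, and one must observe that this matches the stated $\Omega\bigl(a^{2}(1-\delta)^{2}\bigr)$ only in the intended small-$\delta$ regime (indeed, $X\equiv 0$, $Y\equiv a$ already defeats the conclusion once $\delta\ge 2-e^{1/2}$, where the hypothesis becomes vacuous). I would also note that the threshold $a/2$ is optimal for this two-sided Markov bound, since $c\mapsto 1/c^{2}+1/(a-c)^{2}$ is minimized at $c=a/2$, so one cannot improve the numeric constant by this route.
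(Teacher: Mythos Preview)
The paper does not supply its own proof of this lemma; it is simply quoted from Kasiviswanathan et al.\ and used as a black box in the proof of Lemma~\ref{lem:reprove_krsu}. Your argument---Markov on $X^{2}$ and $(Y-a)^{2}$ to localize each variable, then the half-line $[a/2,\infty)$ as a separating event---is exactly the standard route and is correct as written, including your caveat that the statement is only contentful for $\delta$ below $2-e^{1/2}$.
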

The following lemma introduced $\epsilon$ into the above lower bound.
\begin{lemma}[Dwork and Roth \cite{dwork2014algorithmic}]
  Fix any $0<\epsilon\leq \frac{1}{2}$ and $\delta>0$. Let $A(x):\mathbb{R}^n\rightarrow \mathbb{R}$ be any randomized algorithm. If $A$ is $(\epsilon,\delta)$-differentially private, then $A\left(\frac{1}{2\epsilon}x\right)$ is $(1/2,\frac{e^{1/2}-1}{e^\epsilon -1}\delta)$-DP.
\end{lemma}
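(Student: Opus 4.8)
This is the standard \emph{group-privacy} fact applied to a rescaled input, so the plan is to reduce it to a telescoping argument along a short path between the two scaled datasets. First I would fix a pair of neighboring datasets $x,x'\in\mathbb{R}^n$, i.e.\ $\|x-x'\|_1\le 1$, and write $\tilde A(\cdot):=A(\tfrac{1}{2\epsilon}\,\cdot)$. Then $\tilde A(x)$ and $\tilde A(x')$ are precisely $A$ evaluated at $u:=\tfrac{1}{2\epsilon}x$ and $v:=\tfrac{1}{2\epsilon}x'$, and $\|u-v\|_1=\tfrac{1}{2\epsilon}\|x-x'\|_1\le\tfrac{1}{2\epsilon}$. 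Since $\epsilon\le\tfrac12$ we have $\tfrac{1}{2\epsilon}\ge 1$, so the statement reduces to a group-privacy bound for $A$ on inputs at $\ell_1$-distance at most $\tfrac{1}{2\epsilon}$.

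For the group-privacy step I would set $k:=\lceil \tfrac{1}{2\epsilon}\rceil$ (which equals $\tfrac{1}{2\epsilon}$ when that number is an integer, the case giving the stated constant exactly) and interpolate linearly, $w_j:=\bigl(1-\tfrac{j}{k}\bigr)u+\tfrac{j}{k}v$ for $j=0,1,\dots,k$, so that $w_0=u$, $w_k=v$, and consecutive points satisfy $\|w_j-w_{j+1}\|_1=\tfrac{1}{k}\|u-v\|_1\le\tfrac{1}{2\epsilon k}\le 1$; hence every pair $(w_j,w_{j+1})$ is neighboring. Applying $(\epsilon,\delta)$-differential privacy of $A$ across each of the $k$ edges and composing the inequalities by induction on $j$ yields, for every measurable $S$,
\[
\mathbf{Pr}[A(u)\in S]\ \le\ e^{k\epsilon}\,\mathbf{Pr}[A(v)\in S]\ +\ \Bigl(\sum_{j=0}^{k-1}e^{j\epsilon}\Bigr)\delta\ =\ e^{k\epsilon}\,\mathbf{Pr}[A(v)\in S]\ +\ \frac{e^{k\epsilon}-1}{e^{\epsilon}-1}\,\delta ,
\]
and, by symmetry of the construction, the same bound with $u$ and $v$ interchanged.

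It then remains to read off the parameters: $k\epsilon\le\tfrac12$ in the clean case (and $k\epsilon\le\tfrac12+\epsilon\le 1$ in general, with the $\lceil\cdot\rceil$-rounding absorbed into the universal constants of the application), and since both $t\mapsto\frac{e^{t}-1}{e^{\epsilon}-1}$ and the property of being $(t,\delta'')$-DP are monotone in $t$, the two displayed inequalities show $\tilde A$ is $\bigl(\tfrac12,\ \tfrac{e^{1/2}-1}{e^{\epsilon}-1}\,\delta\bigr)$-differentially private. I do not expect a genuine obstacle — this is the textbook group-privacy lemma of Dwork and Roth~\cite{dwork2014algorithmic} — but the points that need care are that the failure probability accumulates \emph{geometrically} rather than linearly along the path, producing the factor $\frac{e^{k\epsilon}-1}{e^{\epsilon}-1}=1+e^{\epsilon}+\dots+e^{(k-1)\epsilon}$, and that the group size $k$ must be an integer, which is the (cosmetic) source of the clean constant $e^{1/2}$.
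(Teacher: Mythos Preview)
Your proposal is correct and is exactly the standard group-privacy argument from Dwork and Roth~\cite{dwork2014algorithmic}. Note that the paper does not actually provide its own proof of this lemma: it is stated and attributed to the textbook, then used as a black box in the proof of Lemma~\ref{lem:reprove_krsu}. So there is nothing in the paper to compare against, and your telescoping-along-a-path argument is precisely the intended one.

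Your handling of the non-integer case is also honest: when $\tfrac{1}{2\epsilon}$ is not an integer, the chain has $k=\lceil \tfrac{1}{2\epsilon}\rceil$ links and yields $(k\epsilon,\tfrac{e^{k\epsilon}-1}{e^{\epsilon}-1}\delta)$-DP with $k\epsilon\in(\tfrac12,\tfrac12+\epsilon]$, not literally $(\tfrac12,\tfrac{e^{1/2}-1}{e^{\epsilon}-1}\delta)$. This is a cosmetic artifact of the statement, not of your proof; in the paper it is immediately fed into a bound with an unspecified universal constant $C$, so the rounding is harmless there.
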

We first prove the following lemma based on Lemma \ref{lem:krsu411}, which has also been claimed in \cite{edmonds2020power} (Lemma 26) but without a proof.
\begin{lemma}\label{lem:reprove_krsu}
  Let $w\in \mathbb{R}^n$ be any single query and $M(x): = w^\top x + z$ ($z\in \mathbb{R}$) be any data-independent mechanism that is $(\epsilon,\delta)$-differentially private for $0< \epsilon\leq \frac{1}{2}$ and $0\leq\delta\leq 1$, then $(\mathbb{E}[z^2])^{1/2}\geq \frac{1-\delta'}{C\epsilon}\|w\|_{\infty}$ for some universal constant $C$. Here, $\delta' = \frac{e^{1/2} - 1}{e^\epsilon - 1}\delta$.
\end{lemma}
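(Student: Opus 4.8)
The statement is a one-dimensional lower bound: if $M(x) = w^\top x + z$ is $(\epsilon,\delta)$-DP with oblivious noise $z$, then $(\E{z^2})^{1/2} \gtrsim \frac{1-\delta'}{\epsilon}\|w\|_\infty$. The plan is to reduce to Lemma~\ref{lem:krsu411} by exhibiting a pair of neighboring datasets on which the true answers differ by a controlled amount. First I would identify the coordinate $i^\star = \argmax_i |w_i|$, so $|w_{i^\star}| = \|w\|_\infty$. Take $x = \mathbf 0$ and $x' = \mathrm{sign}(w_{i^\star})\, e_{i^\star}$ (the standard basis vector scaled by a sign), so that $\|x - x'\|_1 = 1$ and $w^\top x' - w^\top x = \|w\|_\infty =: a$. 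Since $M$ is oblivious, $M(x) = z$ and $M(x') = a + z$, two real random variables.

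Next I would invoke the rescaling lemma (Dwork--Roth): because $M$ is $(\epsilon,\delta)$-DP with $0 < \epsilon \le 1/2$, the mechanism $x \mapsto M\!\left(\frac{1}{2\epsilon} x\right)$ is $(1/2, \delta')$-DP with $\delta' = \frac{e^{1/2}-1}{e^\epsilon - 1}\delta$. Applying this rescaled mechanism to the same pair $x, x'$ above, the true answers become $0$ and $\frac{a}{2\epsilon}$ respectively (up to absorbing the $\frac{1}{2\epsilon}$ into the argument), so the outputs are $z$ and $\frac{a}{2\epsilon} + z$, and these two distributions are $(1/2,\delta')$-indistinguishable. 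A standard fact is that $(1/2,\delta')$-indistinguishability implies the statistical (total variation) distance between the two output distributions is at most $\frac{e^{1/2}-1}{e^{1/2}} + \delta' \le e^{1/2} - 1 + \delta'$, which is exactly the hypothesis of Lemma~\ref{lem:krsu411} with the random variables $X = z$ and $Y = \frac{a}{2\epsilon} + z$ and the shift parameter chosen so that $Y - (\text{shift}) = z$; more precisely, set $X = z$, $Y' = \frac{a}{2\epsilon} + z$, and apply the lemma with $a_{\mathrm{lem}} = \frac{a}{2\epsilon}$ to $X$ and $Y'$, concluding that at least one of $\E{z^2}$ or $\E{(Y' - \frac{a}{2\epsilon})^2} = \E{z^2}$ is $\Omega\!\big(\frac{a^2}{4\epsilon^2}(1-\delta')^2\big)$. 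Since both quantities equal $\E{z^2}$, we directly get $\E{z^2} = \Omega\!\big(\frac{\|w\|_\infty^2 (1-\delta')^2}{\epsilon^2}\big)$, and taking square roots yields $(\E{z^2})^{1/2} \ge \frac{1-\delta'}{C\epsilon}\|w\|_\infty$ for a universal constant $C$.

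\textbf{Main obstacle.} The routine calculations are all standard; the one point requiring care is the translation between the various notions of closeness. Lemma~\ref{lem:krsu411} is phrased in terms of statistical difference being at most $e^{1/2}-1+\delta$, whereas the privacy hypothesis gives $(1/2,\delta')$-indistinguishability. I would need to verify cleanly that $(1/2,\delta')$-indistinguishability implies total variation distance at most $e^{1/2} - 1 + \delta'$ (using that two distributions which are $(\epsilon_0,\delta_0)$-close have TV distance at most $1 - e^{-\epsilon_0} + \delta_0 \le e^{\epsilon_0} - 1 + \delta_0$), so that the constant $e^{1/2}-1$ in Lemma~\ref{lem:krsu411} matches. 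A second small subtlety is making sure the obliviousness of $z$ is genuinely used: it guarantees that $M(x)$ and $M(x')$ are exact translates of one another by the deterministic amount $w^\top(x'-x)$, which is what lets both conclusions of Lemma~\ref{lem:krsu411} collapse to the single quantity $\E{z^2}$ rather than two a priori different second moments. Once these bookkeeping points are nailed down, the constant $C$ absorbs the factor of $2$ from $\frac{1}{2\epsilon}$ together with whatever constant is hidden in the $\Omega(\cdot)$ of Lemma~\ref{lem:krsu411}.
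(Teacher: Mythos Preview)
Your proposal is correct and follows essentially the same approach as the paper's proof: rescale via the Dwork--Roth group-privacy lemma to get a $(1/2,\delta')$-DP mechanism, bound the total variation distance between its outputs on a pair of neighbors achieving the gap $\|w\|_\infty$, and apply Lemma~\ref{lem:krsu411}, using obliviousness so that both second moments in the lemma collapse to $\E{z^2}$ (the paper additionally multiplies the rescaled mechanism by $2\epsilon$, but this is cosmetic since total variation is scale-invariant).
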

\begin{proof}
  We consider the mechanism $M'(x) = 2\epsilon M(\frac{1}{2\epsilon} x) = w^\top x + 2\epsilon z$. Let $\delta' = \frac{e^{1/2}-1}{e^\epsilon -1}\delta$, then $M'$ is $(\frac{1}{2},\delta')$-differentially private. Fix any pair of neighboring dataset $x$ and $x'$ such that $\|x-x'\|_1\leq 1$. Let $X = M'(x)$ and $Y = M'(x')$ respectively. Then, it is easy to verify that 
  $$d_{TV}(X,Y) = \max_{S\subseteq \mathbb{R}} |\Pr{[X\in S]} - \Pr{[Y\in S]}| \leq e^{1/2} - 1 + \delta'$$
  since $M'$ is $(\frac{1}{2},\delta')$-differentially private and thus $\Pr[X\in S] \leq e^{1/2}\Pr[Y\in S] + \delta'$ for any $S$.

  Next, let $X' = X - w^\top x  = 2\epsilon z$, $Y' = Y - w^\top x = 2\epsilon z + w^\top(x'-x)$ and $a = w^\top(x'-x)$. Then $d_{TV}(X',Y') = e^{1/2} - 1 + \delta'$ and thus by Lemma \ref{lem:krsu411} (Lemma 4.12 in \cite{kasiviswanathan2010price}), we have
  $$\mathbb{E}[ z^2] = \frac{1}{4\epsilon^2}\mathbb{E}[{X'^2}] = \frac{1}{4\epsilon^2}\mathbb{E}[(Y'-a)^2] \geq \frac{(w^\top (x-x'))^2}{C\epsilon^2} (1-\delta')^2$$
  for some universal constant $C$. Finally, choose the pair of neighboring datasets $x$ and $x'$ that maximizes $w^\top (x-x')$ completes the proof.
\end{proof}

Now, we are ready to start the proof of \Cref{thm:lower_bound_without_dependency_intro}.

\begin{proof}
    (Of Theorem \ref{thm:lower_bound_without_dependency_intro}.) This proof can be considered as a complementary version of the proof in Nikolov and Tang \citet{nikolov2023general} and Edmonds et al. \citet{edmonds2020power} since they only focus on unbiased mean estimation or linear queries in $\ell_2^2$ metric. We recall the reader the notation $K \subseteq_{\leftrightarrow} L$ for $K,L \subset \mathbb R^m$. The notations means that there exists a $ v\in \mathbb{R}^m $ such that $ K + v \subseteq L.$ We now restate Lemma \ref{lem:var_cover} and give a proof here:

  \begin{lemma}[Restatement of Lemma \ref{lem:var_cover} in Nikolov and Tang \citep{nikolov2023general}]\label{lem:var_cover_restate}
    Let $\mathcal{M}:\mathbb{R}^{n}\rightarrow \mathbb{R}^{m}$ be any randomized mechanism and $A\in \mathbb{R}^{m\times n}$ be any matrix. If there exists some universal constant $C$ such that for any input $x\in \mathbb{R}^n$ and any $\theta\in \mathbb{R}^m$, it satisfies \begin{equation}\label{eq:krsu}
        \begin{aligned}
             {\mathsf{Var}[\theta^\top \mathcal M(x)]} \geq \left(\frac{w_\theta(AB_1^n) }{C} \right)^2,
        \end{aligned}
    \end{equation}
    then $AB_1^n \subseteq_{\leftrightarrow} C\sqrt{\Sigma_{\mathcal{M}}(x)}B_2^m$.
\end{lemma}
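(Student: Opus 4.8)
The plan is to recast the set containment as a pointwise inequality between support functions and then read that inequality off from the variance hypothesis. Fix the input $x$. Recall that for a compact convex set $K\subseteq\mathbb{R}^m$ the support function is $h_K(\theta):=\max_{v\in K}\theta^\top v$; that $K_1\subseteq K_2$ (for closed convex $K_2$) if and only if $h_{K_1}(\theta)\le h_{K_2}(\theta)$ for every $\theta$, by the separating hyperplane theorem; and that for a centrally symmetric body $K=-K$ one has $w_K(\theta)=2h_K(\theta)$. Both bodies here are centrally symmetric: $AB_1^n=-AB_1^n$ because $B_1^n$ is symmetric, and $C\sqrt{\Sigma_{\mathcal M}(x)}B_2^m$ is an origin-centered ellipsoid. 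Hence it suffices to show $w_{AB_1^n}(\theta)\le w_{C\sqrt{\Sigma_{\mathcal M}(x)}B_2^m}(\theta)$ for all $\theta$, and this will in fact give $AB_1^n\subseteq C\sqrt{\Sigma_{\mathcal M}(x)}B_2^m$ outright, so that $v=0$ witnesses $\subseteq_{\leftrightarrow}$.

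Second, I would compute the width of the ellipsoid. For $\theta\in\mathbb{R}^m$ we have $h_{C\sqrt{\Sigma_{\mathcal M}(x)}B_2^m}(\theta)=C\max_{\|u\|_2\le 1}\theta^\top\sqrt{\Sigma_{\mathcal M}(x)}u=C\|\sqrt{\Sigma_{\mathcal M}(x)}\theta\|_2=C\sqrt{\theta^\top\Sigma_{\mathcal M}(x)\theta}$, using that the PSD square root is symmetric. The place where generality beyond the unbiased setting of Nikolov and Tang enters is the identity $\theta^\top\Sigma_{\mathcal M}(x)\theta=\mathsf{Var}[\theta^\top\mathcal M(x)]$: since $\Sigma_{\mathcal M}(x)=\mathbb{E}[(\mathcal M(x)-\mathbb{E}\mathcal M(x))(\mathcal M(x)-\mathbb{E}\mathcal M(x))^\top]$ is taken about the true mean (not about $Ax$), this holds regardless of any bias of $\mathcal M$. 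Therefore $w_{C\sqrt{\Sigma_{\mathcal M}(x)}B_2^m}(\theta)=2C\sqrt{\mathsf{Var}[\theta^\top\mathcal M(x)]}$.

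Third, I would plug in the hypothesis. Taking square roots in $\mathsf{Var}[\theta^\top\mathcal M(x)]\ge (w_{AB_1^n}(\theta)/C)^2$ gives $\sqrt{\mathsf{Var}[\theta^\top\mathcal M(x)]}\ge w_{AB_1^n}(\theta)/C$, and multiplying by $2C$ yields $w_{C\sqrt{\Sigma_{\mathcal M}(x)}B_2^m}(\theta)\ge 2\,w_{AB_1^n}(\theta)\ge w_{AB_1^n}(\theta)$ for every $\theta$. By the support-function criterion (both bodies symmetric), this is equivalent to $h_{AB_1^n}\le h_{C\sqrt{\Sigma_{\mathcal M}(x)}B_2^m}$ pointwise, hence $AB_1^n\subseteq C\sqrt{\Sigma_{\mathcal M}(x)}B_2^m$, and in particular $AB_1^n\subseteq_{\leftrightarrow}C\sqrt{\Sigma_{\mathcal M}(x)}B_2^m$, which is the claim.

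As for obstacles: the argument is routine convex geometry once one passes to widths, so there is no serious difficulty. The only points requiring care are (i) that $\mathcal M(x)$ should have finite second moments, so that $\Sigma_{\mathcal M}(x)$, its square root, and the relevant support functions are finite — otherwise $C\sqrt{\Sigma_{\mathcal M}(x)}B_2^m$ should be read as all of $\mathbb{R}^m$ in the degenerate directions and there is nothing to prove; and (ii) correctly noticing that the bias never enters, because only the variances of one-dimensional marginals and the covariance matrix about the mean appear in the computation — this is precisely what lets us drop the unbiasedness assumption present in the original statement of Nikolov and Tang.
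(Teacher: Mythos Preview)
Your proof is correct and follows essentially the same support-function comparison as the paper's proof: both identify $\sqrt{\mathsf{Var}[\theta^\top\mathcal M(x)]}=\sqrt{\theta^\top\Sigma_{\mathcal M}(x)\theta}$ with the support function of the ellipsoid and then bound it below using the width hypothesis. The only cosmetic difference is that you exploit the central symmetry of $AB_1^n$ to take the translate $v=0$ (gaining a harmless factor of $2$), whereas the paper instead fixes an arbitrary $v\in AB_1^n$, works with $K_v=AB_1^n-v$, and uses $0\in K_v$ to obtain $h_{K_v}(\theta)\le w_\theta(AB_1^n)$ before invoking the same containment criterion.
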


\begin{proof}
Recall that $$\Sigma_\mathcal M(x) = \mathbb{E}[(\mathcal M(x) - \mathbb{E}[\mathcal M(x)])(\mathcal M(x) - \mathbb{E}[\mathcal M(x)])^\top]$$
is the covariance matrix of $\mathcal M(x)$. Therefore, ${\mathsf{Var}[\theta^\top \mathcal M(x)]}$ can be written as 
    \begin{equation*}
      \begin{aligned}
        \sqrt{\mathsf{Var}[\theta^\top \mathcal M(x)]} = \sqrt{\theta^\top \Sigma_\mathcal M(x) \theta}.
      \end{aligned}
    \end{equation*}

Note that $\left\|\sqrt{\Sigma_\mathcal M(x)}\theta\right\|_2 = \left\|\sqrt{\Sigma_\mathcal M(x)}\theta\right\|_2 \cdot \|u\|_2$ for any $u\in B_2^m$. Therefore, by Cauchy-Schwarz inequality, we have that 
    $$\left\|\sqrt{\Sigma_\mathcal M(x)}\theta\right\|_2 \geq \max_{u\in B_2^m} \theta^\top \sqrt{\Sigma_\mathcal M(x)} u = \max_{v\in E} \theta^\top v$$
    for $E = \sqrt{\Sigma_\mathcal M(x)} B_2^m$. In the above, the equality can be achieved if $u = \theta/\|\theta\|_2$. 
     
     Now, for any $v\in AB_1^n$, let $K_v=\{u-v: u\in AB_1^n\}$ be a convex body.
Since $v\in AB_1^n$, for any $\theta\in \mathbb{R}^m$,
    $$\max_{u\in K_v} \theta^\top u  = \max_{w\in AB_1^n} \{\theta^\top w \} - \theta^\top v \geq 0.$$
    This implies the following set of inequalities:
    \begin{equation*}
      \begin{aligned}
        \max_{u\in K_v} \theta^\top u &\leq \max_{u\in K_v} \theta^\top u + \max_{u\in K_v} -\theta^\top u =  \max_{u\in K_v} \theta^\top u - \min_{u\in K_v} \theta^\top u  = w_\theta(K_v) = w_\theta(AB_1^n).
      \end{aligned}
    \end{equation*}
    Finally, the assumption in Lemma \ref{lem:var_cover_restate} is equivalent to the following: for any $\theta\in \mathbb{R}^m$, 
    $$c \max_{w\in E} \theta^\top w \geq \max_{u\in K_v} \theta^\top u.$$
    
    Since both $E = \sqrt{\Sigma_\mathcal M(x)} B_2^m$ and $K_v = AB_1^n - v$ (for some $v\in AB_1^n$) contain zero vector,  we have $K_v \subseteq c\sqrt{\Sigma_\mathcal M(x)}B_2^m$. This completes the proof of Lemma \ref{lem:var_cover_restate} (and Lemma \ref{lem:var_cover}).
\end{proof}   

  In the view of Lemma \ref{lem:var_cover_restate}, we next show that for any direction $\theta\in \mathbb{R}^m$, $\sqrt{\mathsf{Var}[\theta^\top \mathcal M(x)]} \gtrsim \frac{1}{\epsilon}w_{AB_1^n}(\theta)$ as long as $\mathcal M(x)$ is $(\epsilon,\delta)$-differentially private.

\begin{lemma}\label{lem:cov_cover_the_body_new}
    Fix any $0<\epsilon<\frac{1}{2}$, $0\leq \delta<1$ and a query matrix $A\in \mathbb{R}^{m\times n}$. Let $C>0$ be some universal constant. If $\mathcal M(\cdot)$ is an additive noise mechanism such that $\mathcal M(x) = Ax+z$ for any dataset $x\in \mathbb{R}^n$ and $\mathcal M(\cdot)$ preserves $(\epsilon,\delta)$-differential privacy, then for any $x \in \mathbb R^n$ and any direction $\theta\in \mathbb{R}^m$, let $\delta' = \frac{e^{1/2} - 1}{e^\epsilon - 1}\delta$, we have
 $$\sqrt{\mathsf{Var}[\theta^\top \mathcal M(x)]} \geq \frac{1-\delta'}{C\epsilon}w_{AB_1^n}(\theta).$$
  \end{lemma}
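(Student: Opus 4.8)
The plan is to reduce the statement to the single–query bound already recorded in Lemma \ref{lem:reprove_krsu}. The first step is to \emph{remove the bias}: given the additive noise mechanism $\mathcal M(x)=Ax+z$, consider $\bar{\mathcal M}(x):=Ax+(z-\mathbb{E}[z])=\mathcal M(x)-\mathbb{E}[z]$. Subtracting the fixed vector $\mathbb{E}[z]$ is a data–independent post–processing, so $\bar{\mathcal M}$ is still $(\epsilon,\delta)$–differentially private; it is an additive noise mechanism whose noise $\bar z:=z-\mathbb{E}[z]$ has zero mean; and, because variance is invariant under translation, $\mathsf{Var}[\theta^\top\bar{\mathcal M}(x)]=\mathsf{Var}[\theta^\top\mathcal M(x)]$ for every $\theta$ and $x$. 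Hence it suffices to prove the claim for $\bar{\mathcal M}$, i.e.\ we may assume without loss of generality that the noise is unbiased.

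Next I would project onto the direction $\theta$. Fix $\theta\in\mathbb{R}^m$ and set $w:=A^\top\theta\in\mathbb{R}^n$ and $z_\theta:=\theta^\top\bar z\in\mathbb{R}$. Then $M_\theta(x):=\theta^\top\bar{\mathcal M}(x)=w^\top x+z_\theta$ is a single–query, data–independent mechanism, and it is $(\epsilon,\delta)$–DP as a deterministic post–processing (the map $v\mapsto\theta^\top v$) of $\bar{\mathcal M}$. Applying Lemma \ref{lem:reprove_krsu} to $M_\theta$ gives $(\mathbb{E}[z_\theta^2])^{1/2}\geq\frac{1-\delta'}{C\epsilon}\|w\|_\infty$ with $\delta'=\frac{e^{1/2}-1}{e^\epsilon-1}\delta$. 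Because $\bar z$ has mean zero, $\mathbb{E}[z_\theta^2]=\mathbb{E}[(\theta^\top\bar z)^2]=\mathsf{Var}[\theta^\top\bar z]=\mathsf{Var}[\theta^\top\mathcal M(x)]$, the last equality since $\theta^\top Ax$ is a constant.

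Finally I would translate $\|w\|_\infty$ back into the width functional. Since $B_1^n$ is symmetric and its support function is the $\ell_\infty$ norm, $w_{AB_1^n}(\theta)=\max_{\|y\|_1\leq1}(A^\top\theta)^\top y-\min_{\|y\|_1\leq1}(A^\top\theta)^\top y=2\|A^\top\theta\|_\infty$, so $\|w\|_\infty=\tfrac12\,w_{AB_1^n}(\theta)$. Combining the last two estimates yields $\sqrt{\mathsf{Var}[\theta^\top\mathcal M(x)]}\geq\frac{1-\delta'}{2C\epsilon}\,w_{AB_1^n}(\theta)$, which is the claim after renaming $2C$ as $C$. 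The only delicate point is the first step: Lemma \ref{lem:reprove_krsu} controls the \emph{second moment} of the noise, not its variance, and the two differ for a biased mechanism; the reduction to the unbiased case bridges this gap, and it is valid precisely because the noise is oblivious, so recentering is a data–independent operation that preserves privacy. Everything else is bookkeeping with the duality between $\ell_1$ and $\ell_\infty$, and this variance bound is exactly what is needed to invoke Lemma \ref{lem:var_cover_restate} in the proof of \Cref{thm:lower_bound_without_dependency_intro}.
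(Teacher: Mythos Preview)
Your proof is correct and follows essentially the same route as the paper: project onto the direction $\theta$ to get a one-dimensional additive-noise mechanism and invoke Lemma~\ref{lem:reprove_krsu}. Two points are worth noting where your argument is in fact cleaner than the paper's. First, the paper writes $\mathbb{E}[(\theta^\top z)^2]=\mathsf{Var}[\theta^\top\mathcal M(x)]$ directly, which is only literally true when $\mathbb{E}[\theta^\top z]=0$; your recentering step $\bar{\mathcal M}(x)=\mathcal M(x)-\mathbb{E}[z]$ makes this identification rigorous while preserving both privacy (data-independent post-processing) and the variance. Second, to relate $\|A^\top\theta\|_\infty$ to the width, the paper constructs an explicit map $f:AB_1^n\to B_1^n$ and neighboring pairs $x_\pm'$ achieving $|\theta^\top A(x-x')|\geq w_{AB_1^n}(\theta)/2$, whereas you use the one-line $\ell_1/\ell_\infty$ duality $w_{AB_1^n}(\theta)=2\|A^\top\theta\|_\infty$, which is equivalent but more direct. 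The final bound and the constant loss of a factor of $2$ are the same in both arguments.
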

  
  \begin{proof}
      (Of Lemma \ref{lem:cov_cover_the_body_new}.) Given any vector $\theta\in \mathbb{R}^m$, recall that the width of a convex body $K\subseteq \mathbb{R}^m$ with respect to $\theta$ is defined as 
    $$w_K(\theta) : = \max_{x\in K} \theta^\top x - \min_{x\in K}\theta^\top x$$ 
    in Definition \ref{def:width}. In this Lemma, we aim to show that if $\mathcal M(x)$ preserves $(\epsilon,\delta)$-differential privacy, then the variance of the one-dimensional marginal of $\mathcal M(x)$ cannot be very small in terms of $w_{AB_1^n}(\theta)$. Unlike Nikolov and Tang \citet{nikolov2023general}, since we focus on the additive noise mechanisms, we consider Lemma~\ref{lem:reprove_krsu} relating to the lower bound for such mechanisms. 
    
    Fix any $\theta\in \mathbb{R}^d$. We remark that we are trying the give the lower bound of the variance of one-dimensional marginal $\theta^\top \mathcal M(x)$, and 
    $\theta^\top \mathcal M(x) = \theta^\top A x + \theta^\top z.$
    By the post-processing property, $\theta^\top \mathcal M(x)$ also preserves $(\epsilon,0)$-differential privacy since $\mathcal M(x)$ is $\epsilon$-differentially private. Thus, by Lemma \ref{lem:reprove_krsu},
    \begin{equation}
      \begin{aligned}
      \label{eq:variance_thetaA}
        \mathbb{E}\left[(\theta^\top z)^2\right] = {\mathsf{Var}[\theta^\top \mathcal M(x)]} \gtrsim \frac{(1-\delta')^2}{\epsilon^2} \|A \theta\|_\infty^2 \geq \frac{(1-\delta')^2}{\epsilon^2} \max_{\|x-x'\|_1\leq 1} |\theta^\top A(x-x')|^2.
      \end{aligned}
    \end{equation} 

    {Fix any $\theta\in \mathbb{R}^m$. We then show that for any $x$, there always exist a neighboring dataset $x'$ such that $|\theta^\top A(x-x')|$ can be lower bounded by $w_{AB_1^n}(\theta)/2$. This gives a lower bound of $ \mathbb{E}\left[(\theta^\top z)^2\right]$. The construction closely follows Nikolov and Tang \cite{nikolov2023general} and we state the construction here for completeness.}
    
    Consider a mapping $f:\mathbb{R}^m\rightarrow \mathbb{R}^n$ from $AB_1^n$ to $B_1^n$ such that for any $v\in AB_1^n$, $v = Af(v)$. Given any $\theta \in \mathbb{R}^m$, let $w$ be the vector in $AB_1^n$ that maximizes $\theta^\top w$. Then, we can choose $x'_+$ such that $(x,x'_+)$ is a pair of neighboring datasets such that $x-x_+' = f(w)$. In this case, 
    $$\theta^\top A(x-x'_+) = \theta^\top Af(w) = \theta^\top w = \max_{v\in AB_1^n} \theta^\top v.$$
    Similarly, for any $x$ we can also find another pair of neighboring datasets $x$ and $x_-'$ such that 
    $$-\theta^\top A(x - x_-') = \max_{v\in AB_1^n} -\theta^\top v = \min_{v\in AB_1^n} \theta^\top v.$$ 
    Thus, we have
    \begin{equation*}
      \begin{aligned}
        |\theta^\top A(x-x'_+)| + |\theta^\top A(x - x_-')| &= \left|\max_{v\in AB_1^n} \theta^\top v\right| + \left|\min_{v\in AB_1^n} \theta^\top v\right|\\
        &\geq \max_{v\in AB_1^n} \theta^\top v - \min_{v\in AB_1^n} \theta^\top v\geq w_{AB_1^n}(\theta).
      \end{aligned}
    \end{equation*}
    
    In particular, this implies that, for any $\theta\in \mathbb{R}^m$ and any $x\in \mathbb{R}^n$, there exists an $x'\in \mathbb{R}^n$ neighboring to $x$ such that 
    \begin{align}
        |\theta^\top A(x - x')| \geq \frac{w_{AB_1^n}(\theta)}{2}, \quad \text{where} \quad w_K(\theta):= \max_{v\in K} v^\top \theta - \min_{v\in K} v^\top \theta.
        \label{eq:thetaA_width}
    \end{align}
    Combining \cref{eq:variance_thetaA} and \cref{eq:thetaA_width}, we get
    \begin{align*}
       {\mathsf{Var}[\theta^\top \mathcal M(x)]} &\gtrsim \frac{(1-\delta')^2}{\epsilon^2} \max_{\|x-x'\|_1\leq 1} |\theta^\top A(x-x')|^2 \\
       &\geq \frac{(1-\delta')^2}{\epsilon^2} |\theta^\top A(x-x')|^2 \geq  (1-\delta')^2\left(\frac{w_{AB_1^n}(\theta)}{\epsilon} \right)^2
    \end{align*}
    for any $\theta\in \mathbb{R}^m$. 

  \end{proof}

  We now proceed to complete the proof of \Cref{thm:lower_bound_without_dependency_intro}. As a consequence of Lemma \ref{lem:var_cover_restate} and Lemma \ref{lem:cov_cover_the_body_new}, by setting $W = C\epsilon\sqrt{\Sigma_\mathcal M(x)}$, we see that for any $(\epsilon,\delta)$-differentially private algorithm $\mathcal{M}$ and $p\geq 2$, there are
  \begin{equation*}
      \begin{aligned}
          C\epsilon\sqrt{\mathsf{tr}_{p/2}(\Sigma_\mathcal M(x))} &\geq \inf_{W\in\mathbb{R}^{m\times m}} \left\{\sqrt{\mathsf{tr}_{p/2}(WW^\top)}: AB_{1}^n \subseteq_{\leftrightarrow} WB_2^m\right\} \\
          &= \Lambda_p(A).
      \end{aligned}
  \end{equation*}
  Then, by Lemma \ref{lem:generalized_trace_norm} and Lemma \ref{lem:reduce_to_trace_norm}, we have 
  \begin{equation*}
    \begin{aligned}      \left(\mathbb{E}\left[\|\mathcal M(x) - Ax\|_p^p\right]\right)^{1/p} &\geq \frac{1-\delta'}{C\epsilon}\cdot C\epsilon\sqrt{\mathsf{tr}_{p/2}(\Sigma_{M}(X))} \\
    &\geq\frac{(1-\delta') \Lambda_p(A)}{C\epsilon} \geq \frac{(1-\delta')\gamma_{(p)}(A)}{C\epsilon},
    \end{aligned}
  \end{equation*}
  which completes the proof of Theorem \ref{thm:lower_bound_without_dependency_intro}.  
  
\end{proof}

\subsection{Proof of Lemma \ref{lem:kappa}}
\label{sec:prooflemkappa}
    Suppose $A$ has linearly independent rows, then for any non-zero $\widehat \theta\in \mathbb{R}^m$, $\widehat \theta^\top A$ must have at least one non-zero elements since $\widehat \theta^\top A$ would be linear combinations of rows of $A$. Let the non-zero element be $(\widehat \theta^\top A)_i>0$. Then 
$$\kappa(A) = \min_{\theta^\top \theta = 1}\left( \max_{x\in B_1^n} \theta^\top A x - \min_{x\in B_1^n} \theta^\top A x\right) \geq 2 (\widehat \theta^\top A)_i >0.$$
On the other hand, if $A$ has linearly dependent rows, then there will be a $\widehat \theta\in \mathbb{R}^n$ such that $\widehat \theta^\top A = 0$, and thus $\kappa(A)$. Intuitively, such $A$ maps $B_1^n$ to a lower dimension polytope.

\section{Missing Proofs from \Cref{sec:prefix_sum_proof}}
\label{sec:proof_prefix_sum_proof}

\subsection{Proof of Lemma \ref{lem:width_of_count}}
\label{sec:proofkappaprefixsum}

\begin{proof}

We first show that $\kappa(A_{\mathsf{prefix}}) \geq 2$. Fix any unit vector $\theta = (\theta_1,\cdots \theta_n)^\top \in \mathbb{R}^n$, we have 
  \begin{equation}\label{eq:suffix_sum}
    \max_{\|x\|_1\leq 1} \theta^\top A_{\mathsf{prefix}} x = |\theta_1| + 2|\theta_2| + \cdots + n|\theta_n| = \sum_{i=1}^n i|\theta_i|.
  \end{equation}

We show the minimum value of \cref{eq:suffix_sum} by induction. We claim that for any $1\leq i\leq n$, conditioned on $\sum_{j=1}^i \theta_j^2 = a$ for any $0<a\leq 1$, the minimum value of $\sum_{j=1}^i j |\theta_j|$ is at least $\sqrt{c}$. Now, consider the new condition $\sum_{j = 1}^{i+1} \theta_j^2 = c$ for some $0< c \leq 1$. Let $a = c - \theta_{i+1}^2$, according to the assumption, we have that 
$$\sum_{j=1}^{i} j |\theta_j| + (i+1)|\theta_{i+1}|\geq \sqrt{c-\theta_{i+1}^2} + (i+1)|\theta_{i+1}|.$$
Consider the function $f(y) = \sqrt{c - y^2} + (i+1)y$ for $0<c\leq 1$ and $0\leq y\leq \sqrt{c}$. We have $\frac{df}{dy} = \frac{-y}{\sqrt{c-y^2}} + i + 1$. Clearly for any $i\geq 1$, there exists a $0< c_0< \sqrt{c}$ such that $f(y)$ monotonically increasing in $(0,c_0)$ and monotonically decreasing in $(c_0,\sqrt{c})$. Thus, $f(y)\geq \min\{f(0), f(\sqrt{c})\} \geq \sqrt{c}$. That is,
$$\sum_{j=1}^{i} j |\theta_j| + (i+1)|\theta_{i+1}|\geq  \sqrt{c}.$$
Since $c$ can be any value in $(0,1]$ and $|\theta_1| = 1$ if $\theta_1^2 = 1$, by induction, for any unit vector $\theta\in \mathbb{R}^n$,
$$\max_{\|x\|_1\leq 1} \theta^\top A_{\mathsf{prefix}} x =  \sum_{i=1}^n i|\theta_i| \geq 1.$$
With a symmetric argument, we have that for the same vector $\theta$,
$$\min_{\|x\|_1\leq 1} \theta^\top A_{\mathsf{prefix}} x =  -\sum_{i=1}^n i|\theta_i| \leq - 1,$$
Which implies that $\kappa(A_{\mathsf{prefix}})\geq 2$. On the other hand, Let $\widehat \theta = \mathbf{e}_1 = (1,0,\cdots,0)^\top$, then it is easy to see that $w_{AB_1^n}(\widehat \theta) = 2$. Thus, $\kappa(A_{\mathsf{prefix}}) = 2.$
\end{proof}

\noindent Figure \ref{fig:width} also gives a geometric explanation of the most ``narrow'' width of $AB_1^n$ when $n = 2$. In the following diagram, $x^+ = (1,1)^\top \in A_{\mathsf{prefix}}B_1^2$ and $x_- = (-1,-1)^\top \in A_{\mathsf{prefix}}B_1^2$.

\begin{figure}[t] 
  \centering 
\includegraphics[width=0.9\textwidth]{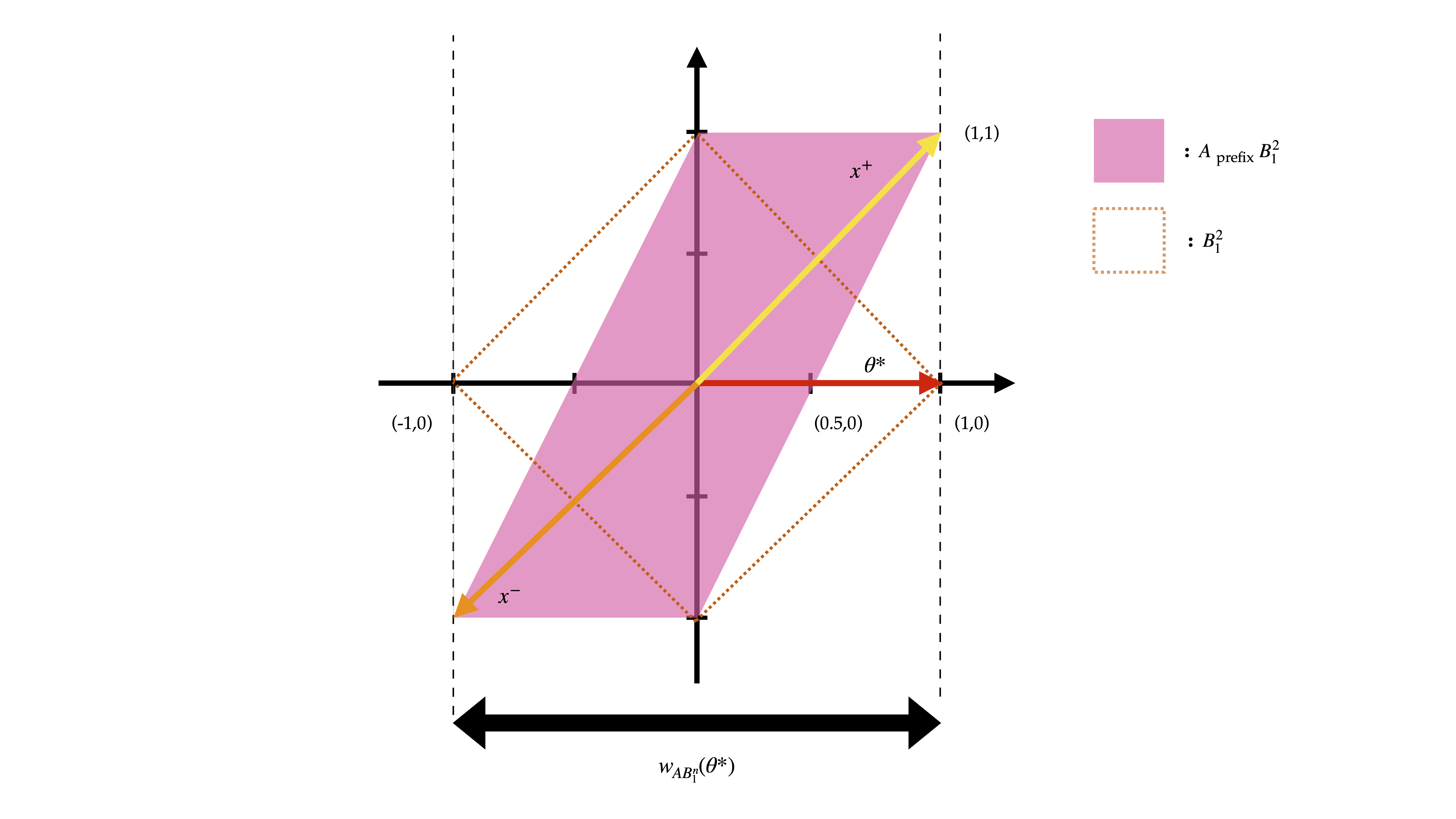} 
  \vspace{-4mm}
  \caption{A geometric intuition of Lemma \ref{lem:width_of_count}.} 
  \label{fig:width} 
  \end{figure}

\subsection{Proof of \Cref{thm:ub_on_countinul_counting}}\label{sec:proof_ub_prefixsum}

We show two factorizations of the counting matrix $A_{\mathsf{prefix}}$ that works across all $p$-norm for constant $p$. If we use the factorization of $A_{\mathsf{prefix}}$ as in Fichtenberger et al. \citet{henzinger2022constant}. To recall their result, they construct matrices $L$ and $R$ such that $LR = A_{\mathsf{prefix}}$ and 
\[L[i,j]=R[i,j] = \begin{cases}
    f(i-j) & i\geq j \\
    0 & i <j
\end{cases}, \quad \text{where} \quad f(k) = \begin{cases}
    \left(1 - {1 \over 2k} \right) f(k-1) & k \geq 1 \\
    1 & k =0
\end{cases}\]

By noting that $f(k)$ is the Wallis' formula, we know that 
$
f(k) \leq {1 \over \sqrt{\pi k}}.
$  
This implies that 
\begin{align*}
    \| R\|_{1 \to 2}^2 &= \sum_{i=1}^T R[i,1]^2 = \sum_{i=1}^n f(i-1)^2 \leq 1 + \sum_{i=2}^n{1 \over \pi (i-1)}
    =O({\log(n)})
\end{align*} 
Similarly, 
\begin{align*}
    \sqrt{\mathsf{tr}_{p/2}(LL^\top)} &= \left(\sum_{i=1}^n (\|L[i,:]\|^2)^{p/2} \right)^{1/p} = O \left(\left(\sum_{i=1}^n \log^{p/2}(i) \right)^{1/p} \right) = O(n^{1/p} \sqrt{\log(n)}).
\end{align*}
That is, $\gamma_{(p)}(A_{\mathsf{prefix}}) = O(n^{1/p}\log(n))$. Then, \Cref{thm:ub_on_countinul_counting} follows using \Cref{thm:upper_bound}.

\end{document}